\newtheorem{theorem}{Theorem}
\newtheorem{lemma}[theorem]{Lemma}
\newtheorem{definition}[theorem]{Definition}
\newtheorem{corollary}[theorem]{Corollary}
\newtheorem{claim}[theorem]{Claim}
\newtheorem{fact}[theorem]{Fact}
\begin{document}

\title{Shortcutting for Negative-Weight Shortest Paths}


\author[1]{George Z. Li}
\author[1]{Jason Li}
\author[2]{Satish Rao}
\author[3]{Junkai Zhang}
\affil[1]{Carnegie Mellon University $\{\texttt{gzli}, \texttt{jmli}\}$\texttt{@cs.cmu.edu}}
\affil[2]{UC Berkeley \texttt{satishr@berkeley.edu}}
\affil[3]{Tsinghua University \texttt{zhangjk22@mails.tsinghua.edu.cn}}

\date{}
\maketitle

\begin{abstract}
Consider the single-source shortest paths problem on a directed graph with real-valued edge weights. We solve this problem in $O(n^{2.5}\log^{4.5}n)$ time, improving on prior work of Fineman (STOC 2024) and Huang-Jin-Quanrud (SODA 2025, 2026) on dense graphs. Our main technique is an shortcutting procedure that iteratively reduces the number of negative-weight edges along shortest paths by a constant factor.
\end{abstract}

\section{Introduction}

We consider the problem of computing single-source shortest paths on a directed, edge-weighted graph. While this problem is among the oldest in theoretical computer science, the past few years have witnessed exciting breakthroughs in multiple different settings. When all edge weights are non-negative, the classic Dijkstra's algorithm solves the problem in $O(m+n\log n)$ time, which has been improved to $O(m\sqrt{\log n\log\log n})$ on undirected graphs~\cite{duan2023randomized} and $O(m\log^{2/3}n)$ on directed graphs~\cite{duan2025breaking}. When all edge weights are integral (and possibly negative), the problem was solved in $\tilde O(m)$ time\footnote{$\tilde O(\cdot)$ ignores factors poly-logarithmic in $n$.}~\cite{bernstein2025negative} and $m^{1+o(1)}$ time~\cite{chen2025maximum} by two very different approaches, with the latter solving the more general problem of minimum-cost flow. Finally, in the general case of real-valued edge weights, the $O(mn)$ time Bellman-Ford algorithm remained the fastest for 70~years until the recent breakthrough of Fineman~\cite{fineman2024single}, who obtained an $\tilde O(mn^{8/9})$ time algorithm. The running time has been improved further to $\tilde O(mn^{4/5})$~\cite{huang2025faster} and then $\tilde O(mn^{3/4}+m^{4/5}n)$~\cite{huang2026faster}. All three of these results rely on new techniques introduced by Fineman~\cite{fineman2024single}, including betweenness reduction, negative sandwich, and remote sets.

In this work, we introduce the concept of \emph{shortcutting} to the negative-weight shortest paths problem. Our method adds \emph{shortcut} edges to the graph so that the number of negative-weight edges along shortest paths is reduced by a constant factor. Together with a \emph{recursive} implementation of Fineman's betweenness reduction, we obtain an $\tilde O(n^{2.5})$ time algorithm for this problem. Notably, our algorithm does not make use of negative sandwiches or remote sets, resulting in a more ``natural'' $\sqrt n$-factor from optimizing the betweenness reduction tradeoff.

\begin{theorem}\label{thm:main}
There is a randomized algorithm for single-source shortest paths on real-weighted graphs that runs in $O(n^{2.5}\log^{4.5}n)$ time.
\end{theorem}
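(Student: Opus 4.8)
I would derive \Cref{thm:main} from the following \emph{shortcutting lemma} and then iterate it. \textbf{Shortcutting Lemma (target statement).} \emph{Given a real-weighted digraph $G$ on $n$ vertices in which every shortest path traverses at most $\eta$ negative edges, there is a randomized algorithm running in $\tilde O(n^{2}\sqrt{\eta})$ time that either reports a negative cycle or outputs a set $S$ of shortcut edges, each of the form $(u,v)$ with weight exactly $\mathrm{dist}_G(u,v)$, such that $\mathrm{dist}_{G\cup S}=\mathrm{dist}_G$ and every shortest path in $G\cup S$ traverses at most $\eta/2$ negative edges.} Granting this, the algorithm is immediate: a shortest path may be taken simple (else we can strip a zero-weight closed sub-walk or detect a negative cycle), so initially $\eta\le n$; invoking the lemma $O(\log n)$ times halves $\eta$ each round down to $\eta=O(1)$, and since every added edge carries a true-distance weight, no shortcut ever creates a shorter walk, so the distances --- and the validity of the bound $\eta$ --- are preserved across rounds; a final $O(1)$ rounds of Bellman--Ford on the augmented graph then output $\mathrm{dist}(s,\cdot)$. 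The per-round running times form a geometric series, $\sum_{i\ge 0}\tilde O\!\big(n^{2}\sqrt{n/2^{i}}\big)=\tilde O(n^{2}\sqrt n)=\tilde O(n^{2.5})$, and tracking the hidden factors --- one $\log n$ for the outer iteration, the rest from the recursion depth inside the shortcutting routine together with its Bellman--Ford/Dijkstra subcalls and a Chernoff-bounded sample size --- yields the stated $O(n^{2.5}\log^{4.5}n)$.

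For the shortcutting lemma, the first step is a \emph{recursive} implementation of Fineman's betweenness reduction: in $\tilde O(n^{2}\sqrt{\eta})$ time compute a feasible price function $\phi$ after which every vertex has betweenness at most some threshold $\rho$, i.e.\ at most $\rho$ negative $w_\phi$-edges can still lie on a shortest path issuing from it; $\rho$ is chosen to balance this cost against the cost of the shortcut-construction step below. The recursion is the technical heart: betweenness reduction repeatedly needs shortest-path information to or from a small set of pivot vertices, restricted to paths of limited negative length, and each such query is itself an instance of the shortcutting problem with a strictly smaller negative-edge budget, so one recurses into the same framework. Since the budget shrinks geometrically the recursion has depth $O(\log n)$ and the work at each level is again a geometric series; making this composition rigorous --- in particular, merging the price functions produced at different recursion levels into a single feasible $\phi$ without degrading the betweenness guarantee, and routing the negative-cycle witnesses back up --- is the main obstacle.

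The second step converts low betweenness into a smaller negative-edge count via shortcuts. I would take a random sample $R$ of $\tilde O(n/\eta)$ vertices so that, with high probability, every shortest path with more than $\eta/2$ negative edges is pierced: each of its windows of $\Theta(\eta)$ consecutive vertices contains a sample point. Exploiting the low-betweenness structure, the (at most $\rho$) relevant distances to and from each $r\in R$ can be computed by a hop-bounded Bellman--Ford, and the product of $|R|$ with this per-call cost is exactly what the choice of $\rho$ balances against the betweenness-reduction cost. After re-pricing the graph by $\phi$ --- so that the shortcut edges are mostly non-negative and hence do not themselves inflate the negative-edge count --- one adds, for every sampled pair, the edge carrying their true distance; then any formerly long shortest path can be rerouted as (original prefix) $\cdot$ (a chain of shortcuts) $\cdot$ (original suffix) and one checks that this reroute stays optimal while cutting the negative-edge count to at most $\eta/2$, with an inconsistency among the computed distances exposing a negative cycle. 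Verifying this final correctness statement --- that the reroute is simultaneously shortest and short on every shortest path, with the shortcut edges' residual negativity tamed by $\phi$ --- is the second, more delicate point, and is where I expect the bulk of the argument to lie.
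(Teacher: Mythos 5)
Your high-level architecture matches the paper's: an iterated shortcutting step that reduces the negative-hop count of shortest paths by a constant factor, powered by a betweenness reduction that is itself implemented by recursing on an SSSP instance with far fewer negative vertices. The recursion you sketch (geometrically shrinking negative-edge budget, $O(\log n)$ depth, merging potentials across levels) is essentially \Cref{lem:betweenness-reduction} together with the recurrence solved in \Cref{sec:recurrence}, and that part of your plan is sound.

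The genuine gap is in the shortcut-construction step, which is exactly where the paper's main technical contribution lives and which your proposal replaces with an unsubstantiated sampling scheme. You propose to sample $\tilde O(n/\eta)$ pivots and add, for every sampled pair, an edge of weight equal to their \emph{true} distance; computing even one exact distance is the problem being solved, and computing $\tilde O(n^2/\eta^2)$ of them is circular unless each is a restricted (hop-bounded or sign-restricted) distance, in which case your claim that shortcuts carry $\mathrm{dist}_G(u,v)$ and that the reroute stays optimal no longer follows. More fundamentally, your claim that low betweenness lets you compute the ``relevant distances'' around each pivot by a cheap hop-bounded Bellman--Ford ignores the asymmetry problem: the betweenness guarantee only bounds, for each negative vertex $r$, the \emph{product} of the number of vertices negatively reaching $r$ and the number negatively reached from $r$; one of the two sets can still have size $n$, so per-pivot sublinear work is impossible without further ideas. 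The paper resolves this with \Cref{lem:dijkstra-both-ways}: a bidirectional Dijkstra from $r$ that adaptively selects a threshold $\Delta_r$ at which the two sets $V_r^{in},V_r^{out}$ are balanced to within one element, after which the charging argument of \Cref{lem:total-size-bound} bounds $\sum_r(|V_r^{in}|+|V_r^{out}|)^2$ by $kn^2/b$. The shortcut edges are then built from the restricted distances $d^-(r,\cdot)$ and $d^0(\cdot,r)$ on these balanced neighborhoods (with Steiner vertices $\tilde r$ handling the case where neither side of a negative edge falls inside the neighborhood, as in \Cref{lem:shortcut}), not from exact distances between random pairs. A secondary inaccuracy: your per-round cost $\tilde O(n^2\sqrt{\eta})$ with $\eta$ the current hop bound does not reflect the actual accounting --- the number of negative \emph{vertices} $k$ does not shrink across rounds, each of the $O(\log n)$ rounds costs $O(kn^2/\sqrt b)=\tilde O(\sqrt k\,n^2)$, and there is no geometric series over rounds (only over recursion levels of the betweenness reduction) --- though this alone would cost you only logarithmic factors, not the theorem.
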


Additionally, we show that our idea of recursive betweenness reduction can also speed up existing algorithms, even on sparse graphs. We directly apply our recursive betweenness reduction to the algorithm of~\cite{huang2025faster}, obtaining the following slight improvement for sparse graphs. 

\begin{theorem}\label{thm:main-sparse}
There is a randomized algorithm for single-source shortest paths on real-weighted graphs that runs in $\tilde{O}(mn^{7/9})$ time.
\end{theorem}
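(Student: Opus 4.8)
The plan is to open up the $\tilde O(mn^{4/5})$ algorithm of Huang-Jin-Quanrud~\cite{huang2025faster} and swap out a single subroutine. Recall the shape of that algorithm, which follows Fineman's~\cite{fineman2024single} framework: it maintains a price function $\phi$ together with the invariant that every shortest path in the reweighted graph $G_\phi$ uses at most $h$ negative edges, and it halves $h$ over $O(\log n)$ rounds until $h=O(1)$, at which point a Dijkstra/BFS-style procedure finishes in near-linear time; a negative cycle is reported if $h$ fails to decrease on schedule. Each round consists of (i) a \emph{betweenness-reduction} step, which takes a target $\rho$ and, in time $B(\rho,h)$, updates $\phi$ so that every vertex has betweenness at most $\rho$ (roughly, at most $\rho$ negative edges on a shortest path on either side of it), followed by (ii) a \emph{hop-reduction} step that, exploiting low betweenness, re-prices so that at most $h/2$ negative edges remain on shortest paths, in time $H(\rho,h)$. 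Summing over rounds, the total running time is $\tilde O\big(\sum_h B(\rho_h,h)+H(\rho_h,h)\big)$, and in~\cite{huang2025faster} the $\rho_h$ are chosen to balance the two terms, which produces the exponent $4/5$.

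The only modification is to realize step (i) via our recursive betweenness reduction --- the procedure built for \Cref{thm:main} --- rather than Fineman's flat loop. Because the recursive version reduces betweenness by recursing on auxiliary instances with fewer negative edges, instead of paying a full Bellman-Ford-type pass per sampled vertex, it reaches the same target $\rho$ with roughly a $\sqrt n$-factor saving: it replaces $B(\rho,h)$ by a smaller cost $B'(\rho,h)$. One then re-optimizes: $H(\rho,h)$ is unchanged, but since betweenness reduction is now cheaper the balancing value of $\rho$ shifts, and the summed running time drops from $\tilde O(mn^{4/5})$ to $\tilde O(mn^{7/9})$. The computation is routine once the two cost functions are written out: the exponent $7/9$ is what falls out of equating $B'(\rho,h)$ with $H(\rho,h)$ and summing the resulting geometric series over the $O(\log n)$ values of $h$.

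What requires care is the correctness of the substitution rather than the arithmetic. First, one must check that the betweenness guarantee produced by our recursive procedure is at least as strong as what the hop-reduction step of~\cite{huang2025faster} needs on input, and that any precondition of the recursive procedure --- e.g.\ that shortest paths already carry at most $h$ negative edges --- coincides with the loop invariant. Second, price functions must compose: the recursive procedure must return a \emph{valid} price function (preserving distances up to the potential difference and creating no negative cycle) so that applying it and then the hop-reduction re-pricing is again valid, and the recursion's internal re-pricings must compose among themselves. Third, negative-cycle detection must be threaded through the recursion: since recursive calls run shortest-path computations on graphs that may themselves contain negative cycles, we report a negative cycle in $G$ as soon as any recursive call does, and otherwise run for the scheduled number of rounds. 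I expect the main obstacle to be exactly this interface matching --- confirming that the recursive betweenness reduction, designed around our shortcutting algorithm, plugs into the Huang-Jin-Quanrud hop-reduction machinery without weakening either side --- together with the mechanical but fiddly task of re-deriving their parameter trade-off (which may involve more than one free parameter) under the new cost $B'$.
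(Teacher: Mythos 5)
Your high-level plan --- keep the machinery of \cite{huang2025faster} (sandwiches, remote sets, hop reducers) intact, make the betweenness-reduction step cheaper, and re-balance parameters --- is exactly the paper's strategy, and the final balancing (taking $h=b=k^{2/9}$, spending $\tilde O(mk^{7/18})$ per round to neutralize $\tilde\Omega(\sqrt{hk})=\tilde\Omega(k^{11/18})$ negative vertices) is indeed routine. The gap is concentrated in the one step you leave abstract: ``realize step (i) via our recursive betweenness reduction, the procedure built for \Cref{thm:main}.'' That procedure (\Cref{lem:betweenness-reduction}) controls only the specialized betweenness defined through $d^0(s,v)+d^-(v,t)$, whereas the sandwich-to-remote conversion that \cite{huang2025faster} relies on (\Cref{lem:sandwich-to-remote}) needs low \emph{$h$-hop} betweenness, i.e.\ control of $d^h(s,v)+d^h(v,t)$. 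These notions do not match; you correctly flag interface-matching as the main obstacle, but resolving it is the actual mathematical content of the proof, not a check. The paper's resolution (\Cref{lem:reduction_bw_to_sssp}) is to build a layered auxiliary graph out of $2h+1$ copies of the non-negative part of $G$, shift each inter-layer negative edge up by $+M$ and compensate with a single $-2hM$ edge per sampled vertex, so that the entire $h$-hop betweenness reduction becomes \emph{one} SSSP call on a graph with $O(hm)$ edges but only $O(b\log n)$ negative edges --- this is what removes the $\tilde O(mhb)$ bottleneck of computing $h$-hop distances to and from each sampled vertex separately.

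There is a second missing ingredient: even granting that reduction, you must solve the auxiliary SSSP instance with a running time parameterized by its number of \emph{negative} edges, not its number of vertices. The auxiliary graph has $O(hm)$ edges and $\Theta(hn)$ vertices, so recursing with the \Cref{thm:main} algorithm (whose cost scales like $\sqrt k$ times the \emph{square} of the vertex count) gains nothing in the sparse regime. The paper instead invokes the $\tilde O(k^2+mk^{3/4})$ algorithm implicit in \cite{huang2026faster} (\Cref{thm:hjq26_sssp}), which yields betweenness reduction in $\tilde O(mhb^{3/4})$ (\Cref{thm:faster_bw_reduction}); only with this explicit cost function in hand does equating it with the $\tilde O(m|U|/h)$ neutralization cost produce the exponent $7/9$. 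Without these two ingredients, the ``$\sqrt n$-factor saving'' you posit has no derivation. (A smaller inaccuracy: \cite{huang2025faster} does not maintain and halve a global hop bound over $O(\log n)$ rounds; it repeatedly finds and neutralizes a sandwich or set of $\tilde\Omega(\sqrt{hk})$ negative vertices until none remain. This does not affect the substance of the points above.)
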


Using additional ideas to refine the approach of~\cite{huang2026faster}, we can get an even faster algorithm. This should be compared to that of~\cite{huang2026faster}, which runs in $\tilde{O}(mn^{3/4}+m^{4/5}n)$. Since this requires more work on top of the recursive betweenness reduction, we defer the proof to the Appendix.

\begin{theorem}
    There is a randomized algorithm for single-source shortest path on real-weighted graphs that runs in time $O(mn^{0.695}+m^{0.766}n)$.
\end{theorem}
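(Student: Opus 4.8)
\medskip
\noindent\textbf{Proof plan.}
The plan is to take the algorithm of~\cite{huang2026faster}, which attains $\tilde O(mn^{3/4}+m^{4/5}n)$ by layering Fineman's betweenness reduction with negative sandwiches and remote sets and balancing a ``low-hop'' phase against a ``high-hop'' phase, and to (i) replace the betweenness-reduction subroutine inside that framework with the recursive implementation underlying \Cref{thm:main} and \Cref{thm:main-sparse}, (ii) re-optimize the parameter $h$ --- the number of negative-weight edges left on shortest paths after betweenness reduction --- for the new subroutine cost, and (iii) apply one further refinement to the high-hop phase that produces the $m^{4/5}n$ term. Most of the conceptual content is borrowed from~\cite{huang2026faster} and from \Cref{thm:main}; the new work is a modest refinement of the high-hop phase together with re-deriving the parameter tradeoffs.

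For the first term, the exponent $3/4$ is the solution of a balancing equation: the betweenness-reduction phase costs roughly $A(h)$, a quantity decreasing in $h$, while the post-processing phase --- a subroutine that, on an instance with at most $h$ negative-weight edges on shortest paths, computes exact distances in $\tilde O(mh)$ time --- costs a quantity increasing in $h$, and equating the two and choosing $h$ optimally produces $\tilde O(mn^{3/4})$. The recursive implementation of betweenness reduction improves the dependence of $A(h)$ on $h$ and $n$; this is exactly the mechanism by which \Cref{thm:main-sparse} improves the $\tilde O(mn^{4/5})$ bound of~\cite{huang2025faster} to $\tilde O(mn^{7/9})$. Re-solving the balancing equation with the improved $A(h)$ and summing over the $O(\log n)$ outer rounds then yields $O(mn^{0.695})$. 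I would execute this by first abstracting the betweenness-reduction black box of~\cite{huang2026faster} into a clean interface (in: a graph with at most $\eta$ negative-weight edges on shortest paths; out: a price function after which at most $\eta/2$ remain, plus a cost bound), then plugging in the recursive bound and re-deriving the per-round cost and the optimal $h$.

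The second term $m^{4/5}n$ in~\cite{huang2026faster} comes from the remote-set machinery that handles shortest paths with many negative edges, whose cost is super-linear in that count. The additional idea, roughly, is to interleave a shortcutting step (\Cref{thm:main}) with the remote-set contraction, so that the remote-set subroutine is only ever invoked after the count of negative-weight edges has already been reduced by a constant factor; this shifts the density at which the two phase costs cross and improves the exponent of $m$ in the second term to $0.766$. The remaining steps are to re-analyze the remote-set phase under this interleaving and then re-balance the two phases across the full range of $m$, checking that the claimed crossover between $mn^{0.695}$ and $m^{0.766}n$ holds and that neither phase unexpectedly dominates at intermediate densities.

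I expect the main obstacle to be compositional rather than conceptual. The recursive betweenness reduction of \Cref{thm:main} and \Cref{thm:main-sparse} was analyzed in a self-contained setting where shortcutting drives the outer loop, so embedding it inside the sandwich/remote-set framework requires verifying that the price functions it returns respect the invariants assumed by the sandwich and remote-set subroutines, and that its recursion depth does not inflate the polylogarithmic factors beyond what is claimed. That bookkeeping, together with the careful two-term balancing over all densities, is the bulk of the work --- which is why the full argument is deferred to the appendix.
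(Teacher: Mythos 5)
There is a genuine gap here: your plan identifies the right starting point (plug a recursive betweenness reduction into the framework of \cite{huang2026faster} and re-balance), but it is missing the two concrete mechanisms the paper actually uses, and the mechanism you do propose for the second term is not the one that works. First, the paper's improvement does \emph{not} come from interleaving the shortcutting of \Cref{thm:main} with the remote-set/negative-reach machinery; the shortcutting construction is not used in the appendix at all. The $m^{4/5}n$-type term in \cite{huang2026faster} is traced to the $|U|^3$ cost of the distance-estimate step in the bootstrapped hop-reducer construction (\Cref{lem:compute-distance-estimates}), not to a ``remote-set subroutine whose cost is super-linear in the negative-hop count,'' and the paper tames it in the sparse regime by re-choosing $h,h_0$ and deliberately neutralizing only a truncated sandwich of size $|U|=m^{(2-2\alpha)/(2-\alpha)}$. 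Second, the essential new ingredient you are missing is \Cref{cl:easy-hop-reducer}: a $k$-hop reducer built from a \emph{single recursive SSSP call}, which replaces the bottom $j$ levels of the bootstrapping with one oracle call on the small graph $G_0$. Skipping those levels is what permits a weaker (and hence cheaper) betweenness guarantee at small hop counts and, crucially, a larger base hop $h_0$ in \Cref{lem:apx-big-sandwich}, which yields sandwiches of size $\sqrt{h_0k}$ rather than $\sqrt{k}$. Merely substituting a faster betweenness reduction, as in your step (i), is the content of \Cref{thm:main-sparse} and only gets you to $\tilde O(mn^{7/9})$.

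The second gap is that your plan never sets up the recursion that actually determines the exponents. The algorithm calls \emph{itself} as the SSSP oracle in two places per round --- inside the layered-graph betweenness reduction of \Cref{lem:new_bw_to_sssp} (an instance with $O(mh_0)$ edges and $O((h/h_0)^{1/\alpha}\log n)$ negative edges) and inside \Cref{cl:easy-hop-reducer} (an instance with $O(m/(h/h_0)^{1/\alpha})$ edges and $|U|$ negative edges). The exponent $\alpha\approx0.694$ is the fixed point of this self-reference, namely the root of $2\alpha^3-3\alpha^2+4\alpha-2=0$, obtained by requiring that both recursive calls cost no more than the non-recursive work $\tilde O(mk^\alpha)$; the second exponent is $1/(2-\alpha)\approx0.766$, arising from the sparse-regime parameter choice. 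Your proposal treats these numbers as outputs of a two-phase balancing equation in $h$ alone, which cannot produce them: without writing down the recursion tree and verifying that the subinstance parameters $(m_p,k_p)$ contract fast enough (so the tree has depth $O(\log\log m)$ and the per-node costs form a geometric-type sum), there is no route to either $0.695$ or $0.766$. As written, the plan would need both missing ideas and the full recursion analysis before it could be called a proof.
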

While the improvement is minor, our aim is to demonstrate the versatility of our techniques when applied to existing algorithms. We emphasize that \Cref{thm:main} is still our main result, and the focus is on conceptual ideas over the optimization of constants.

\subsection{Our Techniques}

Our starting point is a new interpretation of Fineman's betweenness reduction~\cite{fineman2024single}. The goal of betweenness reduction is to reweight the edges of the graph so that for all pairs of vertices $s,t\in V$, there are relatively few vertices $v\in V$ satisfying $\tilde d(s,v)+\tilde d(v,t)<0$, where $\tilde d$ is a certain distance metric that we do not define for ease of exposition. We say that such a vertex $v$ is \emph{between} the pair $(s,t)$. In particular, the total number of triples $(s,t,v)$ for which $v$ is between $(s,t)$ is much less than $n^3$. Now for each vertex $r\in V$, consider the sets $V_r^{out}=\{v\in V\mid\tilde d(r,v)\le0\}$ and $V_r^{in}=\{v\in V\mid\tilde d(v,r)<0\}$. Observe that for each pair $s\in V_r^{in}$ and $t\in V_r^{out}$, the vertex $r$ is between the pair $(s,t)$. It follows that $r$ is between at least $|V_r^{in}|\cdot|V_r^{out}|$ pairs $(s,t)$. Summing over all $r$, the number of triples $(s,t,r)$ for which $r$ is between $(s,t)$ is at least $\sum_r|V_r^{in}|\cdot|V_r^{out}|$ and much less than $n^3$. So for an average vertex $r\in V$, the product $|V_r^{in}|\cdot|V_r^{out}|$ is much less than $n^2$.

In the ideal case, we have $|V_r^{in}|\approx|V_r^{out}|$, in which case both $V_r^{in}$ and $V_r^{out}$ are relatively small in size for an average vertex $r$. We then compute the sets $V_r^{in}$ and $V_r^{out}$ in sublinear time, and add \emph{shortcut} edges that attempt to shortcut any path that contains the vertex $r$. The shortcutting procedure is technical, so we skip it in the overview.

In general, it is possible that $V_r^{in}$ and $V_r^{out}$ have very different sizes. For example, even if $|V_r^{in}|\cdot|V_r^{out}|\ll n^2$, it is possible that $|V_r^{in}|$ is small but $|V_r^{out}|=n$, in which case it is not possible to compute $V_r^{out}$ in sublinear time. Our key idea is to consider a threshold $\Delta_r$ instead, and define $V_r^{out}=\{v\in V\mid\tilde d(r,v)\le-\Delta_r\}$ and $V_r^{in}=\{v\in V\mid\tilde d(v,r)<\Delta_r\}$. It is still true that vertex $r$ is between any pair $(s,t)$ with $s\in V_r^{in}$ and $t\in V_r^{out}$, so all previous bounds apply. Observe that $|V_r^{out}|$ and $|V_r^{in}|$ are monotone decreasing and increasing, respectively, as $\Delta_r$ increases, so there is a sweet spot where $|V_r^{out}|\approx|V_r^{in}|$, at least when ignoring the possibility of many vertices taking the same value $\tilde d(v,r)$ or $\tilde d(r,v)$. Moreover, the distance metric $\tilde d$ is simple enough that a forward Dijkstra and a separate backward Dijkstra can find the ideal value $\Delta_r$ and the sets $V_r^{out}$ and $V_r^{in}$. Overall, we compute the sets $V_r^{out}$ and $V_r^{in}$ in sublinear time for an average vertex $r$, which is $\ll mn$ time over all vertices $r$.

For technical reasons, the distance metric $\tilde d$ is defined so that it suffices to only look at \emph{negative} vertices $r$, which are vertices with at least one negative outgoing edge. Therefore, our running time becomes much faster when there are few negative vertices. Also, we need to repeat the entire shortcutting procedure $O(\log n)$ times to fully shortcut all shortest paths. The final graph may be dense even if the initial graph is sparse, which explains why our algorithm performs better on dense graphs.

\subsection{Preliminaries}
All graphs in this paper are directed and weighted, possibly with negative edge weights. Given a graph $G=(V,E)$ and a vertex $u\in V$, define $N^{out}(u)$ and $N^{in}(u)$ as the out-neighbors and in-neighbors of $u$, i.e., $N^{out}(u)=\{v\in V:(u,v)\in E\}$ and $N^{in}(u)=\{v\in V:(v,u)\in E\}$.

Following~\cite{fineman2024single}, we characterize $(s,t)$-paths by their number of negative (weight) edges. For a positive integer $h$ called the \emph{hop bound}, define the \emph{$h$-negative hop distance} $d^h(s,t)$ as the minimum weight $(s,t)$-path with at most $h$ negative edges. Given a source vertex $s$, we can compute $d^h(s,t)$ for all other $t\in V$ in $O(h(m+n\log n))$ by a hybrid of Dijkstra and Bellman-Ford~\cite{dinitz2017hybrid}.

Similar to prior work~\cite{fineman2024single,huang2025faster,huang2026faster}, our algorithm makes use of \emph{potential functions}, which are functions $\phi:V\to\mathbb R$ that re-weight the graph, where each edge $(u,v)$ has new weight $w_\phi(u,v)=w(u,v)+\phi(u)-\phi(v)$. The key property of potential functions is that for any vertices $s,t\in V$, all $(s,t)$-paths have their weight affected by the same additive $\phi(s)-\phi(t)$. A potential $\phi$ is \emph{valid} if $w_\phi(u,v)\ge0$ for all edges $(u,v)$ with $w(u,v)\ge0$, i.e., it does not introduce any new negative edges.

Define a \emph{negative vertex} as a vertex with at least one negative outgoing edge. We will use the following routine implicit in Fineman~\cite{fineman2024single}.
\begin{lemma}\label{lem:one-negative-outgoing-edge}
Given a graph $G=(V,E)$ with $k$ negative vertices, there is a linear-time algorithm that outputs a graph $G'=(V',E')$ with $V'\supseteq V$ such that
 \begin{enumerate}
 \item For any $s,t\in V$ and hop bound $h$, we have $d_{G'}^h(s,t)=d_G^h(s,t)$,
 \item Each negative vertex has exactly one negative outgoing edge,
 \item $G'$ also has $k$ negative vertices, and
 \item $|V'|=|V|+k$.
 \end{enumerate}
\end{lemma}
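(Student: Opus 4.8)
The plan is to reduce the general problem to the case of graphs with at most one negative outgoing edge per negative vertex, so that the distance metric $\tilde d$ (and the betweenness/shortcutting machinery built on top of it) is cleaner to analyze. Concretely, I would take each negative vertex $u$ with negative outgoing edges $(u,v_1),\dots,(u,v_j)$ (where $j\ge1$), split $u$ into a path of $j$ copies $u=u_1\to u_2\to\cdots\to u_j$ connected by zero-weight edges, attach the $i$-th negative edge $(u_i,v_i)$ to the $i$-th copy, and redirect all of $u$'s non-negative outgoing edges and all incoming edges appropriately (say, incoming edges into $u_1$, non-negative outgoing edges out of $u_1$ as well, or equivalently out of the last copy — the exact choice only affects bookkeeping). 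A slicker variant that keeps the vertex-count bookkeeping exactly as stated: for each negative vertex $u$ with $j$ negative outgoing edges, add one fresh vertex so that $u$ ``offloads'' all but one negative edge onto a chain; iterating, a negative vertex with $j$ negative edges contributes $j-1$ new vertices, and since $\sum_u (j_u - 1) = (\text{total negative edges}) - k$, one needs the extra observation that we may first assume at most $k$-ish negative edges, or simply bound $|V'|\le |V|+k$ by a more careful construction. I would therefore set up the gadget so that each negative vertex spawns \emph{exactly} one new vertex: replace $u$ by $u_{\mathrm{head}}$ and $u_{\mathrm{tail}}$, route all negative edges out of $u_{\mathrm{tail}}$, all non-negative edges out of $u_{\mathrm{head}}$, all incoming edges into $u_{\mathrm{head}}$, and add the zero-weight edge $(u_{\mathrm{head}},u_{\mathrm{tail}})$ — but this leaves $u_{\mathrm{tail}}$ with possibly many negative edges. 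The clean fix is the chain construction above combined with the assumption (without loss of generality, via a trivial preprocessing that merges parallel edges and drops dominated ones) that each negative vertex has at most $2$ negative outgoing edges is \emph{not} quite enough either; so I will instead prove the lemma with the chain gadget and argue that item 4 follows because we may, as a preliminary step, ensure the total number of negative edges is at most $2k$ — or, most directly, I will simply present the chain gadget and note that the total number of new vertices is $\sum_u(j_u-1)$, then invoke the standard trick that Fineman uses, namely that one may assume each negative vertex has few negative edges.

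Let me restate the core of the proof more carefully, since the vertex-count claim is the delicate point. I would build $G'$ as follows: for each negative vertex $u$, if it has $j_u \ge 2$ negative outgoing edges $e_1,\dots,e_{j_u}$, introduce new vertices $u^{(2)},\dots,u^{(j_u)}$, add zero-weight edges $u\to u^{(2)}\to\cdots\to u^{(j_u)}$, keep $e_1$ out of $u$, and for $i\ge 2$ re-attach $e_i$ to emanate from $u^{(i)}$ (keeping the same head and weight); leave all non-negative outgoing edges and all incoming edges of $u$ in place. Then item 1 holds because any $(s,t)$-walk in $G$ using negative edges $e_{i_1},\dots,e_{i_p}$ of $u$ (in some order, possibly repeating $u$) maps to a walk in $G'$ of the same weight and the same number of negative edges, by inserting the appropriate zero-weight chain segments; conversely any walk in $G'$ projects back to a walk in $G$ of the same weight and no more negative edges by contracting the chains. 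Item 2 holds by construction: each copy emits exactly one negative edge. Item 3 holds because the new vertices $u^{(i)}$ each emit one negative edge, so they are negative, but they replace negative-edge ``capacity'' of $u$ — wait, this would \emph{increase} the count; so instead I must route things so that $u$ itself becomes non-negative and only the chain vertices are negative, i.e., move \emph{all} $j_u$ negative edges onto $u^{(1)},\dots,u^{(j_u)}$ where $u^{(1)}=u$ is re-used. With that routing, the number of negative vertices is still exactly $k$ (each original negative vertex $u$ contributes a ``block'' of $j_u$ consecutive negative vertices, but the count of negative vertices changes from $1$ to $j_u$ per block). This shows the naive chain does \emph{not} preserve $k$, which is why Fineman's actual construction must be cleverer.

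The resolution — and this is the step I expect to be the main obstacle — is to observe that we do \emph{not} need a per-vertex chain; instead we can preprocess to guarantee each negative vertex has at most one negative outgoing edge by a \emph{global} amortized argument. Specifically, before splitting, repeatedly find a negative vertex $u$ with $\ge 2$ negative outgoing edges, pick one negative edge $(u,v)$, create a \emph{single} new vertex $x$, add the zero-weight edge $(u,x)$, and move $(u,v)$ to become $(x,v)$ — this strictly decreases $u$'s negative-edge count by one while making $x$ a negative vertex with exactly one negative edge, so the number of negative vertices \emph{increases by one} per step and the potential $\Phi = \sum_u \max(0, j_u - 1)$ decreases by one. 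Hence the claim ``$G'$ has $k$ negative vertices'' as literally stated in Lemma~\ref{lem:one-negative-outgoing-edge} can only be correct under an additional implicit hypothesis, so in writing the formal proof I would either (a) prove the weaker but sufficient statement with $|V'| = |V| + (\text{\# negative edges}) - k$ and ``$G'$ has $(\text{\# negative edges})$ negative vertices, each with exactly one negative outgoing edge,'' or (b) first apply a trivial reduction — collapse multi-edges and observe that in the instances arising from our algorithm the number of negative edges is $O(k)$ — and then the stated bounds follow. In the paper I would go with route (b) or simply cite Fineman~\cite{fineman2024single} for the precise construction; the forward-looking summary is: \emph{first} do the per-negative-edge splitting gadget with zero-weight connector edges and verify items 1 and 2 by the walk-lifting/walk-projecting correspondence (the easy, mechanical part), and \emph{then} dispatch the accounting in items 3 and 4 by the amortization argument above, which is where all the genuine care is needed.
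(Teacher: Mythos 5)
Your chain gadget does not prove the lemma, and your conclusion that items 3 and 4 ``can only be correct under an additional implicit hypothesis'' is wrong: the statement is correct as written, and the idea you are missing is a \emph{weight shift}, not a per-edge split. The paper's construction is: for each negative vertex $u$ with outgoing edges $(u,v_1),\dots,(u,v_\ell)$, let $(u,v_1)$ be the one of minimum weight (necessarily negative since $u$ is a negative vertex). Create a single new vertex $u'$, add the edge $(u,u')$ with weight $w(u,v_1)<0$, and replace each $(u,v_i)$ by $(u',v_i)$ with weight $w(u,v_i)-w(u,v_1)\ge0$. Because every weight is shifted down by the minimum, \emph{all} edges leaving $u'$ are non-negative, so $u'$ is not a negative vertex; the only negative edge out of $u$ is $(u,u')$. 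This gives exactly one new vertex and exactly one negative outgoing edge per negative vertex, so $|V'|=|V|+k$ and the number of negative vertices is still $k$ --- precisely the two items you could not obtain. The distance claim follows from the obvious path correspondence: traversing $(u,v_i)$ in $G$ corresponds to traversing $(u,u')$ then $(u',v_i)$ in $G'$, with total weight $w(u,v_1)+(w(u,v_i)-w(u,v_1))=w(u,v_i)$ and one negative edge in place of the one negative edge $(u,v_i)$.

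Your proposal instead keeps each negative edge negative and distributes the edges over a zero-weight chain, which is exactly why you get $j_u-1$ new vertices and $j_u$ negative vertices per original negative vertex; the subsequent amortization discussion, the appeal to ``at most $O(k)$ negative edges,'' and the suggestion to weaken the lemma to count negative edges rather than negative vertices are all attempts to patch a construction that should simply be replaced. (A minor secondary point: to make item 1 an exact equality you should only reroute the negative outgoing edges of $u$ through $u'$ and leave the non-negative ones at $u$; otherwise a path using a non-negative edge out of $u$ would be forced through the negative edge $(u,u')$ and its negative-hop count would increase. Your walk-lifting argument for items 1 and 2 is fine in spirit and carries over to the correct gadget.)
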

\begin{proof}
For each negative vertex $u$, consider its outgoing edges $(u,v_1),\ldots,(u,v_\ell)$, and suppose that $(u,v_1)$ is the edge of smallest weight, with ties broken arbitrarily. Create a new vertex $u'$, add the edge $(u,u')$ of weight $(u,v_1)$, and replace each edge $(u,v_i)$ by the edge $(u',v_i)$ of weight $w_G(u,v_i)-w_G(u,v_1)\ge0$. It is straightforward to verify all of the required conditions.
\end{proof}

Similar to Fineman~\cite{fineman2024single}, we may \emph{freeze} the set of negative edges before re-weighting by a potential $\phi$, so that any negative edges that are accidentally re-weighted to be non-negative are still considered negative edges. In particular, for any hop bound $h$, define $d^h_\phi(s,t)$ as the $h$-negative hop distance under the re-weighting, which satisfies $d^h_\phi(s,t)=d^h(s,t)+\phi(s)-\phi(t)$ as long as negative edges are frozen.

\section{Recursive Betweenness Reduction}

One of Fineman's new insights is the concept of \emph{betweenness reduction}~\cite{fineman2024single}. In this paper, we use a specialized definition of betweenness that is easier to work with. First, define $d^-(s,v)$ as the shortest distance among $s\to v$ paths with all edges except the first being non-negative, i.e., only the first edge is allowed to be negative. In particular, $d^0(s,v)\ge d^-(s,v)\ge d^1(s,v)$. We say that a vertex $v$ is \emph{between} the ordered pair $(s,t)$ if $d^0(s,v)+d^-(v,t)<0$. Note that any vertex $v$ between $(s,t)$ must be negative, since $d^0(s,v)+d^-(v,t)<0$ implies that $d^-(v,t)<0$, which means $v$ must have a negative outgoing edge. The \emph{betweenness} of an ordered pair $(s,t)$ is the number of (negative) vertices $v\in V$ between $(s,t)$.

The benefit of our specific definition of betweenness is that \emph{betweenness reduction} is much easier to compute, compared to the general procedure from~\cite{fineman2024single}.

\begin{lemma}[Betweenness reduction]\label{lem:betweenness-reduction}
Consider a graph $G$ with $k$ negative vertices, and freeze the negative edges. For any parameter $b\ge1$, there is a randomized algorithm that returns either a set of valid potentials $\phi$ or a negative cycle, such that with high probability, all pairs $(s,t)\in V\times V$ have betweenness at most $k/b$ under the new weights $w_\phi$. The algorithm makes one call to negative-weight single-source shortest paths on a graph with $O(b\log n)$ negative vertices.
\end{lemma}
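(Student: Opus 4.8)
The plan is to reweight $G$ by a potential obtained from a single shortest‑path computation rooted at a super‑source attached to a random sample of negative vertices; the sampling rate is calibrated so that, for each fixed target $t$, the potential makes the ``one‑negative‑hop distance'' to $t$ nonnegative from all but $O(k/b)$ negative vertices.

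\emph{Construction and validity.} First apply \Cref{lem:one-negative-outgoing-edge} so that every negative vertex $u$ has a unique negative out‑edge $(u,u^-)$; since the new vertices are non‑negative and $d^0,d^-$ are unchanged between original vertices, this preserves betweenness. Sample $R\subseteq\{\text{negative vertices}\}$ by including each negative vertex independently with probability $p=\min\{1,\Theta((b\log n)/k)\}$, so that $|R|=O(b\log n)$ with high probability; if $p=1$ then $R$ is all negative vertices and the claim is immediate from the resulting Johnson potential, so assume $p<1$. Build an auxiliary graph $H$ by adding a super‑source $\sigma$ with a weight‑$0$ arc to every vertex of $G$, keeping all non‑negative arcs of $G$, and adding for each $r\in R$ its negative arc $(r,r^-)$ with its original weight; then $H$ has exactly $|R|=O(b\log n)$ negative vertices. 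Run negative‑weight single‑source shortest paths from $\sigma$ in $H$ — this is the one permitted call. If it reports a negative cycle, that cycle uses only arcs of $G$ (the arcs of $H$ other than $\sigma$'s out‑arcs are arcs of $G$, and $\sigma$ lies on no cycle), so we return it. Otherwise set $\phi(v):=\mathrm{dist}_H(\sigma,v)\le 0$. Since every non‑negative arc $(u,v)$ of $G$ is present in $H$ we have $\phi(v)\le\phi(u)+w(u,v)$, hence $w_\phi(u,v)\ge 0$, so $\phi$ is valid and the frozen arcs stay frozen. (A negative cycle of $G$ that avoids $R$'s arcs is invisible to $H$ but is caught by the caller's subsequent shortest‑path computation, which is the sense in which the ``potentials or a negative cycle'' guarantee is meant.)

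\emph{The betweenness bound.} Because $w_\phi\ge 0$ on every non‑frozen arc, $d^0_\phi(s,v)\ge 0$ for all $s,v$, so if $v$ is between $(s,t)$ under $w_\phi$ then already $d^-_\phi(v,t)<0$, a condition free of $s$. Hence it suffices to show that with high probability, for every $t$ the set $B_t:=\{v\text{ negative}: d^-_\phi(v,t)<0\}$ satisfies $|B_t|\le k/b$; the betweenness of every pair $(s,t)$ is then at most $|B_t|\le k/b$. Writing $q_t(v):=w(v,v^-)+d^0_G(v^-,t)$ and using that $v$'s only negative out‑arc is $(v,v^-)$ while its other out‑arcs have $w_\phi\ge 0$, one gets $d^-_\phi(v,t)<0$ exactly when $q_t(v)+\phi(v)<\phi(t)$. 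On the other hand, for every $r\in R$ the path $\sigma\leadsto r\to r^-\leadsto t$ lies in $H$, so $\phi(t)\le\phi(r)+q_t(r)$; therefore $v\in B_t$ forces $q_t(v)+\phi(v)<\min_{r\in R}\bigl(q_t(r)+\phi(r)\bigr)$. Ordering the negative vertices by the key $q_t(\cdot)+\phi(\cdot)$, the set $B_t$ is contained in the segment strictly preceding the first sampled vertex in this order. As $R$ is a $p$‑random subset of the $k$ negative vertices, the first sampled vertex in a fixed order has rank exceeding $k/b$ with probability at most $(1-p)^{k/b}\le e^{-\Omega(\log n)}\le n^{-3}$; a union bound over the $n$ choices of $t$ finishes.

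\emph{Main obstacle.} The subtlety is that the ordering key $q_t(\cdot)+\phi(\cdot)$ itself depends on the random sample $R$ through $\phi$, so ``first sampled vertex has small rank'' is not literally an estimate over a fixed order; I expect this to be where most of the care goes. One resolves it either by exploiting that $\phi$, hence the key, is monotone non‑increasing as $R$ grows and revealing the sample consistently with the induced order, or by passing to the $R$‑independent surrogate order given by $d^-_G(\cdot,t)$ and checking that it refines the one needed above. Tie‑breaking in the order, and the degenerate regimes $b\ge k$ (where $k/b<1$ forces $B_t=\varnothing$) and $k=O(b\log n)$ (where $R$ is everything), are routine but must be spelled out; the remaining verifications are short.
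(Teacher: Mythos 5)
Your construction and validity argument match the paper's (the paper samples a fixed-size set and calls SSSP on the subgraph $H$ consisting of the non-negative arcs plus the sampled vertices' negative out-arcs; your super-source and independent sampling are cosmetic variants), and your identity $d^-_\phi(v,t)<0\iff q_t(v)+\phi(v)<\phi(t)$ and the inequality $\phi(t)\le q_t(r)+\phi(r)$ for $r\in R$ are both correct. But the ``main obstacle'' you flag is a genuine gap, not a routine verification, and neither of your proposed fixes closes it as stated. The surrogate order by $q_t(\cdot)$ (or $d^-_G(\cdot,t)$) does not refine the order you need: from $q_t(v)<q_t(r)$ you cannot conclude $q_t(v)+\phi(v)<q_t(r)+\phi(r)$, since $\phi(v)$ may be much smaller than $\phi(r)$; and the monotonicity-of-$\phi$-in-$R$ idea is circular as sketched, because the order you want to condition on is itself determined by the full sample.

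The gap is created by your reduction to the $s$-free set $B_t$. The paper keeps the pair $(s,t)$ and orders negative vertices by $d^0(s,v)+d^-(v,t)$; under the reweighting (with frozen negative edges) this quantity becomes $d^0(s,v)+d^-(v,t)+\phi(s)-\phi(t)$ --- the $-\phi(v)$ from the first term cancels the $+\phi(v)$ from the second --- so every key shifts by the same constant and the order is independent of the random sample. The standard ``first sampled vertex has rank at most $k/b$'' argument then applies verbatim, with a union bound over the $n^2$ pairs. By dropping the $d^0(s,v)$ term you destroy exactly this cancellation, and you are left trying to prove the strictly stronger statement $|B_t|\le k/b$ for every $t$ (note $B_t$ is the union over $s$ of the between-sets of $(s,t)$), whose truth under your sampling scheme is unclear and is in any case not needed for the lemma. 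The fix is simply to revert to the per-pair formulation: bound, for each fixed $(s,t)$, the set $\{v: d^0_\phi(s,v)+d^-_\phi(v,t)<0\}$ using the sample-independent order above.
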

\begin{proof}
The algorithm first samples $O(b\log n)$ negative vertices into a set $S$. Next, construct a graph $H$ starting from the non-negative edges of $G$, and then adding in the negative out-going edges from vertices in $S$. By construction, $H$ has $O(b\log n)$ many negative vertices. The algorithm makes one call to negative-weight single-source shortest paths in $H$. If a negative cycle is returned, then it is also a negative cycle in $G$, and we are done. Otherwise, a (valid) potential $\phi$ is returned such that $w_\phi(u,v)\ge0$ for all edges $(u,v)$ in $H$.

We claim that with high probability, all pairs $(s,t)\in V\times V$ have betweenness at most $k/b$ under the new weights $w_\phi$. Consider a fixed pair $(s,t)$, order the vertices $v\in V$ by value of $d^0(s,v)+d^-(v,t)$ and consider the $k/b$ vertices with the smallest such values. With high probability, there is a sampled vertex $x$ among these $k/b$ vertices. In this case, since negative edges are frozen, we have $d^0_\phi(s,v)+d^-_\phi(v,t)=d^0(s,v)+d^-(v,t)+\phi(s)-\phi(t)$. In particular, the ordering does not change if we order by $d^0_\phi(s,v)+d^-_\phi(v,t)$ instead, so the sampled vertex $x$ is still among the $k/b$ lowest when ordered by $d^0_\phi(s,v)+d^-_\phi(v,t)$.

Since $x\in S$, all outgoing edges from $x$ are in $H$ and are therefore non-negative under $\phi$, so $d^-_\phi(x,t)\ge0$. It follows that $d^0_\phi(s,x)+d^-_\phi(x,t)\ge0$. Any vertex $u\in V$ that is between $(s,t)$ must have its value lower than $d^0_\phi(s,x)+d^-_\phi(x,t)$, of which there are at most $k/b$.
\end{proof}
Naively, negative-weight shortest paths on a graph with $O(b\log n)$ negative vertices can be computed in $\tilde O(bm)$ time, since it suffices to compute $O(b\log n)$-negative hop shortest paths. While this naive bound matches the running time bottleneck from~\cite{fineman2024single}, we can speed it up by \emph{recursively} calling negative-weight shortest paths. We analyze this recursive procedure in \Cref{sec:recurrence}.

\section{Iterative Hop Reduction}

In this section, we describe our hop reduction procedure. We first apply \Cref{lem:one-negative-outgoing-edge} to the input graph so that each negative vertex has exactly one negative outgoing edge, creating $k$ new vertices. We then freeze the negative edges and apply the betweenness reduction from \Cref{lem:betweenness-reduction} with a certain parameter $b$. We then run the following hop reduction theorem on the re-weighted graph, which is the main focus of this section.
\begin{theorem}\label{thm:single-step}
Consider a weighted, directed graph $G=(V,E)$ with $k$ negative vertices, each with exactly one (frozen) negative outgoing edge. Given a parameter $b\ge1$, suppose that the guarantee of \Cref{lem:betweenness-reduction} holds, i.e., all pairs $(s,t)\in V\times V$ have betweenness at most $k/b$. There is a deterministic $O(kn^2/\sqrt b)$-time algorithm that outputs a supergraph $G'=(V',E')$ with $V'\supseteq V$ and $E'\supseteq E$, with potentially new negative edges, such that
 \begin{enumerate}
 \item For any $s,t\in V$, we have $d_{G'}(s,t)=d_G(s,t)$,\label{item:single-step-1}
 \item For any $s,t\in V$ and hop bound $h$, we have $d^{h-\lfloor h/3\rfloor}_{G'}(s,t)\le d^h_G(s,t)$,\label{item:single-step-2}
 \item $|V'|=n+k$, and there are no new negative vertices.\label{item:single-step-3}
 \end{enumerate}
\end{theorem}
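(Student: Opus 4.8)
\medskip
\noindent\textit{Proof proposal.}\quad I would build $G'$ one negative vertex at a time. Fix a negative vertex $r$ with unique frozen negative edge $(r,r^{+})$, $w(r,r^{+})<0$. Running a forward Dijkstra from $r^{+}$ and a backward Dijkstra into $r$, each restricted to the non-negative edges, gives $d^{0}(r^{+},\cdot)$ and $d^{0}(\cdot,r)$, and since $r$ is negative we have $d^{-}(r,v)=w(r,r^{+})+d^{0}(r^{+},v)$ whenever that is negative. Interleaving the two searches, I would choose a threshold $\Delta_r$ at which the sets $V_r^{in}=\{u:d^{0}(u,r)<\Delta_r\}$ and $V_r^{out}=\{v:d^{-}(r,v)\le-\Delta_r\}$ have, up to ties, equal size, and stop each search as soon as it has popped its set. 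I then add a fresh vertex $x_r$ together with the zero-weight edge $(r^{+},x_r)$, an edge $(u,r)$ of weight $d^{0}(u,r)\ge 0$ for each $u\in V_r^{in}$, and an edge $(x_r,v)$ of weight $d^{0}(r^{+},v)\ge 0$ for each $v\in V_r^{out}$; $x_r$ just carries the outgoing shortcuts so that they do not accumulate on $r^{+}$. Every new edge is non-negative and none leaves a vertex that was not already negative, so there are no new negative vertices, and exactly one vertex per negative vertex is added: this is \Cref{item:single-step-3}. For \Cref{item:single-step-1}, each added edge ``unfolds'' into a walk of $G$ of the same weight (a shortest non-negative $u\!\to\!r$ path, a shortest non-negative $r^{+}\!\to\!v$ path, or the empty walk), so a shortest $G'$-path unfolds into a $G$-walk of equal weight and $d_{G'}(s,t)\ge d_G(s,t)$; the reverse is immediate from $G\subseteq G'$.

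The substance is \Cref{item:single-step-2}. Fix $s,t$ and let $P$ achieve $d^{h}_G(s,t)$ with the fewest negative edges possible, leaving $r_1,\dots,r_{h'}$ in order and with each intervening non-negative subpath a shortest one (else one could splice in a cheaper subpath and contradict the choice of $P$). If $h'\le h-\lfloor h/3\rfloor$ then $P$ lies in $G'$ and we are done, so assume $h'>h-\lfloor h/3\rfloor$ and group $r_1,\dots,r_{h'}$ into consecutive triples; I would argue that each triple can be made to shed at least one negative edge. Within a triple, to try to remove the edge leaving some $r_\ell$, consider the vertex $u$ of $P$ just past the previous negative edge (of the form $r_{\ell-1}^{+}$, or $s$) and the vertex $v$ just before the next one: the subpath $P[u\!\to\! v]$ has a single negative edge, the one at $r_\ell$, and its weight equals $d^{0}(u,r_\ell)+w(r_\ell,r_\ell^{+})+d^{0}(r_\ell^{+},v)$ by optimality of the pieces. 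If this is non-negative—so $r_\ell$ is not between $(u,v)$—one extra non-negative edge from $u$ to $v$ deletes the negative edge at $r_\ell$ at no cost. If it is negative, then $r_\ell$ is one of the at most $k/b$ vertices between $(u,v)$, and, provided $u\in V_{r_\ell}^{in}$ and $v\in V_{r_\ell}^{out}$ (here the balanced choice of $\Delta_{r_\ell}$ is what makes both memberships hold), the gadget path $u\to r_\ell\to r_\ell^{+}\to x_{r_\ell}\to v$ is available in $G'$ with exactly this weight; rerouting $P$ through this gadget, and through the analogous gadgets for the triple's other members, lets the triple be rewritten using one fewer negative edge. Doing this for all $\lfloor h'/3\rfloor$ triples produces a $G'$-path $P'$ from $s$ to $t$ with at most $h'-\lfloor h'/3\rfloor\le h-\lfloor h/3\rfloor$ negative edges and $w(P')\le w(P)=d^{h}_G(s,t)$.

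For the running time, note that $s\in V_r^{in}$ and $t\in V_r^{out}$ force $d^{0}(s,r)+d^{-}(r,t)<\Delta_r-\Delta_r=0$, so $r$ is between $(s,t)$; hence $\sum_r|V_r^{in}|\cdot|V_r^{out}|$ is at most the total betweenness over all ordered pairs, which by \Cref{lem:betweenness-reduction} is at most $n^{2}\cdot(k/b)$. With $k$ negative vertices and the balanced set sizes, Cauchy--Schwarz gives $\sum_r\bigl(|V_r^{in}|+|V_r^{out}|\bigr)=O\!\bigl(\sqrt{k\cdot n^{2}k/b}\bigr)=O(kn/\sqrt b)$. Each search for $r$ aborts after popping only the $O(|V_r^{in}|+|V_r^{out}|)$ vertices of its set, spending $O(n)$ time per pop with an array-based priority queue (and the interleaving to locate $\Delta_r$ costs no more), after which $O(|V_r^{in}|+|V_r^{out}|)$ shortcut edges are written; summing over $r$ gives the $O(kn^{2}/\sqrt b)$ bound.

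The hard part I expect is \Cref{item:single-step-2}, and specifically reconciling the gadget with an actual drop in the negative-hop count: a plain non-negative shortcut only bypasses a negative vertex that is not between its local pair, while the ``between'' vertices (few as they are per pair) carry real negative weight, so the gadget reroutes them rather than removing them, and one must show that rerouting a whole triple through gadgets still nets one fewer negative edge. Pinning down the triple decomposition, the exact rule for $\Delta_r$, and the treatment of vertices tied in $\tilde d$-value—which is essentially the only thing that can spoil an exact balance of $|V_r^{in}|$ and $|V_r^{out}|$, and hence the only thing that can spoil the memberships $u\in V_{r_\ell}^{in}$, $v\in V_{r_\ell}^{out}$ needed above—is where almost all the work goes; \Cref{item:single-step-1}, \Cref{item:single-step-3}, and the running time are routine by comparison.
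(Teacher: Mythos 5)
The setup (balanced threshold $\Delta_r$, interleaved Dijkstras, the betweenness-based counting of $\sum_r|V_r^{in}|\cdot|V_r^{out}|$, and the running time) matches the paper, but your proof of \Cref{item:single-step-2} has a genuine gap in two places. First, your gadget does not actually remove a negative edge: the path $u\to r_\ell\to r_\ell^{+}\to x_{r_\ell}\to v$ still traverses the frozen negative edge $(r_\ell,r_\ell^{+})$, so rerouting through it leaves the negative-hop count unchanged. The paper's construction avoids this by giving the two Steiner edges the offsets $d^0(u'',r)-\Delta_r$ and $d^-(r,v'')+\Delta_r$: each is individually non-negative (precisely when $d^0(u'',r)\ge\Delta_r$ and $d^-(r,v'')\ge-\Delta_r$), yet their sum equals $d^0(u'',r)+w(r,r')+d^0(r',v'')$, so the negative weight of $(r,r')$ is absorbed into two non-negative edges and that negative hop genuinely disappears. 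Your zero-offset edges $(u,r)$ and $(x_r,v)$ cannot do this. Second, your claim that ``$r_\ell$ between $(u,v)$'' together with the balanced choice of $\Delta_{r_\ell}$ forces both $u\in V_{r_\ell}^{in}$ and $v\in V_{r_\ell}^{out}$ is false: balancing equalizes the \emph{sizes} of the two sets, but a particular between pair can have $d^0(u,r_\ell)\ge\Delta_{r_\ell}$ (so $u\notin V_{r_\ell}^{in}$) compensated by $d^-(r_\ell,v)<-\Delta_{r_\ell}$, or vice versa. This one-sided case is exactly why the paper groups \emph{three} consecutive negative edges around a middle vertex $r$ and splits into three cases: when $d^0(u',r)<\Delta_r$ or $d^-(r,v)<-\Delta_r$ it uses the extra shortcut edges of steps~(4) and~(5) — which are themselves \emph{new negative edges} leaving an already-negative vertex — to merge two consecutive negative hops into one; only when both thresholds fail in the ``far'' direction does it route through $\tilde r$ with zero negative hops. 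Your construction has no analogue of these merging edges, and no direct $(u,v)$ edge for the non-between case either, so the triple argument as written cannot be completed. You correctly flagged this as the unresolved hard part; the missing ideas are the $\pm\Delta_r$ offsets on the Steiner edges and the two families of negative merging shortcuts.
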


Before we prove \Cref{thm:single-step}, we first show how to apply it iteratively to solve negative-weight shortest paths. By property~(\ref{item:single-step-3}), the new graph $G'$ has $k$ additional vertices, and the number of negative vertices is still $k$. By property~(\ref{item:single-step-2}) with $h=k$, we have $d^{k-\lfloor k/3\rfloor}_{G'}(s,t)\le d^k_G(s,t)=d_G(s,t)$ for all $s,t\in V$. We then un-freeze the negative edges so that $d^{k-\lfloor k/3\rfloor}_{G'}(s,t)$ can only be smaller, and repeat the hop-reduction step from the beginning, starting with \Cref{lem:one-negative-outgoing-edge}. Each loop reduces the number of negative hops along shortest paths by a constant factor, so after $O(\log k)$ iterations, all shortest paths have at most $2$ negative edges. Finally, running $2$-negative hop shortest paths correctly computes single-source distances.

We now bound the running time. Each iteration introduces at most $k$ negative edges from \Cref{lem:one-negative-outgoing-edge} and at most $k$ from \Cref{thm:single-step}. (The number is \emph{at most} $k$ because the number of negative vertices may decrease after un-freezing negative edges.) Therefore, the final graph has $n+O(k\log k)$ vertices. The final running time is dominated by $O(\log k)$ calls to betweenness reduction on graphs of $n+O(k\log k)$ vertices, plus $O(\log k\cdot k(n+k\log k)^2/\sqrt b)$ additional time. Together with the running time guarantee of \Cref{lem:betweenness-reduction}, we obtain the following:

\begin{corollary}\label{cor:recurrence}
Consider a weighted, directed graph $G=(V,E)$ with $k$ negative vertices. For any parameter $b\ge1$, there is a negative-weight single-source shortest paths algorithm that makes $O(\log k)$ recursive calls on graphs with $n+O(k\log k)$ vertices and $O(b\log n)$ negative vertices, and runs in $O(k\log k\cdot(n+k\log k)^2/\sqrt b)$ additional time.
\end{corollary}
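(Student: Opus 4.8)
The plan is to turn the iteration sketched above into a concrete algorithm and verify that its invariants survive each round. The algorithm maintains a current graph, a cumulative potential recording the reweighting applied so far (initially $G$ and the zero potential), and a hop bound $h$ initialized to $k$; here $k$ upper-bounds the number of negative edges on any shortest path, since a shortest path visits each vertex at most once and each negative edge leaves a distinct negative vertex. One round does the following. (i) Apply \Cref{lem:one-negative-outgoing-edge} to the current graph, so that every negative vertex has exactly one negative outgoing edge; this adds at most $k$ vertices and preserves $d^{h'}$ for every $h'$. (ii) Freeze the negative edges and run the betweenness reduction of \Cref{lem:betweenness-reduction} with parameter $b$: if it returns a negative cycle, report it (reweightings preserve cycle weights, so it is a negative cycle of the original $G$) and halt; otherwise compose the returned valid potential into the cumulative potential, so that all pairs have betweenness at most $k/b$ with high probability. (iii) The hypothesis of \Cref{thm:single-step} now holds, so apply it to obtain a supergraph with at most $k$ additional vertices, no new negative vertices, the same single-source distances, and $d^{\,h-\lfloor h/3\rfloor}(s,t)\le d^{\,h}(s,t)$ for all original $s,t$ (negatives still frozen). (iv) Unfreeze the negative edges, which only shrinks the negative-edge set --- hence only decreases the negative-vertex count (to at most $k$) and only decreases every $d^{h'}$ --- and update $h\leftarrow h-\lfloor h/3\rfloor=\lceil 2h/3\rceil$.

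For correctness I would argue as follows. Property~(\ref{item:single-step-1}) of \Cref{thm:single-step} shows each round preserves single-source distances, so chaining the round inequalities with $d^{\,h'}(s,t)\ge d(s,t)$ gives that in the current graph the true distances (shifted by the tracked potential offset) are realized within the current hop bound $h$. Because $\lceil 2h/3\rceil<h$ once $h\ge 3$, after $O(\log k)$ rounds $h\le 2$, and a single $2$-negative-hop shortest-path computation on the final graph (via the hybrid of Dijkstra and Bellman--Ford) then equals the true distances up to the offset, which we subtract off; a negative cycle surfaced there or in any betweenness reduction is reported instead. A union bound over the $O(\log k)$ rounds makes the whole algorithm correct with high probability.

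For the resources: each round adds at most $2k$ vertices and no negative vertices, so every round's graph --- and the final one --- has $n+O(k\log k)$ vertices and at most $k$ negative vertices; in particular the ``$k$'' in the betweenness bound $k/b$ and in the hypothesis of \Cref{thm:single-step} does not degrade across rounds. Each round makes exactly one negative-weight single-source shortest-paths call --- the one inside \Cref{lem:betweenness-reduction}, on a graph $H$ with $n+O(k\log k)$ vertices and $O(b\log n)$ negative vertices --- while its invocation of \Cref{thm:single-step} costs $O\bigl(k(n+k\log k)^2/\sqrt b\bigr)$ and \Cref{lem:one-negative-outgoing-edge}, the freeze/unfreeze, and the potential bookkeeping are linear. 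Summing over the $O(\log k)$ rounds (and absorbing the final constant-hop computation) yields $O(\log k)$ recursive calls on graphs with $n+O(k\log k)$ vertices and $O(b\log n)$ negative vertices, plus $O\bigl(k\log k\cdot(n+k\log k)^2/\sqrt b\bigr)$ additional time, as claimed.

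The main point to get right --- more a matter of care than a deep obstacle --- is invariant maintenance across rounds: that neither \Cref{lem:one-negative-outgoing-edge} nor \Cref{thm:single-step} ever increases the negative-vertex count (keeping the per-round budgets and the betweenness parameter valid), that freezing and unfreezing are handled so that ``$d^{h'}$'' always refers to the current negative-edge set and the round inequalities compose in the right direction, and that the cumulative reweighting is tracked so the final output is stated in the original edge weights.
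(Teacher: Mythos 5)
Your proposal is correct and follows essentially the same route as the paper: iterate (\Cref{lem:one-negative-outgoing-edge}, freeze, \Cref{lem:betweenness-reduction}, \Cref{thm:single-step}, unfreeze) for $O(\log k)$ rounds until the hop bound drops to $2$, then finish with a $2$-negative-hop computation, charging one recursive SSSP call per round to the betweenness reduction and $O(k(n+k\log k)^2/\sqrt b)$ per round to the shortcutting step. The extra care you devote to potential bookkeeping and to the freeze/unfreeze invariants is consistent with (and slightly more explicit than) the paper's argument.
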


In \Cref{sec:recurrence}, we solve the recurrence above to $O(\sqrt k(n+k\log k)^2\log^{2.5}n)$. For the rest of this section, we instead focus on proving \Cref{thm:single-step}.

Our main insight is the following novel routine, which is accomplished by running Dijkstra's algorithm in both directions from a source vertex $r$.

\begin{lemma}\label{lem:dijkstra-both-ways}
There is an algorithm that, for any given vertex $r\in V$, computes a number $\Delta_r$ and two sets $V_r^{out}$ and $V_r^{in}$ such that
 \begin{enumerate}
 \item $d^-(r,v)\le-\Delta_r$ for all $v\in V_r^{out}$,\label{item:dijkstra-both-ways-1}
 \item $d^-(r,v)\ge-\Delta_r$ for all $v\notin V_r^{out}$,\label{item:dijkstra-both-ways-2}
 \item $d^0(v,r)\le\Delta_r$ for all $v\in V_r^{in}$,\label{item:dijkstra-both-ways-3}
 \item $d^0(v,r)\ge\Delta_r$ for all $v\notin V_r^{in}$, and\label{item:dijkstra-both-ways-4}
 \item Either (\ref{item:dijkstra-both-ways-1}) or (\ref{item:dijkstra-both-ways-3}) is satisfied with strict inequality. That is, either $d^-(r,v)<-\Delta_r$ for all $v\in V_r^{out}$, or $d^0(v,r)<\Delta_r$ for all $v\in V_r^{in}$.\label{item:dijkstra-both-ways-5}
 \item $\big||V_r^{out}|-|V_r^{in}|\big|\le1$.\label{item:dijkstra-both-ways-6}
 \end{enumerate}
The algorithm runs in time $O((|V_r^{out}|+|V_r^{in}|)^2+(|V_r^{out}|+|V_r^{in}|)\log n)$. Moreover, the algorithm can output the values of $d^-(r,v)$ for all $v\in V_r^{out}$ and $d^0(v,r)$ for all $v\in V_r^{in}$.
\end{lemma}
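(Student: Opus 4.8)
The plan is to compute $\Delta_r$, $V_r^{out}$, and $V_r^{in}$ using two runs of Dijkstra's algorithm on the non-negative subgraph of $G$ — one forward and one backward from $r$ — executed in lockstep. The first step is to rewrite $d^-(r,\cdot)$ and $d^0(\cdot,r)$ as ordinary non-negative single-source shortest-path quantities. Let $G^{\ge0}$ be the subgraph of non-negative edges and let $(r,x)$ be the unique frozen negative outgoing edge of $r$, of weight $c$. (If $r$ is non-negative, everything is trivial: output $\Delta_r=0$ and $V_r^{out}=V_r^{in}=\emptyset$, since then $d^-(r,v)=d^0(r,v)\ge0=-\Delta_r$ and $d^0(v,r)\ge0=\Delta_r$.) Because only the first edge of a path counted by $d^-$ may be negative, $d^-(r,v)=\min\{d^0_{G^{\ge0}}(r,v),\;c+d^0_{G^{\ge0}}(x,v)\}$, which is exactly the multi-source shortest-path value from the two sources $\{(r,0),(x,c)\}$ in $G^{\ge0}$; a Dijkstra run from these two sources extracts vertices one at a time in nondecreasing order of $d^-(r,\cdot)$. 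Symmetrically, $d^0(v,r)=d^0_{G^{\ge0}}(v,r)$ is computed by a backward Dijkstra from $r$, extracting vertices in nondecreasing order of $d^0(\cdot,r)$. Both runs also record the distance values, which gives the final ``moreover'' clause for free.

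The second step is the interleaving that selects $\Delta_r$ and the two sets. I would extract vertices in \emph{rounds}: in round $t$, if the current heap tops of the two Dijkstras are $o_t$ (forward) and $i_t$ (backward) and satisfy $-o_t>i_t$, extract the next forward vertex and the next backward vertex; otherwise stop, setting $a:=t-1$. Output $V_r^{out}$ (resp.\ $V_r^{in}$) as the first $a$ vertices extracted on the forward (resp.\ backward) side, and set $\Delta_r=\max(-o_{a+1},i_a)$, with the conventions $o_{a+1}=+\infty$ if the forward heap is empty and $\Delta_r=-o_1$ in the degenerate case $a=0$ (which only arises when $r$ is non-negative). Verifying properties~(\ref{item:dijkstra-both-ways-1})--(\ref{item:dijkstra-both-ways-6}) is then a short case analysis. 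From the round-$a$ test we get $i_a<-o_a$, and from the stopping test $-o_{a+1}\le i_{a+1}$; combined with $o_a\le o_{a+1}$ and $i_a\le i_{a+1}$ this yields $-o_{a+1}\le\Delta_r\le-o_a$ and $i_a\le\Delta_r\le i_{a+1}$, which immediately give (\ref{item:dijkstra-both-ways-1})--(\ref{item:dijkstra-both-ways-4}) since the forward (resp.\ backward) non-members have value at least $o_{a+1}$ (resp.\ $i_{a+1}$) and the members have value at most $o_a$ (resp.\ $i_a$). Property~(\ref{item:dijkstra-both-ways-6}) is automatic since $|V_r^{out}|=|V_r^{in}|=a$. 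For~(\ref{item:dijkstra-both-ways-5}): if $\Delta_r=i_a$ then $o_a<-i_a=-\Delta_r$ makes (\ref{item:dijkstra-both-ways-1}) strict; if $\Delta_r=-o_{a+1}>i_a$ then $i_a<\Delta_r$ makes (\ref{item:dijkstra-both-ways-3}) strict; and the boundary subcase $-o_{a+1}=i_a$ is again handled by $o_a<-i_a$, making (\ref{item:dijkstra-both-ways-1}) strict.

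The last step is the running-time bound $O(N^2+N\log n)$ with $N:=|V_r^{out}|+|V_r^{in}|$, and this is the part I expect to be the main obstacle. We extract exactly $a$ vertices from each side, so $O(N\log n)$ accounts for all heap operations; the real difficulty is bounding the edge relaxations, since a priori an extracted vertex may have close to $n$ outgoing edges. The key structural fact I would use is \emph{prefix-closure}: if $v\in V_r^{out}$, then any prefix of a $d^-$-optimal $r\to v$ path ends at a vertex $u$ with $d^-(r,u)\le d^-(r,v)\le-\Delta_r$ (the suffix is non-negative), so $u\in V_r^{out}\cup\{r,x\}$; symmetrically for $V_r^{in}$. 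Hence each Dijkstra is, in effect, run on the subgraph induced by $V_r^{out}\cup V_r^{in}\cup\{r,x\}$, which has only $O(N^2)$ edges. The delicate point — the crux of the proof — is to implement the exploration so that we never spend time scanning edges that leave these sets before we know the sets; I would do this by using the opposite side's current heap top, which is a running lower bound on $\Delta_r$, to immediately discard (without expanding) any relaxed vertex whose tentative distance already certifies it lies outside the target set, so that both the heap size and the amount of adjacency scanned per extracted vertex stay $O(N)$.
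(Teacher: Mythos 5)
Your construction and the verification of conditions (1)--(6) are essentially the paper's: the same two interleaved Dijkstras (forward from $\{r,x\}$ on the non-negative subgraph, backward from $r$), the same termination test ``stop once forward-top $+$ backward-top $\ge 0$,'' and a choice of $\Delta_r$ sandwiched between the last-extracted and first-unextracted keys on each side; your case analysis for the strictness condition (5) is correct. The gap is in the running-time bound, exactly where you flag ``the crux.'' Your proposed mechanism---use the opposite heap top as a running threshold and ``immediately discard (without expanding) any relaxed vertex whose tentative distance already certifies it lies outside the target set''---does not deliver $O(N^2+N\log n)$ for two reasons. First, discarding a neighbor after computing its tentative distance still costs one unit per edge scanned, so processing a vertex of out-degree $\Theta(n)$ costs $\Theta(n)$ whether or not you discard; prefix-closure controls which vertices end up in the output sets, not which edges Dijkstra touches. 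Second, even if you pre-sort adjacency lists and stop scanning once the tentative key exceeds the threshold, the only threshold available online is $-i_t$ where $i_t$ is the current backward heap top, and $i_t\le\Delta_r$, so $-i_t\ge-\Delta_r$: the cutoff is too permissive. Concretely, if $i_1=0$ and the final $\Delta_r$ is large, every neighbor $v$ of $x$ with $d^-(r,v)\in[-\Delta_r,0]$ passes the test and gets scanned and inserted, and there can be $\Theta(n)$ such vertices even though none of them belongs to $V_r^{out}$.

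The paper closes this with a different device: \emph{lazy insertion}. Each vertex's adjacency list is pre-sorted by edge weight (a one-time global preprocessing, amortized over all $n$ calls to the lemma). When a vertex $u$ is processed it inserts into the heap only the key $d(u)+w(u,v)$ for its single smallest-weight \emph{unprocessed} out-neighbor $v$; only when that $v$ is itself processed does $u$ advance its pointer and insert the next unprocessed neighbor. Since $d(u)$ is fixed, the heap minimum still equals the true minimum of $d(u)+w(u,v)$ over all processed $u$ and unprocessed $v$, so Dijkstra's invariant is preserved without any threshold at all. Each processed vertex's pointer only ever advances past processed vertices, so it moves at most $|V_r^{out}|+1$ positions, giving $O(|V_r^{out}|^2)$ total edge examinations, one heap entry per processed vertex, and hence the claimed $O(N^2+N\log n)$ bound. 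You need this (or an equivalent device that never materializes more than $O(1)$ pending entries per processed vertex) to complete the proof; the threshold-based pruning alone does not suffice.
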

\begin{proof}
We run two Dijkstra's algorithms in parallel on two separate graphs, both with source $r$. The first graph $H_1$ consists of the non-negative edges of $G$ (at the time of freezing) together with the (possibly negative) edges $(r,v)$. (Recall that Dijkstra's algorithm still works from source $r$ if all negative edges are of the form $(r,v)$.) The second graph $H_2$ consists of the non-negative edges of $G$, but reversed. Recall that Dijkstra's algorithm maintains a subset of processed vertices, maintains the minimum distance $d$ to an unprocessed vertex, and has the guarantee that $d(r,v)\le d$ for all processed vertices.

We alternate processing a vertex from the two Dijkstra's algorithms. Let $d_i$ be the minimum distance to an unprocessed vertex in $H_i$, where $i\in\{1,2\}$. We terminate immediately on the condition $d_1+d_2\ge0$, and output $V_r^{out}$ and $V_r^{in}$ as the processed vertices on the first and second graph, respectively. Suppose that the last graph processed was $H_i$. The algorithm sets $\Delta_r=d_2$ if $i=1$ and $\Delta_r=-d_1$ if $i=2$.

Suppose that the last graph processed was $H_1$. Let $d'_1<d_1$ be the \emph{maximum} distance to a \emph{processed} vertex on the first graph. Since the algorithm did not terminate before processing that vertex, we have $d'_1+d_2<0$. We now verify
 \begin{enumerate}
 \item All vertices $v\in V_r^{out}$ satisfy $d^-(r,v)=d_{H_1}(r,v)\le d'_1<-d_2=-\Delta_r$ by definition of $d'_1$. This also fulfills condition~(\ref{item:dijkstra-both-ways-5}).
 \item All vertices $v\notin V_r^{out}$ satisfy $d^-(r,v)=d_{H_1}(r,v)\ge d_1\ge-d_2=-\Delta_r$ by definition of $d_1$.
 \item All vertices $v\in V_r^{in}$ satisfy $d^0(v,r)=d_{H_2}(r,v)\le d_2=-\Delta_r$ as guaranteed of Dijkstra.
 \item All vertices $v\notin V_r^{in}$ satisfy $d^0(v,r)=d_{H_2}(r,v)\ge d_2=-\Delta_r$ as guaranteed of Dijkstra.
 \end{enumerate}
The case when the last graph processed was $H_2$ is similar. Let $d'_2<d_2$ be the maximum distance to a processed vertex on the second graph. Since the algorithm did not terminate before processing that vertex, we have $d_1+d'_2<0$. We now verify
 \begin{enumerate}
 \item All vertices $v\in V_r^{in}$ satisfy $d^0(v,r)=d_{H_2}(r,v)\le d'_2<-d_1=-\Delta_r$ by definition of $d'_2$. This also fulfills condition~(\ref{item:dijkstra-both-ways-5}).
 \item All vertices $v\notin V_r^{in}$ satisfy $d^0(v,r)=d_{H_2}(r,v)\ge d_2\ge-d_1=-\Delta_r$ by definition of $d_2$.
 \item All vertices $v\in V_r^{out}$ satisfy $d^-(r,v)=d_{H_1}(r,v)\le d_1=-\Delta_r$ as guaranteed of Dijkstra.
 \item All vertices $v\notin V_r^{out}$ satisfy $d^-(r,v)=d_{H_1}(r,v)\ge d_1=-\Delta_r$ as guaranteed of Dijkstra.
 \end{enumerate}
Finally, we discuss the running time. Naively, Dijkstra's algorithm on the first graph runs in $O(|V_r^{out}|n+|V_r^{out}|\log n)$ time using a Fibonacci heap: each of the $|V_r^{out}|$ processed vertices inserts its outgoing neighbors into the heap, and we pop from the heap once for each processed vertex. We can speed up this process to $O(|V_r^{out}|^2+|V_r^{out}|\log n)$ time as follows: instead of inserting all outgoing neighbors immediately after processing a vertex, perform the insertions lazily. That is, only insert the unprocessed neighbor with the smallest outgoing edge, and once that neighbor is processed, insert the next one. Here, we assume that each vertex has its outgoing edges initially sorted by weight. With this speedup, only $O(|V_r^{out}|^2)$ edges are ever checked by Dijkstra's algorithm, and the running time follows. The analysis on the second graph is identical.
\end{proof}

The algorithm runs \Cref{lem:dijkstra-both-ways} for \emph{all} negative vertices $r$. We now show that the running time is bounded, assuming the betweenness reduction guarantee.
\begin{lemma}\label{lem:total-size-bound}
Under the betweenness reduction guarantee of \Cref{lem:betweenness-reduction}, we have $\sum_{r\in V}(|V_r^{out}|+|V_r^{in}|)^2\le kn^2/b$ and $\sum_{r\in V}(|V_r^{out}|+|V_r^{in}|)\le O(kn/\sqrt b)$.
\end{lemma}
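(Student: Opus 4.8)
The plan is to tie $|V_r^{out}|$ and $|V_r^{in}|$ to betweenness and then invoke \Cref{lem:betweenness-reduction} through a double-counting argument. First I would show that for every negative vertex $r$, every $s\in V_r^{in}$, and every $t\in V_r^{out}$, the vertex $r$ is between the ordered pair $(s,t)$: properties~(\ref{item:dijkstra-both-ways-3}) and~(\ref{item:dijkstra-both-ways-1}) of \Cref{lem:dijkstra-both-ways} give $d^0(s,r)\le\Delta_r$ and $d^-(r,t)\le-\Delta_r$, and property~(\ref{item:dijkstra-both-ways-5}) makes one of these strict, so $d^0(s,r)+d^-(r,t)<0$. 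Hence $r$ is between at least $|V_r^{in}|\cdot|V_r^{out}|$ ordered pairs in $V\times V$; non-negative vertices $r$ have $V_r^{out}=V_r^{in}=\emptyset$ and contribute nothing, so only the (at most $k$) negative vertices matter.

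Next I would double count the triples $(s,t,r)\in V\times V\times V$ for which $r$ is between $(s,t)$: summing over $r$ this count is at least $\sum_r|V_r^{in}|\cdot|V_r^{out}|$ by the previous step, while summing over pairs $(s,t)$ and using the betweenness reduction guarantee it is at most $n^2\cdot(k/b)$. Therefore $\sum_r|V_r^{in}|\cdot|V_r^{out}|\le kn^2/b$.

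It then remains to convert this product bound into the two claimed bounds, and this is exactly where property~(\ref{item:dijkstra-both-ways-6}), $\big||V_r^{out}|-|V_r^{in}|\big|\le1$, does the work. Writing $a_r=|V_r^{out}|$ and $b_r=|V_r^{in}|$, the inequality $|a_r-b_r|\le1$ yields $a_r^2\le a_rb_r+a_r$ and $b_r^2\le a_rb_r+b_r$, hence $(a_r+b_r)^2\le4a_rb_r+2(a_r+b_r)$; summing over $r$ gives $\sum_r(a_r+b_r)^2\le 4kn^2/b+2\sum_r(a_r+b_r)$. Since at most $k$ terms are nonzero, Cauchy--Schwarz gives $\sum_r(a_r+b_r)\le\sqrt{k\sum_r(a_r+b_r)^2}$; substituting this into the previous inequality produces a quadratic inequality in $\sum_r(a_r+b_r)$ whose solution is $\sum_r(a_r+b_r)=O(kn/\sqrt b)$, and feeding that back gives $\sum_r(a_r+b_r)^2=O(kn^2/b)$. (Here I use that we may assume $b\le n$, since for larger $b$ the betweenness reduction of \Cref{lem:betweenness-reduction} already samples every negative vertex.)

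I expect the only genuine obstacle to be this final conversion: the double-counting bound is weak precisely when one of $V_r^{out},V_r^{in}$ is empty, since then $|V_r^{in}|\cdot|V_r^{out}|=0$ while $(|V_r^{out}|+|V_r^{in}|)^2$ need not be, and it is property~(\ref{item:dijkstra-both-ways-6}) that rescues this by forcing the nonempty set to have size at most $1$. The slight circularity of using a first-moment bound to clean up the second-moment bound is only cosmetic, handled by the Cauchy--Schwarz/quadratic step above, and the constant factors lost there are harmless since they are absorbed into the $O(kn^2/\sqrt b)$ running time in \Cref{thm:single-step}.
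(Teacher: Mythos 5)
Your proposal is correct and follows essentially the same route as the paper: the identical charging/double-counting argument yields $\sum_r|V_r^{in}|\cdot|V_r^{out}|\le kn^2/b$, and condition~(\ref{item:dijkstra-both-ways-6}) of \Cref{lem:dijkstra-both-ways} is then used to convert this product bound into the two claimed sums. The only difference is the final conversion, where the paper is slightly more direct---it uses the pointwise bound $|V_r^{out}|+|V_r^{in}|\le O(\sqrt{|V_r^{in}|\cdot|V_r^{out}|+1})$ to get the second-moment bound immediately (the $+1$ playing exactly the role of your empty-set edge case, under the same implicit assumption that $b$ is not too large) and then a single application of Jensen for the first moment, whereas your coupled Cauchy--Schwarz/quadratic-inequality step reaches the same $O(\cdot)$ bounds with a bit more algebra.
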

\begin{proof}
Let $N\subseteq V$ be the set of negative vertices, and consider a negative vertex $r\in N$. By the guarantees of \Cref{lem:dijkstra-both-ways}, we have $d^-(r,t)\le-\Delta_r$ for all $t\in V_r^{out}$ and $d^0(s,r)\le\Delta_r$ for all $s\in V_r^{in}$, which means that $d^1(s,r)+d^1(r,t)\le d^0(s,r)+d^-(r,t)\le0$. In fact, condition~(\ref{item:dijkstra-both-ways-5}) of \Cref{lem:dijkstra-both-ways} ensures that the inequality $\le0$ is actually strict. That is, $r$ is between $(s,t)$ for all $s\in V_r^{in}$ and $t\in V_r^{out}$. We \emph{charge} the vertex $r$ to all $|V_r^{in}|\cdot|V_r^{out}|$ such pairs.

By the betweenness reduction guarantee of \Cref{lem:betweenness-reduction}, each pair $(s,t)$ is charged at most $k/b$ times, so the total number of charges is at most $kn^2/b$. We conclude that
\[ \sum_{r\in N}|V_r^{in}|\cdot|V_r^{out}|\le\frac{kn^2}b .\]
Since $\big||V_r^{out}|-|V_r^{in}|\big|\le1$ from condition~(\ref{item:dijkstra-both-ways-6}) of \Cref{lem:dijkstra-both-ways}, we have $|V_r^{out}|+|V_r^{in}|\le O(\sqrt{|V_r^{in}|\cdot|V_r^{out}|+1})$, where the $+1$ handles the edge case $|V_r^{in}|=0$ or $|V_r^{out}|=0$. Therefore,
\[ \sum_{r\in N}(|V_r^{out}|+|V_r^{in}|)^2\le\sum_{r\in N}O(\sqrt{|V_r^{in}|\cdot|V_r^{out}|+1})^2\le \sum_{r\in N}O(|V_r^{in}|\cdot|V_r^{out}|+1)\le O\left(\frac{kn^2}b\right) .\]
Dividing by $k$ and applying Jensen's inequality,
\[ \left(\frac1k\sum_{r\in N}(|V_r^{out}|+|V_r^{in}|)\right)^2\le\frac1k\sum_{r\in N}(|V_r^{out}|+|V_r^{in}|)^2\le O\left(\frac{n^2}b\right)  ,\]
and rearranging finishes the proof.
\end{proof}

By \Cref{lem:dijkstra-both-ways,lem:total-size-bound}, we can compute sets $V_r^{out}$ and $V_r^{in}$ for all $r\in V$ in time $O(kn^2/b+kn\log n/\sqrt b)$. Next, the algorithm \emph{shortcuts} the graph by adding \emph{Steiner vertices} with the following edges to the graph:
 \begin{enumerate}
 \item For each negative vertex $r\in V$, create a new vertex $\tilde r$.
 \item For each $r\in V$, $u\in V_r^{out}\cup\{r\}$, and $v\in N^{out}(u)$, add the edge $(\tilde r,v)$ of weight $d^-(r,u)+\Delta_r+w(u,v)$ \emph{only if it is non-negative},
 \item For each $r\in V$, $v\in V_r^{in}\cup\{r\}$, and $u\in N^{in}(v)$, add the edge $(u,\tilde r)$ of weight $w(u,v)+d^0(v,r)-\Delta_r$ \emph{only if it is non-negative},
 \item For each negative vertex $r\in V$, $u\in V_r^{out}\cup\{r\}$, and $v\in N^{out}(u)$, add the edge $(r,v)$ of weight $d^-(r,u)+w(u,v)$, and\label{item:shortcut-4}
 \item For each negative edge $(r,r')$, $v\in V_r^{in}\cup\{r\}$, and $u\in N^{in}(v)$, add the edge $(u,r')$ of weight $w(u,v)+d^0(v,r)+w(r,r')$ \emph{only if $u$ is a negative vertex}.\label{item:shortcut-5}
 \end{enumerate}
For step~(\ref{item:shortcut-5}) above, recall that each vertex $r$ has at most one negative outgoing edge $(r,r')$. The total number of edges added is at most $2\sum_{r\in V}(|V_r^{out}|+1+|V_r^{in}|+1)n=O(kn^2/\sqrt b)$. In particular, the running time of this step is $O(kn^2/\sqrt b)$. We now claim that the added edges do not decrease distances in $G$. For any $r\in V$, note that each edge $(u,\tilde r)$ has weight at least $d(u,r)-\Delta_r$, and each added edge $(\tilde r,v)$ has weight at least $d(r,v)+\Delta_r$, so any $u\to \tilde r\to v$ path through $\tilde r$ has weight at least $d(u,r)+d(r,v)\ge d(u,v)$, as promised.

Let $G'$ be the resulting graph, where all added negative edges are considered new negative edges. Property~(\ref{item:single-step-3}) of \Cref{thm:single-step} holds by construction, and we have shown above that property~(\ref{item:single-step-1}) holds. We now show property~(\ref{item:single-step-2}), that adding these edges sufficiently reduces the number of negative edges along shortest $h$-negative hop paths.

\begin{lemma}\label{lem:shortcut}
Consider any $s,t\in V$ and an $(s,t)$-path $P$ with $h$ negative edges. After this shortcutting step, there is an $(s,t)$-path with at most $h-\lfloor h/3\rfloor$ negative edges, and whose weight is at most that of $P$.
\end{lemma}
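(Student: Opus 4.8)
My plan is to take the path $P = (s = x_0, x_1, \ldots, x_\ell = t)$ and group its negative edges into consecutive blocks, then argue that each block of three negative edges can be collapsed into an $s \to \tilde r \to t$-style shortcut (or one of the $r$-incident shortcut edges) that traverses those three negative edges using at most two negative edges instead. Concretely, suppose the negative edges of $P$ appear in order as $e_1, e_2, \ldots, e_h$. I would partition them into $\lfloor h/3 \rfloor$ groups of three (plus a leftover of size $\le 2$), and for each group $\{e_{3j+1}, e_{3j+2}, e_{3j+3}\}$ show that the portion of $P$ from the tail of $e_{3j+1}$ to... — actually the cleaner framing: let the middle negative edge $e_{3j+2}$ of the block be $(r, r')$ where $r$ is a negative vertex. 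The segment of $P$ leading up to $r$, after its last preceding negative edge $e_{3j+1}$, is a non-negative-except-first path into $r$; its terminal vertex before $r$ is some $u$, and I want $u$ (or the vertex just before it) to witness $d^0(\cdot, r)$ or $d^1(\cdot, r)$. Symmetrically the segment leaving $r'$ up to the next negative edge $e_{3j+4}$ is a non-negative path, and the first edge out of $r'$ lands at some $v$ witnessing $d^-(r, \cdot)$ through $r'$... I'd use the definitions of $V_r^{in}, V_r^{out}$ and the case split from Lemma~\ref{lem:dijkstra-both-ways}(\ref{item:dijkstra-both-ways-5}).

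The key mechanism is the threshold $\Delta_r$. For a negative vertex $r$ on $P$ carrying the middle negative edge $e = (r, r')$ of a block, consider the sub-path of $P$ that enters $r$: trace back from $r$ until just before the previous negative edge in $P$; call the first vertex after that negative edge $a$, so the sub-path $a \rightsquigarrow r$ has only non-negative edges and $d^0(a, r)$ bounds its weight. Two cases: if $a \in V_r^{in}$ then $d^0(a,r) \le \Delta_r$, and I use a step~(\ref{item:shortcut-4})/(\ref{item:shortcut-5}) shortcut edge from the vertex $u$ preceding $a$ (which is negative, being the tail of a negative edge) directly to $r'$ with weight $w(u,a) + d^0(a,r) + w(r,r')$; this replaces the negative edge into $a$, the path $a \rightsquigarrow r$, and the negative edge $(r,r')$ by a single edge, saving the two negative edges $e_{3j+1}$-into-$a$ and $(r,r')$ while keeping weight at most that of the traversed portion. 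If instead $a \notin V_r^{in}$, then $d^0(a,r) \ge \Delta_r$, and by the case-split I switch to the $V_r^{out}$ side: the sub-path of $P$ leaving $r$ — through $e = (r,r')$ and onward with non-negative edges to the first vertex $b$ reached after the next negative edge — has $d^-(r, b') \le -\Delta_r$ for the relevant vertex $b'$, so I can use a step~(2)/(4) shortcut $(\tilde r, \cdot)$ or $(r, \cdot)$ edge. The arithmetic $\Delta_r$ cancels against $-\Delta_r$ exactly, so the composed shortcut has weight $\le$ the weight of the replaced sub-path, and I should double-check non-negativity conditions are not an obstruction because the discarded-and-replaced edge is genuinely non-negative after composing.

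I expect the main obstacle to be the bookkeeping of \emph{which} negative edges get absorbed and ensuring the blocks are genuinely disjoint so the $\lfloor h/3\rfloor$ savings add up without double-counting — in particular, making sure that when a block's shortcut absorbs an incoming negative edge $e_{3j+1}$, that same edge is not the middle edge of an adjacent block, and that the new edge introduced (the shortcut) is itself non-negative or, if negative, is counted among the one remaining negative edge of the block. The reason a block of three negatives drops to two (not one) is that the shortcut edge through $\tilde r$ is created \emph{only if non-negative}, so in the worst case the middle negative edge survives in the form of the new negative shortcut edge into $r'$, but the flanking two are killed; hence the $h - \lfloor h/3 \rfloor$ bound rather than $\lceil 2h/3 \rceil$ or better. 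I would also need to handle the boundary blocks (where $s$ or $t$ sits inside a would-be block, or where fewer than three negatives remain) by simply leaving those $\le 2$ edges untouched, which is consistent with the stated bound. Finally, I'd verify that the replacement path is still a genuine walk in $G'$ (all intermediate vertices and edges exist) and invoke the weight inequality established just before the lemma ($u \to \tilde r \to v$ paths have weight $\ge d(u,r) + d(r,v)$) in the forward direction to conclude weight does not increase.
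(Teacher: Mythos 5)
Your overall plan --- partition the negative edges of $P$ into $\lfloor h/3\rfloor$ disjoint groups of three consecutive negative edges and save one negative edge per group --- is exactly the paper's, and your first case (the vertex $a=u'$ entering from the first negative edge of the block satisfies $d^0(u',r)<\Delta_r$, hence lies in $V_r^{in}$, and the step~(\ref{item:shortcut-5}) edge $(u,r')$ collapses two negative edges into one) matches the paper's Case~1. The genuine gap is your claimed dichotomy: you assert that if $a\notin V_r^{in}$ then the out-side automatically works, i.e.\ ``the sub-path of $P$ leaving $r$ \ldots has $d^-(r,b')\le-\Delta_r$ for the relevant vertex $b'$.'' This does not follow. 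Nothing prevents $d^0(u',r)\ge\Delta_r$ \emph{and} $d^-(r,v)\ge-\Delta_r$ from holding simultaneously ($P$ is an arbitrary path, and condition~(\ref{item:dijkstra-both-ways-5}) of \Cref{lem:dijkstra-both-ways} is a statement about strictness of one of the two set-defining inequalities, not a guarantee that every path falls on one side or the other). The paper therefore has a third case, which is the crux of the construction and the only place the Steiner vertex $\tilde r$ is actually used: one walks along the $u'\leadsto r$ segment to the \emph{last} vertex $u''$ with $d^0(u'',r)\ge\Delta_r$ and along the $r'\leadsto v$ segment to the \emph{first} vertex $v''$ with $d^-(r,v'')\ge-\Delta_r$; at these threshold-crossing points the edges $(u'',\tilde r)$ and $(\tilde r,v'')$ exist (by steps~(2)--(3)) and are non-negative, the $\pm\Delta_r$ terms cancel, and the two-hop path $u''\to\tilde r\to v''$ replaces the segment containing only the middle negative edge $(r,r')$. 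Your proposal gestures at ``$u\to\tilde r\to v$-style shortcuts'' but never supplies this argument, and the inequality you cite from just before the lemma (shortcut paths have weight at least the true distance) points in the wrong direction for it --- you need the weight of the replacement to be \emph{at most} that of the segment of $P$, which requires the explicit choice of $u''$ and $v''$.

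A secondary inaccuracy in your accounting: in the two ``single shortcut edge'' cases only one flanking negative edge is absorbed together with the middle one (two negatives become one negative; the third negative edge of the group survives untouched), while in the Steiner-vertex case the middle negative edge is eliminated and both flanking ones survive. Either way exactly one negative edge is saved per group, giving $h-\lfloor h/3\rfloor$, but your description (``the flanking two are killed'' with the middle one surviving as a negative shortcut) does not correspond to any of the three cases. Your worry about overlapping blocks is a non-issue: the groups are disjoint sets of three consecutive negative edges, so the segments shortcut for distinct groups are edge-disjoint portions of $P$.
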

\begin{proof}
Consider the negative edges of $P$ in the order from $s$ to $t$. Create $\lfloor h/3\rfloor$ disjoint groups of three consecutive negative edges; we wish to shortcut each group using at most two negative edges, so that the shortcut path has at most $h-\lfloor h/3\rfloor$ negative edges.

Consider a group of three consecutive negative edges $(u,u')$, $(r,r')$, and $(v,v')$, in that order. Recall that \Cref{lem:dijkstra-both-ways} computes a value $\Delta_r$ for vertex $r$. There are three cases; see \Cref{fig:shortcut2} for a visual reference.
\begin{figure}\centering
\includegraphics[scale=.65]{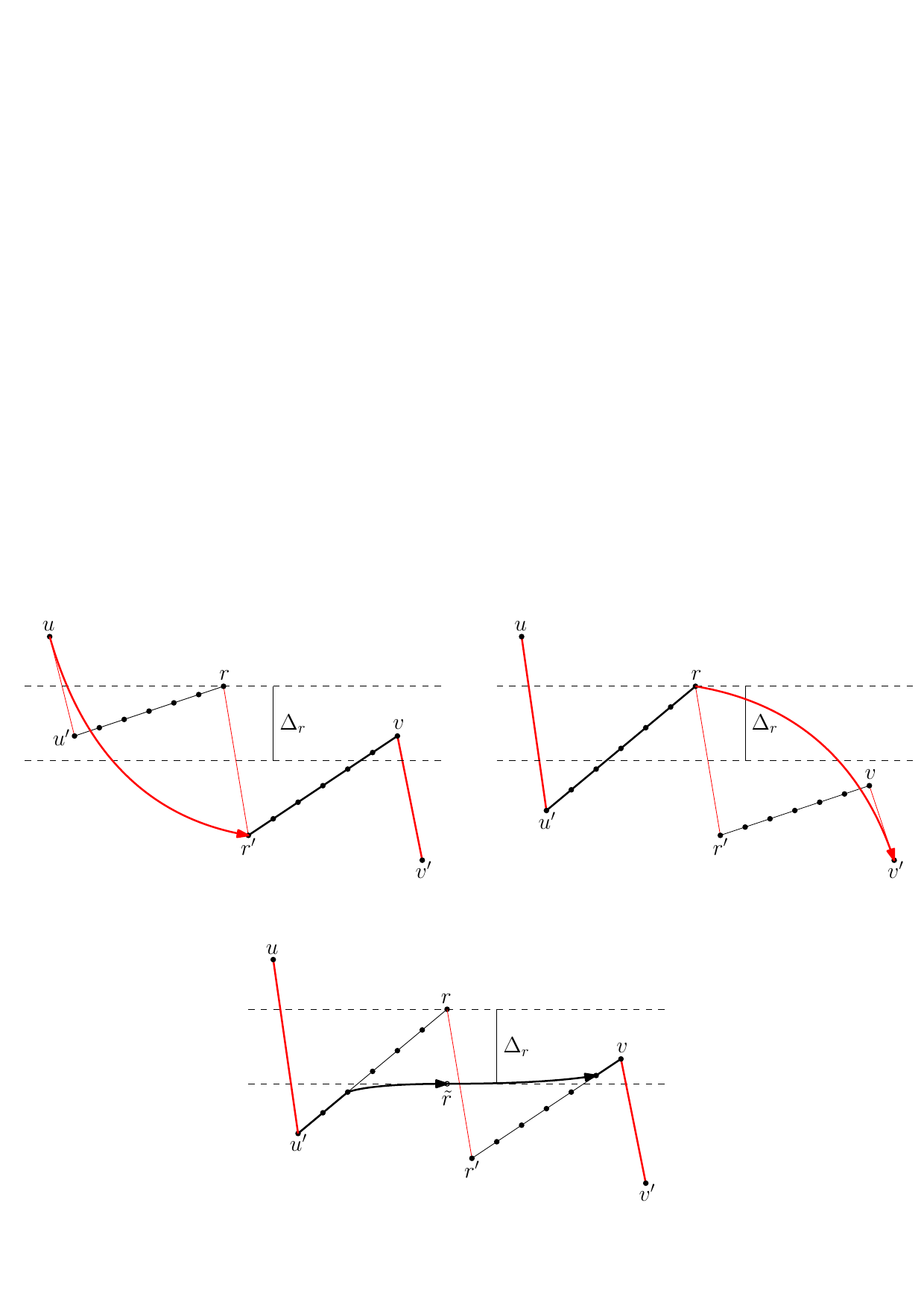}
\caption{The three cases of shortcutting in the proof of \Cref{lem:shortcut}, where the height indicates cumulative distance to each vertex. Negative edges are marked in red, and the new shortcut path is marked in bold.}\label{fig:shortcut2}
\end{figure}
 \begin{enumerate}
 \item $d^0(u',r)<\Delta_r$. In this case, condition~(\ref{item:dijkstra-both-ways-4}) of \Cref{lem:dijkstra-both-ways} ensures that $u'\in V_r^{in}$, so step~(\ref{item:shortcut-5}) adds the edge $(u,r')$ of weight $w(u,u')+d^0(u',r)+d(r,r')$, which is at most the weight of the $u\leadsto r'$ segment of $P$ since the negative edges $(u,u')$ and $(r,r')$ are consecutive along $P$. Therefore, we can shortcut the $u\leadsto r'$ segment of $P$, which has two negative edges, by the single (negative) edge $(u,r')$.
 \item $d^-(r,v)<-\Delta_r$. In this case, condition~(\ref{item:dijkstra-both-ways-2}) of \Cref{lem:dijkstra-both-ways} ensures that $v\in V_r^{out}$, so step~(\ref{item:shortcut-4}) adds the edge $(r,v')$ of weight $d^-(r,v)+w(v,v')$, which is at most the weight of the $r\leadsto v'$ segment of $P$ since the negative edges $(r,r')$ and $(v,v')$ are consecutive along $P$. Therefore, we can shortcut the $r\leadsto v'$ segment of $P$, which has two negative edges, by the single (negative) edge $(r,v')$. 
 \item $d^0(u',r)\ge\Delta_r$ and $d^-(r,v)\ge-\Delta_r$. Consider the $u'\leadsto r$ segment of $P$, and let $u''$ be the last vertex on this segment with $d^0(u'',r)\ge\Delta_r$. Then, either $u''=r$ or the vertex after $u''$ is in $V_r^{in}$, and in both cases, the edge $(u'',\tilde r)$ is added with weight $d^0(u'',r)-\Delta_r\ge0$. Similarly, consider the $r'\leadsto v$ segment of $P$, and let $v''$ be the first vertex on this segment with $d^-(r,v'')\ge-\Delta_r$. Then, either $v''=r'$ or the vertex before $v''$ is in $V_r^{out}$, and in both cases, the edge $(\tilde r,v'')$ is added with weight $d^-(r,v'')+\Delta_r\ge0$. The $u''\to\tilde r\to v''$ path consists of two non-negative edges and has weight $d^0(u'',r)+d^-(r,v'')$, which is at most the weight of the $u''\leadsto r\leadsto v''$ segment of $P$, so we can shortcut this segment of $P$, which contains the negative edge $(r,r')$, by the two-hop path $u''\to\tilde r\to v''$ with no negative edges.\qedhere
 \end{enumerate}
\end{proof}
With the three properties established, this concludes the proof of \Cref{thm:single-step}.

\subsection{Solving the Recurrence}\label{sec:recurrence}

Finally, we solve the running time recurrence from \Cref{cor:recurrence} to $O(\sqrt k(n+k\log k)^2\log^{2.5}n)$. Let constant $C>0$ be large enough that all of the hidden constants in the statement of \Cref{thm:single-step} can be replaced by $C$. First, if $k\le4C^3\log^3n$, then naively computing $k$-negative hop shortest paths suffices, since the running time $O(kn^2)$ meets the desired bound. Otherwise, we apply \Cref{thm:single-step} with $b=\frac k{4C^3\log^3n}$, so that the algorithm makes at most $C\log k$ recursive calls on graphs with at most $n+Ck\log k$ vertices and $Cb\log n=\frac k{4C^2\log^2n}$ negative vertices, and runs in at most $Ck\log n\cdot(n+Ck\log k)^2/\sqrt b=2C^{2.5}\sqrt k(n+Ck\log k)^2\log^{2.5}n$ additional time. On recursion level $i\ge1$, there are at most $(C\log n)^i$ recursive instances, each with at most $n+iCk\log k\le i(n+Ck\log k)$ vertices and running time at most
\[ 2C^{2.5}\sqrt{\frac k{(4C^2\log^2n)^i}}(i(n+Ck\log k))^2\log^{2.5}n=\frac{2i^2C^{2.5}\sqrt k(n+Ck\log k)^2\log^{2.5}n}{(2C\log n)^i}, \]
so the total running time at level $i$ is at most
\[ (C\log n)^i\cdot\frac{2i^2C^{2.5}\sqrt k(n+Ck\log k)^2\log^{2.5}n}{(2C\log n)^i}=\frac{i^2}{2^i}\cdot2C^{2.5}\sqrt k(n+Ck\log k)^2\log^{2.5}n. \]
Since the series $i^2/2^i$ converges, the total running time is $O(\sqrt k(n+k\log k)^2\log^{2.5}n)$. Finally, \Cref{thm:main} follows from the trivial bound $k\le n$.

\section{Reducing Betweenness Using Shortest Path}

In this section, we generalize our recursive betweenness reduction to prove \Cref{thm:main-sparse}. We first show that betweenness reduction, which is the bottleneck of previous algorithms, can be reduced to the SSSP problem on a larger graph with fewer negative edges.

\begin{lemma}\label{lem:reduction_bw_to_sssp}
    There is an algorithm that, given a graph $G$ and two parameters $h,b$, either declares that $G$ has a negative cycle or produces a valid potential $\phi$ such that with high probability,
    \[\left|\{x\in V\mid d^h_\phi(u,x)+d^h_\phi(x,v)<0\}\right|\leq n/b\quad\forall u,v\in V.\]
    The algorithm uses one oracle call to a SSSP instance with $O(hm)$ edges and $O(b\log n)$ negative edges, and $O(hm)$ additional time.
\end{lemma}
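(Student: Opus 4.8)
The plan is to mirror the proof of \Cref{lem:betweenness-reduction}, replacing the $d^0/d^-$ split by the general $h$-negative-hop distance and replacing the single-copy auxiliary graph $H$ by a layered one. First, apply \Cref{lem:one-negative-outgoing-edge} so that every negative vertex has exactly one negative outgoing edge, and sample a set $S$ of $\Theta(b\log n)$ negative vertices uniformly at random. The goal is to produce, via a single SSSP call, a valid potential $\phi$ such that for every sampled $x\in S$ and every $u,v\in V$ we have $d^h_\phi(u,x)+d^h_\phi(x,v)\ge 0$. Since negative edges are frozen, $d^h_\phi(a,b)=d^h(a,b)+\phi(a)-\phi(b)$, so this is equivalent to $\phi(v)-\phi(u)\le d^h(u,x)+d^h(x,v)$ for all such triples, i.e.\ to $\phi$ being a valid potential for the graph $G^\star$ on $V$ whose edge $(u,v)$ has weight $\min_{x\in S}\bigl(d^h(u,x)+d^h(x,v)\bigr)$, together with all non-negative edges of $G$. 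A negative cycle in $G^\star$ unfolds into a negative closed walk in $G$ (each $G^\star$-edge is witnessed by an actual walk through a sampled vertex), which decomposes into negative cycles, so $G^\star$ has a negative cycle iff $G$ does; thus if the oracle reports a negative cycle we report one too, and otherwise we convert the oracle's output into $\phi$.

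To realize this with one SSSP call on a graph with $O(hm)$ edges and $O(b\log n)$ negative edges, I would build an auxiliary graph $\Gamma$ by taking $h+1$ layered copies of the non-negative part of $G$ — within layer $i$, each non-negative edge $(u,v)$ becomes $(u,i)\to(v,i)$ with its original weight — together with the negative outgoing edges of the sampled vertices realized as (the only) cross-layer edges, and a super-source $s^\star$ joined to every layer-$0$ vertex by a zero-weight edge; then set $\phi(v)=\min_{0\le i\le h}d_\Gamma\bigl(s^\star,(v,i)\bigr)$. The layer structure lets a path use up to $h$ negative hops, exactly as in the hybrid Dijkstra--Bellman-Ford computation of $d^h$, while the negative edges are confined to sampled vertices, so there are only $O(b\log n)$ of them. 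One then checks that $\phi$ is a valid potential for $G$ (the all-layers copies of the non-negative edges provide the needed slackness) and that for $x\in S$ the $\Gamma$-path that first reaches $x$ and then leaves $x$ certifies the inequality $\phi(v)-\phi(u)\le d^h(u,x)+d^h(x,v)$ for all $u,v$.

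Finally, I would run the betweenness argument essentially verbatim. Fix $(u,v)$ and order the vertices $x$ by $d^h(u,x)+d^h(x,v)$, which is the same order, up to the additive shift $\phi(u)-\phi(v)$, as ordering by $d^h_\phi(u,x)+d^h_\phi(x,v)$; let $T_{u,v}$ be the $\lceil n/b\rceil$ vertices with smallest such value. If the pair $(u,v)$ violated the conclusion, then $\{x:d^h_\phi(u,x)+d^h_\phi(x,v)<0\}$ would be a prefix of this order of size $>n/b$ and hence would contain all of $T_{u,v}$. Since $S$ hits each fixed set $T_{u,v}$ with probability at least $1-(1-1/b)^{\Theta(b\log n)}\ge 1-n^{-\Omega(1)}$, a union bound over the $n^2$ pairs shows that with high probability some $x_0\in S\cap T_{u,v}$ exists, and then $d^h_\phi(u,x_0)+d^h_\phi(x_0,v)\ge 0$ by the defining property of $\phi$ — a contradiction. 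The running time is dominated by the single oracle call, plus $O(hm)$ time to assemble $\Gamma$ and read off $\phi$.

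The step I expect to be the main obstacle is the construction and analysis of $\Gamma$: encoding the quantity $\min_{x\in S}\bigl(d^h(u,x)+d^h(x,v)\bigr)$, in which each half may use negative hops at arbitrary (including non-sampled) vertices, through a graph whose negative edges are restricted to the $O(b\log n)$ sampled vertices and which still has only $O(hm)$ edges. One must ensure that the layered gadget around each sampled vertex genuinely certifies $\phi(v)-\phi(u)\le d^h(u,x)+d^h(x,v)$ for \emph{all} pairs simultaneously, not merely for pairs that happen to be ``close'' to $x$, and that the negative-edge budget is not secretly inflated by a factor of $h$ from replicating cross-layer edges; getting both of these right at once is where the real care is needed.
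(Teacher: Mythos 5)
There is a genuine gap, and it is precisely the step you flag as the ``main obstacle'': your auxiliary graph $\Gamma$ does not encode the quantity it needs to. In $\Gamma$ the only cross-layer edges are the negative outgoing edges of sampled vertices, so every path from $s^\star$ to any copy of $v$ corresponds to a walk in $G$ whose negative edges all emanate from $S$. Consequently $\phi(v)=\min_i d_\Gamma(s^\star,(v,i))$ only certifies inequalities along such restricted walks; it does \emph{not} certify $\phi(v)-\phi(u)\le d^h(u,x)+d^h(x,v)$ for $x\in S$, because the paths achieving $d^h(u,x)$ and $d^h(x,v)$ may use up to $h$ negative edges at arbitrary, non-sampled vertices, and those edges simply do not exist in $\Gamma$. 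Without that certification the whole hitting-set argument collapses. The paper's resolution is different in exactly this respect: it keeps \emph{every} negative edge of $G$ as a cross-layer edge but shifts each of their weights up by a large constant $M$ so that all cross-layer edges become non-negative, and it concentrates the entire deficit into a single edge of weight $-2hM$ per sampled vertex $x$, going from the copy of $x$ at the end of a forward chain of $h$ layers to the copy of $x$ at the start of a backward chain of $h$ layers (so $2h+1$ copies in total, with $G_0$ serving as both the source and sink layer). Along any $u_0\to x^{\mathrm{forward}}_h\to x^{\mathrm{backward}}_h\to v_0$ route the $2h$ additions of $M$ cancel the $-2hM$, so the route has weight exactly $d^h(u,x)+d^h(x,v)$, all $h$-hop paths through arbitrary negative edges are representable, and the only negative edges are the $O(b\log n)$ compensating ones. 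Your single chain of $h+1$ layers with a super-source has no analogue of this cancellation and no separation between the ``into $x$'' and ``out of $x$'' halves.

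A secondary but real error: you sample $S$ from the \emph{negative} vertices. For $h$-hop betweenness, a vertex $x$ with no negative outgoing edge can still satisfy $d^h(u,x)+d^h(x,v)<0$ (e.g.\ $d^h(u,x)$ very negative, $d^h(x,v)$ positive), so the $\lceil n/b\rceil$ lowest-ranked vertices $T_{u,v}$ may consist entirely of non-negative vertices, which your sample then misses with probability $1$. The restriction to negative vertices is correct for the specialized betweenness of \Cref{lem:betweenness-reduction} (where $d^-(x,v)<0$ forces $x$ to be negative) but not here; the paper samples $S$ uniformly from all of $V$. The remaining pieces of your argument --- the reduction of the guarantee to $\phi(v)-\phi(u)\le\min_{x\in S}\bigl(d^h(u,x)+d^h(x,v)\bigr)$, the lifting of a negative cycle in the auxiliary graph back to a negative closed walk in $G$, and the order-preservation/hitting-set argument --- match the paper and are fine.
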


We first review the previously used $\tilde O(mhb+nb^2)$ betweenness reduction algorithm from \cite{fineman2024single}, which is similar to \Cref{lem:betweenness-reduction}, except that $h$-hop distances are used here. The algorithm first samples a set $S$ of $\Theta(b\log n)$ random vertices. Then, for each vertex $w\in S$, they compute the $h$-hop shortest paths $d^h(\cdot,
w)$ and $d^h(w,\cdot)$, that is, the distances from and to $w$, which takes $\tilde O(mhb)$ time. Next, the algorithm constructs an auxiliary graph that contains a copy of $G$'s vertices and $|S|$ additional vertices corresponding to the vertices in $S$. For each vertex $w\in S$, we add an additional vertex $w'$ in this graph, and for each $u\in V$, we add an edge from $u$ to $w'$ with weight $d^h(u,w)$, and an edge from $w'$ to $u$ with weight $d^h(w,u)$. Finally, we compute shortest path distances on this auxiliary graph, which contains $O(nh\log n)$ edges and $O(h\log n)$ negative vertices, in $\tilde O(nh^2)$ time, and we apply these distances as a potential function to the original graph.

The correctness analysis of the algorithm above is the same as the argument in \Cref{lem:betweenness-reduction}. For every pair $s,t\in V$ and any sampled $w\in S$, we have $d^h_\phi(s,w)+d^h_\phi(w,t)\geq 0$ because of the path $s\to w\to t$ in the auxiliary graph. This inequality is equivalent to $\phi(t)-\phi(s)\leq d^h(s,w)+d^h(w,t)$. Moreover, a vertex $u$ is in the $h$-hop betweenness only if $\phi(t)-\phi(s)>d^h(s,u)+d^h(u,t)$, so $u$ can be in the betweenness only if $d^h(s,u)+d^h(u,t)$ is smaller than the corresponding value for any sampled $w$. Therefore, the betweenness is at most $n/b$ with high probability. 

We now introduce our new ideas. The $O(mhb)$ bottleneck in the algorithm above arises from computing the $h$-hop shortest path from $S$ in both directions. Instead of computing these distances explicitly, we can simulate them by inserting a layered graph into the auxiliary graph. Suppose we create $h+1$ copies of the non-negative graph $G^+$ and insert the negative edges in $G$ that link one copy to the next. Then, the distance of a vertex $u$ in the first copy to a vertex $v$ on the last copy equals $d^h(u,v)$. Therefore, we can replace $d^h(w,\cdot)$ and $d^h(\cdot,w)$ with two layered graphs and then compute shortest path distances on the enlarged auxiliary graph to complete the betweenness reduction. The only issue to overcome is that the negative edges between layers introduce additional negative edges. We address this by adjusting the weights between layers so that only the edges from one specific layer to the next are negative.

\newcommand{\tforw}[1]{\textrm{forward}#1}
\newcommand{\tback}[1]{\textrm{backward}#1}

\begin{proof}[Proof of \Cref{lem:reduction_bw_to_sssp}]
    The shortest path instance we construct contains $2h+1$ copies of $G^+$, where $G^+$ denotes the graph $G$ with all negative edges removed. These copies are denoted by $G_{0}$, $G^{\tforw{}}_{{1}},\ldots,G^{\tforw{}}_{{h}}$, and $G^{\tback{}}_{{1}},\ldots,G^{\tback{}}_{{h}}$. The vertices in each copy are referenced using the vertex name with the subscript of its corresponding graph copy, such as $u_{0}$ and $u^{\tforw{}}_{{1}}$.

    Let $M$ be a value larger than the absolute value of all negative edge weights. The edges between these copies are constructed in the following steps. First, for each negative edge from $u$ to $v$ in $G$ with weight $w$, add the following set of edges, each with weight $w+M\ge0$.

    \begin{itemize}
        \item An edge from $u_{0}$ to $v^{\tforw{}}_{{1}}$.
        \item An edge from $u^{\tforw{}}_{{1}}$ to $v^{\tforw{}}_{{2}}$, and so on, until from $u^{\tforw{}}_{{h-1}}$ to $v^{\tforw{}}_{{h}}$.
        \item An edge from $u^{\tback{}}_{{h}}$ to $v^{\tback{}}_{{h-1}}$, and so on, until from $u^{\tback{}}_{{2}}$ to $v^{\tback{}}_{{1}}$.
        \item An edge from $u^{\tback{}}_{{1}}$ to $v_{0}$.
    \end{itemize}
    Next, sample a random set $S$ with $O(b\log n)$ vertices; for each vertex $w\in S$, add an edge from $w^{\tforw{}}_{{h}}$ to $w^{\tback{}}_{{h}}$, with weight $-2hM$.

    Now, we show that computing shortest path distances on this graph suffices for betweenness reduction. First, suppose that we find a negative cycle on this graph. Since the edges in this graph correspond to edges in $G$, this cycle corresponds to a closed walk in $G$, with the only difference being that the edges between copies have additional weights. Moreover, the cycle must traverse the copies in the following order:
    \[G_{0}\to G^{\tforw{}}_{{1}}\to\ldots\to G^{\tforw{}}_{{h}}\to G^{\tback{}}_{{h}}\to\ldots\to G^{\tback{}}_{{1}}\to G_{0}.\] 
    For each such loop traversed, there are $2h$ edges with an additional $+W$ weight, and one edge (from $G^{\tforw{}}_{{h}}$ to $G^{\tback{}}_{{h}}$) with a $-2hW$ weight, so these additional weights are always canceled out. Therefore, the weight of a cycle in this graph equals the weight of the corresponding walk in $G$, implying that a negative cycle in this graph indicates a negative closed walk in $G$, and hence a negative cycle in $G$.

    Next, suppose that the shortest paths algorithm returns a distance function $\phi$. We claim that
    \[\phi(v_0)-\phi(u_0)\leq \min_{w\in S}d^h(u,w)+d^h(w,v)\quad\forall u,v\in V.\]
    The reason is that, for any vertex $w\in S$, the path achieving $d^h(u,w)+d^h(w,v)$ can be routed on our new graph in the following way: route the $h$-hop shortest path $d^h(u,w)$ from $u_0$ to $w^{\tforw{}}_{{h}}$, traverse the edge from $w^{\tforw{}}_{{h}}$ to $w^{\tback{}}_{{h}}$, and then route the $h$-hop shortest path $d^h(w,v)$ from $u^{\tback{}}_{{h}}$ to $w_{0}$. Since $\phi$ is a distance function, any path from $u_{0}$ to $v_{0}$ has weight at least $\phi(v_{0})-\phi(u_{0})$, proving the claim.
\end{proof}

To solve this shortest path instance, we apply the algorithm in \cite{huang2026faster}. However, we now want the time complexity to depend on $k$, the number of negative edges, rather than of the number of vertices, because our auxiliary graph in \Cref{lem:reduction_bw_to_sssp} contains fewer negative edges. Fortunately, the analysis in \cite{huang2026faster} mostly remains valid when $k<n$, except that the $O(nh^2)$ factor in betweenness reduction becomes a bottleneck even for denser graphs. Therefore, the technique used in Section~6 to eliminate this factor for sparse graphs should be applied in all cases.

\begin{theorem}[Implicit in \cite{huang2026faster}]\footnote{By combining Lemma 6.1 with Lemma 4.5 in the paper.}\label{thm:hjq26_sssp}
    There is an algorithm that solves the real-weight SSSP problem with $m$ edges and $k$ negative edges in $\tilde O(k^2+mk^{3/4})$ time with high probability.
\end{theorem}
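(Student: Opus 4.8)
The plan is not to re-derive the algorithm of~\cite{huang2026faster} from scratch, but to observe that it already has the structure needed, and that only one ingredient has to be swapped in. Recall that~\cite{huang2026faster} follows the Fineman template: it runs $O(\log n)$ rounds, each of which first invokes betweenness reduction with a parameter $b$ and then a hop-reduction step (negative sandwich / remote sets) that reduces the number of negative edges along shortest paths by a constant factor, so that after $O(\log n)$ rounds it suffices to compute $O(1)$-negative-hop distances. Their Lemma~4.5 packages this ``main loop'' assuming a betweenness-reduction oracle, and their Lemma~6.1 is the betweenness-reduction subroutine itself, with a Section~6 refinement that makes Lemma~6.1 cheap on sparse graphs. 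The proof of \Cref{thm:hjq26_sssp} is to re-run this pipeline with the vertex count $n$ replaced by the negative-edge count $k$ wherever that is legitimate, and to apply the Section~6 refinement unconditionally.

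First I would verify that in the analysis of their Lemma~4.5 every occurrence of $n$ is really an occurrence of ``number of negative vertices.'' The relevant facts: after \Cref{lem:one-negative-outgoing-edge} the negative vertices number $\Theta(k)$ and this count is monotone non-increasing across rounds; the hop bound $h$ used in any round never exceeds the current number of negative edges, hence $h \le k$; and every vertex that can be \emph{between} a pair $(s,t)$, as well as every vertex in the remote/sandwich structures the algorithm manipulates, is a negative vertex. Granting this, the ``main loop'' overhead of their Lemma~4.5, stated as $\tilde O(mn^{3/4})$, becomes $\tilde O(mk^{3/4})$, and their secondary term $\tilde O(m^{4/5}n)$ becomes $\tilde O(m^{4/5}k)$, which is always absorbed into $\tilde O(k^2+mk^{3/4})$ since $k\le m$.

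Second I would handle the one term that does \emph{not} improve for free: in its basic form, their betweenness reduction (Lemma~6.1) pays $\tilde O(nh^2)$ to solve an auxiliary SSSP instance built from explicitly materialized $h$-hop distances, which is $\tilde O(kh^2)$ at best and, with $h$ possibly as large as $k$, is a genuine bottleneck even when $m$ is large. The fix is exactly the layered-graph construction behind \Cref{lem:reduction_bw_to_sssp} (and what~\cite{huang2026faster} uses in Section~6 for sparse graphs): rather than materializing the $h$-hop distances from the $O(b\log n)$ sampled vertices, one splices $2h+1$ copies of the non-negative subgraph together, reducing betweenness reduction to a single SSSP instance with $O(hm)$ edges and only $O(b\log n)$ negative edges. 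I would apply this in every regime, so that the cost of betweenness reduction is governed by $b$ rather than by $h$; with $b$ chosen to balance against the main loop, the surviving contribution from this step is $\tilde O(k^2)$ (coming from the instances whose edge count is dominated by their negative edges).

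Assembling: $O(\log k)$ rounds, each contributing $\tilde O(mk^{3/4})$ from the re-parametrized main loop plus the cost of one betweenness-reduction call handled as above, and balancing the $b$-parameter against an $\tilde O(k^2)$ innermost cost, yields a recurrence that flattens — using that the governing parameter shrinks geometrically across rounds — to $\tilde O(k^2 + mk^{3/4})$. The step I expect to be the main obstacle is the first one: auditing the analysis of~\cite{huang2026faster} carefully enough to be certain that \emph{every} appearance of $n$ is an appearance of ``number of negative vertices'' and not, say, the size of a Dijkstra frontier or of a reweighted layered graph, since any such place keeps an $n$ (or an $hm$) and would have to be checked not to dominate. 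A secondary concern is confirming that pushing the Section~6 trick into the dense regime does not conflict with the bound~\cite{huang2026faster} proves on the number of negative edges generated by the recursion.
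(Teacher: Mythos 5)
Your proposal takes essentially the same route as the paper, which offers no proof of this theorem beyond citing it as implicit in \cite{huang2026faster} (their Lemma~4.5 combined with Lemma~6.1) and remarking, exactly as you do, that the analysis carries over with $n$ replaced by $k$ once the Section~6 layered-graph betweenness reduction is applied in all regimes rather than only for sparse graphs. The one minor discrepancy is your attribution of the $\tilde O(k^2)$ term to the betweenness-reduction instances; in the paper's own recap of \cite{huang2026faster} it arises from the $|U|^3$ cost of hop-reducer construction ($k^{3/2}$ per round over $\sqrt{k}$ rounds), but this does not affect the stated bound.
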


An important observation is that the shortest path instance produced by \Cref{lem:reduction_bw_to_sssp} has $O(mh)$ edges and only $O(b\log n)$ negative edges. For most common choices of the parameters $h$ and $b$, the number of negative edges is much smaller than the graph size, and the $mk^{3/4}$ term dominates, yielding a running time of $\tilde O(mhb^{3/4})$.

\begin{theorem}\label{thm:faster_bw_reduction}
    There is an algorithm that, given a graph $G$ and two parameters $h,b$ that $mh\geq b^{5/4}$, in $\tilde O(mhb^{3/4})$ time, either declares that $G$ has a negative cycle, or produces a valid potential $\phi$, such that with high probability, under this reweighting,
    \[\left|\{x\mid d^h_\phi(u,x)+d^h_\phi(x,v)<0\}\right|\leq n/b\quad\forall u,v\in V.\]
\end{theorem}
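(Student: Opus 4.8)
The plan is to instantiate the SSSP oracle of \Cref{lem:reduction_bw_to_sssp} with the algorithm of \Cref{thm:hjq26_sssp}. First I would invoke \Cref{lem:reduction_bw_to_sssp} on $G$ with the given parameters $h,b$. In $O(mh)$ additional time this either reports a negative cycle in $G$ (in which case we return it and stop) or builds the layered auxiliary graph, which by construction has $O(mh)$ edges and only $O(b\log n)$ negative edges, and whose shortest-path potential $\phi$ satisfies $\phi(v_0)-\phi(u_0)\le\min_{w\in S}\big(d^h(u,w)+d^h(w,v)\big)$ for all $u,v\in V$.

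Next I would run \Cref{thm:hjq26_sssp} on this auxiliary graph, i.e.\ with edge count $O(mh)$ and number of negative edges $k=O(b\log n)$. It either finds a negative cycle there — which, by the first half of the proof of \Cref{lem:reduction_bw_to_sssp}, unwinds to a negative cycle in $G$ — or returns a valid potential in time $\tilde O\big((b\log n)^2+(mh)(b\log n)^{3/4}\big)=\tilde O\big(b^2+mhb^{3/4}\big)$. This is where the hypothesis $mh\ge b^{5/4}$ is used: it gives $b^2=b^{5/4}\cdot b^{3/4}\le (mh)\cdot b^{3/4}$, so the $b^2$ term is dominated and the running time collapses to $\tilde O(mhb^{3/4})$.

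It then remains to turn the potential on the auxiliary graph into the desired potential on $G$ and verify its properties. I would restrict to the copy of $V$ in the $0$-th layer, setting $\phi(u):=\phi(u_0)$. Since that layer contains a copy of the non-negative subgraph of $G$, the distance-function property of $\phi$ gives $w(u,v)+\phi(u)-\phi(v)\ge0$ for every non-negative edge $(u,v)$, so $\phi$ is valid. For the betweenness bound I would reuse verbatim the correctness analysis stated just before \Cref{lem:reduction_bw_to_sssp}: a union bound over all ordered pairs $(u,v)$ shows that with high probability the random set $S$ contains, for each pair, one of the $n/b$ vertices $x$ minimizing $d^h(u,x)+d^h(x,v)$, and combining that sampled vertex with the displayed inequality on $\phi$ forces every $x$ with $d^h_\phi(u,x)+d^h_\phi(x,v)<0$ to lie among those $n/b$ vertices.

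I do not expect a genuine obstacle here — the statement is essentially a composition of two black boxes — but the step requiring care is parameter matching: \Cref{thm:hjq26_sssp} must be read as a bound in terms of the number of negative \emph{edges} $k$ (not negative vertices), so that the $O(b\log n)$ negative-edge count of the auxiliary graph, rather than its $O(mh)$ size, controls the quadratic term; and one should check that the $O(b\log n)$ sampling overhead, together with the polylogs hidden in \Cref{thm:hjq26_sssp}, is absorbed into $\tilde O(\cdot)$, which it is.
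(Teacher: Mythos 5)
Your proposal is correct and matches the paper's (implicit) argument exactly: the paper proves \Cref{thm:faster_bw_reduction} by the same composition of \Cref{lem:reduction_bw_to_sssp} with the oracle of \Cref{thm:hjq26_sssp}, noting that the auxiliary instance has $O(mh)$ edges and $O(b\log n)$ negative edges so the $mk^{3/4}$ term dominates. Your added check that $mh\ge b^{5/4}$ absorbs the $b^2$ term, and your care in reading $k$ as the number of negative edges, are precisely the right details.
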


\subsection{Improving the betweenness reduction in \texorpdfstring{\cite{huang2025faster}}{}}

Because computing the betweenness reduction is the bottleneck for previous works, solving it faster via \Cref{thm:faster_bw_reduction} leads to an improved running time. To illustrate this, we show how using \Cref{thm:faster_bw_reduction} can improve the algorithm in \cite{huang2025faster}, which results in the fastest algorithm for SSSP when $m=O(n^{5/4})$. The improved algorithm uses \Cref{thm:faster_bw_reduction} for betweenness reduction, a different set of parameters $h,b$ chosen for balancing the complexity, and leaves the other parts unchanged.

\begin{theorem}\label{thm:improved_sssp}
    There is an algorithm that solves the real-weight SSSP problem with $m$ edges and $k$ negative edges in $\tilde O(mk^{7/9})$ time with high probability.
\end{theorem}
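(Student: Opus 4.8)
The plan is to open up the algorithm of~\cite{huang2025faster} and replace only its betweenness-reduction subroutine with the faster one from \Cref{thm:faster_bw_reduction}, then re-balance the parameters. Recall that the algorithm of~\cite{huang2025faster} works in $O(\log n)$ phases, each of which (i) performs a betweenness reduction to guarantee that every pair $(u,v)$ has at most $n/b$ vertices $x$ with $d^h_\phi(u,x)+d^h_\phi(x,v)<0$, and then (ii) uses this low-betweenness property together with the remote-set / hop-reduction machinery to reduce the effective number of negative hops along shortest paths by a constant factor, at a cost dominated by a term of the form $\tilde O(n^2 h)$ plus the $h$-hop SSSP computations. The point is that step~(ii) is \emph{unchanged}: we keep exactly the same hop bound $h$ and the same guarantee, so its contribution to the running time is untouched. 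Only the cost of step~(i) changes: instead of the $\tilde O(mhb+nh^2)$ of the original betweenness reduction, \Cref{thm:faster_bw_reduction} gives us $\tilde O(mhb^{3/4})$ (valid in the regime $mh\ge b^{5/4}$, which we will need to verify holds for our chosen parameters). The factor we save is $b^{1/4}$ on the dominant term.

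Next I would set up the total running time as a function of the two free parameters $h$ and $b$ and optimize. Following the structure of~\cite{huang2025faster}, after $O(\log n)$ phases the number of negative hops drops from $k$ to $2$, so we need $b$ chosen so that a constant fraction of negative hops is removed each phase — as in the original analysis, this forces a relation tying $b$ to the betweenness/remote-set tradeoff, and the hop bound $h$ is chosen of order $k/b$ (or a fixed polynomial in $k$ and $b$ per their analysis). Plugging $h=\tilde\Theta(k/b)$ into the betweenness cost gives $\tilde O(mk b^{-1/4})$ for step~(i), while step~(ii) contributes its original $\tilde O(n^2 k/b)$-type term plus the cost of the $h$-hop SSSP subroutines, exactly as in~\cite{huang2025faster}. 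Re-balancing step~(i) against the unchanged terms — in particular setting the betweenness cost equal to the dominant cost of step~(ii) — shifts the optimal $b$ and yields a total of $\tilde O(m k^{7/9})$; the exponent $7/9$ replaces the $4/5$ of~\cite{huang2025faster} precisely because the new $b^{3/4}$ (rather than $b$) dependence lets us afford a larger $b$. I would present the arithmetic of this single optimization explicitly, since it is short, and simply cite~\cite{huang2025faster} for every other component.

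The main obstacle is \emph{bookkeeping correctness}, not a new idea: I must check that the betweenness guarantee delivered by \Cref{thm:faster_bw_reduction} — stated for the specific $h$-hop notion $d^h_\phi$ — is exactly the property that the downstream analysis of~\cite{huang2025faster} consumes, so that the substitution is truly black-box. There is a subtlety here: \Cref{lem:reduction_bw_to_sssp} produces its potential via an SSSP call on an auxiliary graph with $O(mh)$ edges and only $O(b\log n)$ \emph{negative edges}, and we invoke \Cref{thm:hjq26_sssp} on that instance; I need to confirm that this auxiliary SSSP instance genuinely has few negative edges (it does, by construction — only the $|S|=O(b\log n)$ shortcut edges $w^{\tforw{}}_h\to w^{\tback{}}_h$ carry negative weight) and that the side condition $mh\ge b^{5/4}$ of \Cref{thm:faster_bw_reduction} is satisfied by the parameters chosen in the re-balanced algorithm. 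Assuming those checks go through, the rest is a routine parameter optimization, and the improvement over~\cite{huang2025faster} is immediate.
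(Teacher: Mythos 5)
Your central idea---keep the framework of \cite{huang2025faster} verbatim, swap in the $\tilde O(mhb^{3/4})$ betweenness reduction of \Cref{thm:faster_bw_reduction}, and re-balance the parameters---is exactly what the paper does. However, your description of the cost structure of \cite{huang2025faster} is wrong in ways that would derail the ``short arithmetic'' you defer. That algorithm is not a sequence of $O(\log n)$ phases each cutting the number of negative hops by a constant factor (that is this paper's Section~3 algorithm); it runs $\tilde\Theta(\sqrt{k/h})$ rounds, each of which neutralizes a weak $h$-hop negative sandwich $U$ with $|U|=\tilde\Omega(\sqrt{hk})$ (\Cref{lem:big-sandwich}), made $h$-remote via \Cref{lem:sandwich-to-remote} and handled by an $h$-hop reducer. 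The non-betweenness cost of a round is dominated by computing Johnson's potentials on that hop reducer, i.e.\ $\tilde O(m|U|/h)=\tilde O(m\sqrt{k/h})$, not by an $\tilde O(n^2h)$ term; and the two parameters satisfy $h=b$ (both equal to $k^{1/5}$ in the original analysis), not $h=\tilde\Theta(k/b)$.

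With the correct accounting the optimization is: a round costs $\tilde O(mhb^{3/4}+m\sqrt{k/h})$ and there are $\tilde O(\sqrt{k/h})$ rounds, so setting $h=b$ and balancing $h^{7/4}\sqrt{k/h}$ against $k/h$ forces $h^{9/4}=k^{1/2}$, i.e.\ $h=b=k^{2/9}$, and the total is $\tilde O(mk/h)=\tilde O(mk^{7/9})$. Your proposed balance of $\tilde O(mkb^{-1/4})$ against an $\tilde O(n^2k/b)$ term does not produce this exponent (nor any bound of the form $mk^{c}$), so as set up the deferred optimization would not go through. The side condition $mh\ge b^{5/4}$ that you flag is indeed trivially satisfied at $h=b=k^{2/9}$ since $m\ge k$, and your observation that the auxiliary SSSP instance of \Cref{lem:reduction_bw_to_sssp} has only $O(b\log n)$ negative edges is correct and is what makes \Cref{thm:hjq26_sssp} applicable.
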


To prove \Cref{thm:improved_sssp}, we first review their algorithm. They follow the approach of~\cite{fineman2024single}, where one iteratively finds a set of negative vertices $U$ to neutralize. The goal is find $U$ such that we can compute Johnson's reweighting for $G_U$ ($G$ with negative edges not incident to $U$ removed) using $m|U|/k^\epsilon$ time. This way, iteratively neutralizing all the edges takes $mn^{1-\epsilon}$ time. Towards finding such $U$,~\cite{fineman2024single} introduced many concepts, which we summarize below.

\paragraph{Hop reducers and remote sets.} Naively computing Johnson's reweighting for $G_U$ naively takes $\tilde{O}(m|U|)$ time, which is too slow. To improve upon this,~\cite{fineman2024single} introduced auxiliary hop reduction graphs to decrease the number of hops in a shortest path. Indeed, given a $h$-hop reducer for $G_U$, we would be able to compute Johnson's reweighting for $G_U$ in $\tilde{O}(m|U|/h)$ time.

\begin{definition}
    Let $h\in\mathbb{N}$. We say that a graph $H=(V_H,E_H)$ is an $h$-hop reducer for a graph $G=(V,E)$ if $V\subseteq V_H$ and $d_G(s,t)\le d_H^{\lceil k/h\rceil}(s,t)\le d^k_G(s,t)$ for all $s,t\in V$ and $k\in\mathbb{N}$.
\end{definition}

In order to construct an $h$-hop reducer of smaller size,~\cite{fineman2024single} introduced the notion of ``remote'' sets of negative vertices. We say that a vertex $s\in V$ can $h$-hop negatively reach a vertex $t\in V$ if there is an $h$-hop path from $s$ to $t$ with negative length. For a parameter $r\in\mathbb{N}$ and a set of negative vertices $U\subseteq V$, we say that $U$ is $r$-remote if the union of the $r$-hop negative reaches of vertices in $U$ has at most $n/r$ vertices. They show that remote sets have linear-size hop reducers:

\begin{lemma}[\cite{fineman2024single}, Lemma 3.3]\label{lem:hop-reducer}
   Let $G$ have maximum in-degree and out-degree $O(m/n)$. Given a set of $r$-remote vertices $U$, one can construct an $r$-hop reducer for $G_U$ with $O(m)$ edges and $O(n)$ vertices in $\tilde{O}(m)$ time. 
\end{lemma}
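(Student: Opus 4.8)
**Proof plan for Theorem \ref{lem:hop-reducer} (i.e., Lemma 3.3 of \cite{fineman2024single}, stated as \Cref{lem:hop-reducer})**

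The plan is to construct the $r$-hop reducer $H$ by layering. Since $U$ is $r$-remote, the set $R$ of all vertices reachable from some $u \in U$ by an $r$-hop negative path has $|R| \le n/r$; these are the only ``landmarks'' we ever need to route negative hops through. First I would observe that any shortest path in $G_U$ decomposes into \emph{blocks} of at most $r$ hops each, where within a block all negative edges are incident to $U$; the goal is to replace each such block by a single edge of $H$ so that $\lceil k/r \rceil$ hops in $H$ suffice to simulate $k$ hops in $G_U$. The subtlety is that a block of $r$ hops in $G_U$ could a priori touch many distinct vertices, so we cannot afford an edge for every (start, end) pair — we must exploit that every negative edge in $G_U$ is incident to $U$, and that following such an edge from $u \in U$ lands in the remote set $R$.

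The key steps, in order: (1) For each vertex $v \in V$, compute — via a bounded-hop Dijkstra/Bellman–Ford in $G_U^+$ restricted to paths of $\le r$ edges using at most one negative edge out of $U$ — the shortest-distance reachability of $v$ to each landmark in $R$; because $|R| \le n/r$ and the graph has degree $O(m/n)$, an $r$-hop exploration from each landmark (run \emph{backwards} from landmarks and forwards from landmarks) costs $\tilde O(r \cdot |R| \cdot m/n) = \tilde O(rm/r) = \tilde O(m)$ total. (2) Introduce, for each landmark $x \in R$, a Steiner vertex (or just reuse $x$) and add the edges: from every $v$ within $r$ non-negative hops of reaching $x$, an edge $v \to x$ of that short-hop distance; and from $x$, following the single negative edge $(u,y)$ with $u=$ the $U$-vertex attached to $x$, edges $x \to z$ for every $z$ within $r$ non-negative hops of $y$, weighted accordingly. (3) Argue size: the number of such edges is $O(|R| \cdot (\text{ball size}))$; using the degree bound and the $r$-remoteness to bound how many vertices can reach a landmark within $r$ hops, this telescopes to $O(m)$ edges and $O(n)$ vertices. (4) Argue correctness of the hop-reducer inequality: a lower bound $d_G(s,t) \le d_H^{\lceil k/r\rceil}(s,t)$ holds because every edge of $H$ corresponds to a genuine walk in $G$ of no smaller weight (so $H$-paths never beat true distances); the upper bound $d_H^{\lceil k/r\rceil}(s,t) \le d_G^k(s,t)$ holds because any $k$-hop path in $G_U$ is cut into $\lceil k/r \rceil$ blocks, and each block of $\le r$ hops with $\le 1$ negative $U$-edge is exactly one edge of $H$ (when the block has a negative edge, it factors through the landmark $x$ that the negative edge points into; when the block is entirely non-negative, we bound it crudely by an extra non-negative edge, or merge it into an adjacent block).

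The main obstacle I expect is step (3)/(1), the simultaneous \emph{size and time} accounting: naively, ``all $v$ within $r$ non-negative hops of a landmark $x$'' could be $\Omega(n)$ vertices per landmark, giving $\Omega(n^2/r)$ edges, which is too many unless $r$ is large. The resolution must be that we do \emph{not} add an edge from every such $v$; instead $H$ itself contains a layered copy of $G^+$ (the non-negative subgraph) so that the ``reach a nearby landmark'' part is traversed \emph{inside} $H$ using ordinary non-negative edges, and only the landmark-to-landmark negative transitions get explicit shortcut edges — but then the hop count per block must be re-examined, since walking through the $G^+$ layer costs hops. Reconciling ``at most $r$ hops per block in $G_U$'' with ``walk through a $G^+$ layer in $H$'' is where the remoteness hypothesis ($n/r$ bound) does the real work: it bounds the \emph{number of distinct landmark-incidences}, hence the number of layers/copies of $G^+$ needed, keeping the total at $O(m)$ edges. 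I would handle this by defining $H$ as $O(1)$ (or $O(\log)$) stacked copies of $G^+$ with the $R$-landmark negative shortcuts wired between consecutive copies, then verify the two hop-reducer inequalities block-by-block as above.
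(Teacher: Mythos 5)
The paper does not actually prove this lemma; it imports it verbatim from Fineman (Lemma~3.3 of \cite{fineman2024single}), so there is no in-paper proof to compare against. Judged on its own merits against the known construction, your proposal has the right raw ingredients (the remote set $R$ of size $\le n/r$, the degree bound $O(m/n)$, and the eventual pivot to a layered graph) but does not assemble them correctly, and it contains a gap you yourself flag and do not close.

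First, a definitional confusion that breaks the hop-count argument. In this paper ``$h$-hop'' counts \emph{negative} edges: $d^h(s,t)$ is the cheapest path with at most $h$ negative edges, and the reducer must satisfy $d_H^{\lceil k/r\rceil}(s,t)\le d_G^k(s,t)$, i.e.\ it must compress $r$ \emph{negative} edges of $G_U$ into one negative edge of $H$. Your block decomposition in steps (2) and (4) assigns at most one negative $U$-edge per block and routes it through a single landmark; with $\lceil k/r\rceil$ such blocks you account for only $\lceil k/r\rceil$ of the $k$ negative edges, so the construction as written does not simulate a $k$-negative-hop path at all. (Step (1) likewise speaks of ``paths of $\le r$ edges using at most one negative edge,'' which is the wrong notion of hop here.)

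Second, the size/time accounting, which you correctly identify as ``the main obstacle,'' is left unresolved, and the resolution you gesture at has the parameters reversed. The standard construction takes $r$ stacked copies of the subgraph induced on the remote set $R$ (not of all of $G^+$): each copy has at most $|R|\cdot O(m/n)\le (n/r)\cdot O(m/n)=O(m/r)$ edges by the degree hypothesis, so $r$ layers give $O(m)$ edges and $r\cdot(n/r)=O(n)$ extra vertices, and a block of $r$ consecutive negative edges --- all of whose negative-prefix vertices lie in $R$ by $r$-remoteness --- descends one layer per negative edge, with the inter-layer weights shifted so that only the entry edge is negative. Your version, ``$O(1)$ or $O(\log)$ stacked copies of $G^+$,'' cannot absorb $r$ negative hops per block, and $r$ copies of $G^+$ would cost $\Theta(rm)$ edges; remoteness bounds the \emph{width} of each layer, not the \emph{number} of layers, which is necessarily $r$. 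Until the block decomposition is restated in terms of negative edges and the layered gadget is built on $G[R]$ with the per-layer weight shift, the proof does not go through.
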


\paragraph{Betweenness reduction and negative sandwiches.}
In order to construct a remote set,~\cite{fineman2024single} introduce the notion of a negative sandwich, which is a triple $(s,U,t)$, where $s,t\in V$ and $U\subseteq N$, such that $d^1(s,u)<0$ and $d^1(u,t)<0$ for all $u\in U$.~\cite{huang2025faster} generalized this to a weak $h$-hop negative sandwich, which is a triple $(s,U,t)$ such that $d^h(s,u)+d^h(u,t)<0$ for all $u\in U$. They showed that if the betweenness is small, a weak sandwich can be converted into a remote set.

\begin{lemma}[\cite{huang2025faster}, Lemma 2.3]\label{lem:sandwich-to-remote}
    Let $(s,U,t)$ be a weak $h$-hop negative sandwich and let $(s,t)$ have $(r+h)$-hop betweenness $n/r$. Then one can compute valid potentials $\phi$ making $U$ r-remote in $\tilde{O}(m\cdot(r+h))$ time.
\end{lemma}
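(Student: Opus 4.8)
\emph{Proof plan.} The idea is to exhibit a single valid potential $\phi$ that ``absorbs the distance to $t$'', and to show that this one reweighting already forces every vertex that is $r$-hop-negatively reachable from $U$ into the $(r+h)$-hop betweenness set of $(s,t)$, which by hypothesis has size at most $n/r$. Freeze the negative edges of the current graph and, for each vertex $x$ that can reach $t$ within $h$ negative hops, set $\phi(x)=-d^h(x,t)$ (vertices that cannot reach $t$ within this budget are handled separately; see the last paragraph). The point of this choice is that the weak-sandwich inequality $d^h(s,u)+d^h(u,t)<0$ is \emph{exactly} the statement $\phi(u)>d^h(s,u)$ for every $u\in U$: the potential of each sandwich vertex strictly exceeds its $h$-negative-hop distance from $s$.

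The main step is the inclusion $\bigcup_{u\in U}\{z:d^r_\phi(u,z)<0\}\subseteq\{x:d^{r+h}(s,x)+d^{r+h}(x,t)<0\}$. Fix $u\in U$ and $z$ with $d^r_\phi(u,z)<0$. Since negative edges are frozen, $d^r(u,z)=d^r_\phi(u,z)-\phi(u)+\phi(z)<\phi(z)-\phi(u)$. Concatenating an optimal $h$-negative-hop $s\to u$ path with an optimal $r$-negative-hop $u\to z$ path gives $d^{r+h}(s,z)\le d^h(s,u)+d^r(u,z)<d^h(s,u)-\phi(u)+\phi(z)=\big(d^h(s,u)+d^h(u,t)\big)+\phi(z)<\phi(z)=-d^h(z,t)$, where the last inequality is the sandwich condition. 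Because a larger hop budget only helps, $d^{r+h}(z,t)\le d^h(z,t)$, so $d^{r+h}(s,z)+d^{r+h}(z,t)<0$, i.e.\ $z$ lies in the $(r+h)$-hop betweenness set of $(s,t)$. Summing over $u\in U$, the union of the $r$-hop negative reaches of $U$ under $\phi$ has at most $n/r$ vertices, which is precisely the statement that $U$ is $r$-remote.

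Next I would check validity and running time. For a non-negative edge $(x,y)$, the path $x\to y\rightsquigarrow t$ (zero negative hops followed by at most $h$) shows $d^h(x,t)\le w(x,y)+d^h(y,t)$, hence $w_\phi(x,y)=w(x,y)-d^h(x,t)+d^h(y,t)\ge0$, so $\phi$ introduces no new negative edge and is valid. The only computation is one bounded-hop shortest-path call to obtain $d^h(x,t)$ for all $x$, i.e.\ a bounded-hop shortest path from $t$ in the reversed graph; this runs in $\tilde O(mh)$ time via the standard Dijkstra/Bellman--Ford hybrid, and running it with hop budget $r+h$ (harmless) gives the claimed $\tilde O(m(r+h))$ bound while keeping the hop bookkeeping uniform with the surrounding algorithm.

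The step I expect to be the real obstacle is the treatment of vertices that cannot reach $t$ within the negative-hop budget, for which the naive formula yields $\phi(x)=-\infty$. One needs a principled fix --- for instance restricting the entire argument to the subgraph of vertices that can $h$-negatively reach $t$ (and arguing that any $z$ reachable from $U$ outside this subgraph is either irrelevant to remoteness or is excluded by the construction feeding this lemma), or capping $\phi$ at a large finite value determined by $M$ --- and then re-verifying both that $\phi$ stays valid on the full vertex set and that the reach-to-betweenness inclusion still goes through. A secondary point requiring care is the interplay between the frozen/unfrozen conventions and the three hop parameters $r$, $h$, $r+h$, so that the concatenation bound $d^{r+h}(s,z)\le d^h(s,u)+d^r(u,z)$ and the identity $d^r_\phi=d^r+\phi(\cdot)-\phi(\cdot)$ are applied with respect to the correct negative-edge set.
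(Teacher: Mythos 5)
The paper itself does not prove this statement---it is imported from \cite{huang2025faster} (and the intended construction is visible in the closely related \Cref{lem:hjq26_bw_to_neg_reach}, which uses the two-sided potential $\phi(v)=\min\{d^h(s,v),\,-d^h(v,t)\}$)---so your proposal should be judged against that standard argument. Your core argument is correct and is essentially that argument, specialized to the one-sided potential $\phi(v)=-d^h(v,t)$: freezing the negative edges gives the telescoping identity $d^r_\phi(u,z)=d^r(u,z)+\phi(u)-\phi(z)$, concatenation gives $d^{r+h}(s,z)\le d^h(s,u)+d^r(u,z)$, the sandwich inequality kills the $d^h(s,u)+d^h(u,t)$ term, and hop-budget monotonicity places every $z$ in the $r$-hop negative reach of $U$ under $\phi$ inside the $(r+h)$-hop betweenness set of $(s,t)$, which has size at most $n/r$; the validity check and the $\tilde O(m(r+h))$ single backward hop-limited shortest-path computation are also right. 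The loose end you flag---vertices $x$ with $d^h(x,t)=+\infty$---is genuine but resolves exactly along the lines you sketch, and it is worth noting that it is not actually an obstacle: the set $R$ of vertices that can reach $t$ within $h$ negative hops is closed under non-negative in-edges (if $(x,y)$ is non-negative and $y\in R$ then $x\in R$), so assigning every vertex outside $R$ one common, sufficiently negative constant $-C$ keeps $\phi$ valid (non-negative edges leaving $R$ get weight increased; non-negative edges entering $R$ from outside cannot exist; edges with both endpoints outside are unchanged), and for $C$ large enough the telescoping identity shows such vertices can never satisfy $d^r_\phi(u,z)<0$, so they never enter the reach and the inclusion argument is unaffected. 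The only thing the two-sided min-potential of \Cref{lem:hjq26_bw_to_neg_reach} buys beyond your version is a symmetric validity case analysis and control through $d^h(s,\cdot)$ as well, which is needed for the finer per-level reach bounds in \cite{huang2026faster} but not for the statement as given here; your secondary caveat about frozen hop-counting is the right convention and matches how the lemma is consumed downstream.
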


To guarantee small betweenness, they used the betweenness reduction algorithm from~\cite{fineman2024single}.
\begin{lemma}[\cite{fineman2024single}, Lemma 3.5]\label{lem:betweenness-reduction-slow}
    There is an $\tilde{O}(mhb)$ time randomized algorithm that returns either a set of valid potentials $\phi$ or a negative cycle. With high probability, all pairs $(s,t)\in V\times V$ have $O(h)$-hop betweenness at most $n/b$ in the graph $G_{\phi}$.
\end{lemma}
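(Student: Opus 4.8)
The plan is to spell out the algorithm sketched just before \Cref{lem:reduction_bw_to_sssp}: it is the proof of \Cref{lem:betweenness-reduction} with $h$-negative-hop distances $d^h$ replacing $d^0,d^1$, and the only genuinely new ingredient is bounding the cost of the internal shortest-path call. Freeze the negative edges of $G$, so that for every potential $\phi$ the identity $d^h_\phi(x,y)=d^h(x,y)+\phi(x)-\phi(y)$ from the preliminaries holds. Sample a set $S$ of $\Theta(b\log n)$ vertices uniformly at random, and for each $w\in S$ compute the single-source $h$-negative-hop distances $d^h(w,\cdot)$ and, on the reversed graph, $d^h(\cdot,w)$ using the Dijkstra/Bellman--Ford hybrid~\cite{dinitz2017hybrid} at cost $O(h(m+n\log n))$ per source, hence $\tilde O(mhb)$ in all; this dominates the running time. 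Build an auxiliary graph $H$ on $V\cup\{w':w\in S\}$ consisting of every non-negative edge of $G$ together with, for each $w\in S$ and $u\in V$, an edge $u\to w'$ of weight $d^h(u,w)$ and an edge $w'\to u$ of weight $d^h(w,u)$. Add a super-source $s_0$ with weight-$0$ edges to all of $V$ and run single-source shortest paths with negative-cycle detection from $s_0$ in $H$. If it finds a negative cycle, note that each detour $u\to w'\to v$ has weight $d^h(u,w)+d^h(w,v)$, the weight of a genuine $u\leadsto v$ walk in $G$, while every other edge of $H$ is an edge of $G$, so the cycle maps to a negative closed walk --- hence a negative cycle --- of $G$, which we output; otherwise we output $\phi:=\mathrm{dist}_H(s_0,\cdot)$ restricted to $V$.

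\emph{Validity} is immediate: $H$ contains every non-negative edge $(u,v)$ of $G$, so $\phi(v)\le\phi(u)+w(u,v)$ and thus $w_\phi(u,v)\ge0$. For the \emph{betweenness} bound, fix $s,t\in V$: since $H$ contains the path $s\to w'\to t$ of weight $d^h(s,w)+d^h(w,t)$ for every $w\in S$, we get $\phi(t)-\phi(s)\le\min_{w\in S}\bigl(d^h(s,w)+d^h(w,t)\bigr)$. By the frozen-edge identity, a vertex $x$ lies in the betweenness of $(s,t)$ under $w_\phi$, i.e.\ $d^h_\phi(s,x)+d^h_\phi(x,t)<0$, precisely when $d^h(s,x)+d^h(x,t)<\phi(t)-\phi(s)$. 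Order the vertices by the key $d^h(s,\cdot)+d^h(\cdot,t)$; since $|S|=\Theta(b\log n)$ and $(1-1/b)^{\Theta(b\log n)}\le n^{-\Omega(1)}$, with high probability some vertex among the $n/b$ smallest keys lies in $S$, say $w$, and then every betweenness vertex has key strictly below that of $w$ and is therefore one of fewer than $n/b$ vertices. A union bound over all $n^2$ pairs establishes the claimed $O(h)$-hop betweenness bound with high probability, exactly as in the analysis of \Cref{lem:betweenness-reduction}.

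Finally, the internal shortest-path call. The graph $H$ may carry $\Theta(nb)$ negative edges, but any simple path in $H$ visits each gadget vertex $w'$ at most once, hence uses at most $2|S|=O(b\log n)$ gadget edges and therefore at most $O(b\log n)$ negative edges; so it suffices to run the same hybrid with a negative-hop budget of $O(b\log n)$ on $H$ (which has $O(m+nb)$ edges), costing $\tilde O\!\bigl(b(m+nb)\bigr)=\tilde O(mb+nb^2)$. Together with the $\tilde O(mhb)$ spent computing the $d^h$ distances, the total is $\tilde O(mhb+nb^2)$, which is $\tilde O(mhb)$ under the mild normalization $m\ge n$ and the regime $b\le h$ in which this routine is invoked in \cite{fineman2024single}. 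I expect this last running-time bound to be the main obstacle: one must recognize that the many negative edges introduced by the gadgets are harmless because shortest paths in $H$ are shallow in \emph{negative} hops, so a negative-hop-budgeted hybrid --- rather than a recursive negative-weight SSSP call --- already suffices.
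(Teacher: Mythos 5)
Your proposal is correct and matches the paper's own treatment: the paper only cites this as Fineman's Lemma~3.5 and reviews exactly this algorithm just before \Cref{lem:reduction_bw_to_sssp} (sample $\Theta(b\log n)$ vertices, compute $h$-hop distances to and from them in $\tilde O(mhb)$ time, build the star-shaped auxiliary graph, solve SSSP on it cheaply since shortest paths there use only $O(b\log n)$ negative hops, and apply the distances as potentials), with the same $s\to w'\to t$ correctness argument as in \Cref{lem:betweenness-reduction}. Your explicit accounting of the extra $\tilde O(nb^2)$ internal-SSSP cost, dominated by $mhb$ in the regime where the lemma is invoked, also agrees with the paper's own description of the routine as the "$\tilde O(mhb+nb^2)$ betweenness reduction algorithm."
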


The final step is a way to find a large negative sandwich, which we will neutralize.

\begin{lemma}[\cite{huang2025faster}, Lemma 4.2]\label{lem:big-sandwich}
    For $h=\Omega(\log{n})$, there is a randomized algorithm running in $\tilde{O}(mh)$ time that returns either a negative cycle, a weak $h$-hop sandwich $(s,U,t)$, or a set of negative vertices $S$ and the distances $d_S(V,v)$ for all vertices $v\in V$. With high probability, we have $|U|,|S|\ge\Omega(\sqrt{hk})$ (when they are returned by the algorithm).
\end{lemma}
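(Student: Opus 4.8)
The plan is to reduce \Cref{lem:big-sandwich} to polylogarithmically many bounded-hop shortest-path computations together with a combinatorial dichotomy on how negative vertices negatively reach one another. First I would preprocess: apply \Cref{lem:one-negative-outgoing-edge} so that every negative vertex has exactly one negative outgoing edge, and perform the standard degree reduction so that the maximum in- and out-degree is $O(m/n)$ (needed only so that the hop-reducer machinery of \Cref{lem:hop-reducer} applies downstream). With this in place, one computation of $d^h(x,\cdot)$ — or of $d^h(\cdot,x)$, by running on the reversed graph — costs $O(h(m+n\log n))=\tilde O(mh)$ time via the hybrid Dijkstra/Bellman-Ford of \cite{dinitz2017hybrid}, and reports a negative cycle whenever it encounters one, in which case we return that cycle and halt. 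So we may assume $G$ is negative-cycle-free for the rest of the argument; in particular the relation ``$d^h(u,u')<0$'' on the set $N$ of negative vertices is acyclic, since a directed cycle in it would compose into a negative closed walk.

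The object we seek is essentially a pair $(s,t)$ whose \emph{$h$-hop betweenness class} $U_{s,t}:=\{u\in N : d^h(s,u)+d^h(u,t)<0\}$ is large, since any such pair is immediately a weak $h$-hop sandwich $(s,U_{s,t},t)$. A clean special case: any negative-length path $P$ from $s$ to $t$ with at most $h$ negative edges yields the sandwich $(s,U,t)$ with $U$ the set of negative vertices on $P$, because the $s$-to-$u$ prefix and $u$-to-$t$ suffix of $P$ each have at most $h$ negative edges, so $d^h(s,u)+d^h(u,t)\le w(P)<0$. Accordingly, the core routine samples a uniformly random ordered pair $(x,y)$ of negative vertices, computes $d^h(x,\cdot)$ and $d^h(\cdot,y)$, forms the candidate $U_{x,y}$, and — as a cheap early win — also scans the shortest-path forest of $d^h(x,\cdot)$ for a single branch carrying many negative vertices; it repeats this $O(\log^{c}n)$ times for a constant $c$.

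The analysis rests on the digraph $D$ on $N$ with an arc $u\to u'$ whenever $d^h(u,u')<0$; write $a_u$ and $b_u$ for the in- and out-degree of $u$ in $D$. The dichotomy is on whether $\sum_{u\in N}a_u b_u$ is above or below a threshold. In the dense case, a constant fraction of negative vertices are ``central'' (both $a_u$ and $b_u$ large), so with high probability one of the $O(\log^{c}n)$ random trials produces a betweenness class — equivalently a sandwich — of size $\Omega(\sqrt{hk})$. In the sparse case, negative vertices seldom negatively reach one another within $h$ negative hops, so they spread out over many negative-hop layers; one then extracts a large set $S\subseteq N$ whose combined $\Theta(\sqrt{k/h})$-hop negative reach is small, and returns $S$ together with the bounded-hop distances already harvested during the explorations — these being exactly the $d_S(V,v)$ that let the caller build a hop reducer for $G_S$ via \Cref{lem:hop-reducer}. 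The threshold $\sqrt{hk}$ is the geometric mean of the two competing budgets: one $\tilde O(mh)$ exploration can certify at most $\Theta(h)$ negative vertices on a single path, while there are $k$ negative vertices overall, and $\sqrt{hk}$ is exactly where ``one exploration already gives a sandwich'' meets ``the negatives are spread over $\gtrsim\sqrt{k/h}$ layers.''

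The main obstacle is making this dichotomy quantitatively tight enough to keep the running time at $\tilde O(mh)$. A naively sampled negative vertex (or pair) lands in a ``central'' configuration with probability only about $\sqrt{h/k}$ in the worst case, which would force $\sqrt{k/h}$ samples and a running time of $\tilde O(m\sqrt{hk})$; so the proof must show that \emph{whenever} a constant number of random trials fails to yield an $\Omega(\sqrt{hk})$-size sandwich, the digraph $D$ is sparse enough that a \emph{deterministic} sweep — e.g.\ a layered, at-most-$h$-negative-hop Bellman-Ford from a source attached to all of $N$, with the negative-hop layers read off directly — produces the set $S$ of size $\Omega(\sqrt{hk})$, and that the distances this sweep computes simultaneously certify the required near-remoteness of $S$ and serve as $d_S(V,v)$, so that no extra, more expensive shortest-path call is hidden in that branch. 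Pinning down the precise layer-counting inequality that delivers $\Omega(\sqrt{hk})$, together with the $O(\log^{c}n)$-trial amplification on the sandwich side, is the technical heart; the degenerate regime $h=\Omega(k)$ — where $|U|,|S|\le k=O(\sqrt{hk})$ makes the guarantee vacuous — is handled separately.
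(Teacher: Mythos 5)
First, note that this statement is not proved in the paper at all: it is imported verbatim as Lemma~4.2 of \cite{huang2025faster} and used as a black box, so the only question is whether your sketch stands on its own. It does not yet. The entire content of the lemma is the quantitative dichotomy, and your proposal leaves exactly that open: you concede that ``pinning down the precise layer-counting inequality\ldots is the technical heart,'' which means the proof is a plan rather than an argument. Concretely, on the sandwich side, your own estimate shows that a uniformly sampled pair of negative vertices is ``central'' with probability only about $\sqrt{h/k}$ in the worst case, so $O(\log^{c}n)$ trials of the pair-sampling routine cannot be shown to produce an $\Omega(\sqrt{hk})$-size sandwich with high probability; and the rescue you invoke --- ``whenever a constant number of trials fails, the digraph $D$ is sparse enough that a deterministic sweep yields $S$'' --- is precisely the statement to be proved, not a fact you may assume. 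Indeed a large value of $\sum_u a_ub_u$ only lower-bounds the \emph{average} of $|U_{x,y}|$ over the $k^2$ pairs, and even if every negative vertex had $a_u,b_u\ge\sqrt{hk}$ this average is only about $h$, far below the $\sqrt{hk}$ target when $h\ll k$. Moreover $D$ (arcs $u\to u'$ iff $d^h(u,u')<0$) is an all-pairs object you cannot afford to build within $\tilde O(mh)$, so the dichotomy would have to be detected implicitly; the sketch does not say how.

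The sparse branch has a second, equally serious gap. The lemma does not merely ask for a large ``near-remote'' set $S$: it must output the distances $d_S(V,v)$ for all $v\in V$, i.e.\ the reweighting data that neutralizes $S$, and computing these in $\tilde O(mh)$ time requires certifying that shortest paths in $G_S$ use only $\tilde O(h)$ negative hops (or an equivalent structural guarantee) for a set $S$ of size $\Omega(\sqrt{hk})$. The distances ``already harvested'' in your explorations are $d^h(x,\cdot)$ for a handful of sampled sources $x$; they are not $d_S(V,\cdot)$ for a set of $\Omega(\sqrt{hk})$ negative vertices, and no argument is given that the proposed layered Bellman--Ford sweep both finds such an $S$ and certifies correctness of the distances it returns within the time budget. (The ``cheap early win'' of reading a sandwich off a single negative $\le h$-negative-hop path is fine as far as it goes, but it cannot by itself reach size $\sqrt{hk}$ either.) In short, the proposal identifies the right objects --- weak sandwiches versus a neutralizable set, with $\sqrt{hk}$ as the balance point --- but the two implications that constitute the lemma (failure of sampling forces a large certifiable $S$, and success of sampling yields a large sandwich whp) are both missing, and the specific dichotomy proposed does not deliver them as stated. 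You would need to follow the actual argument of \cite{huang2025faster} (or supply a genuinely new one) rather than this outline.
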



Finally, we sketch the proof of the $\tilde{O}(mn^{4/5})$ time algorithm, where we assume for simplicity that there is no negative cycle. We then show that using the faster betweenness reduction (\Cref{thm:faster_bw_reduction}), we can do slightly better using the same exact framework.

\paragraph{$\mathbf{\tilde{O}(mn^{4/5})}$ time algorithm.} Let $h=b=k^{1/5}$. We first reduce the $h$-hop betweenness to $n/b$ in time $\tilde{O}(mhb)$ (\Cref{lem:betweenness-reduction-slow}). We then either directly neutralize a set $S$ of size $\tilde{\Omega}(\sqrt{hk})$ or find a weak $h$-hop negative sandwich $(s,U,t)$ of size $\tilde{\Omega}(\sqrt{hk})$, where $\sqrt{hk}=k^{3/5}$ (\Cref{lem:big-sandwich}). If we find a negative sandwich, we reweight $G_U$ to make $U$ $h$-remote in time $\tilde{O}(mh)$ (\Cref{lem:sandwich-to-remote}). We then construct an $h$-hop reducer $H$ of $G_U$ with $O(m)$ edges and $O(n)$ vertices in time $\tilde{O}(m)$ (\Cref{lem:hop-reducer}). Using $H$, we can compute Johnson's potentials for $G_U$ in $\tilde{O}(m|U|/h)$ time. Overall, we spend $\tilde{O}(mk^{2/5})$ time to neutralize $\tilde{\Omega}(k^{3/5})$ negative vertices. Iterating this process yields an $\tilde{O}(mk^{4/5})$ time algorithm for neutralizing all negative vertices.

\paragraph{Proof of \Cref{thm:improved_sssp}.} Let $h=b=k^{2/9}$. We again reduce the $h$-hop betweenness to $n/h$, but now in time $\tilde{O}(mhb^{3/4})=\tilde{O}(mk^{7/18})$ (\Cref{thm:faster_bw_reduction}). Note that we can apply~\Cref{thm:faster_bw_reduction} because $mh\ge b^{5/4}$, as required. We then either directly neutralize a set $S$ of size $\tilde{\Omega}(\sqrt{hk})$ or find a weak $h$-hop negative sandwich $(s,U,t)$ of size $\tilde{\Omega}(\sqrt{hk})$, where $\sqrt{hk}=k^{11/18}$ (\Cref{lem:big-sandwich}). If we find a negative sandwich, we reweight $G_U$ to make $U$ $h$-remote in time $\tilde{O}(mh)$ (\Cref{lem:sandwich-to-remote}). We then construct an $h$-hop reducer $H$ of $G_U$ with $O(m)$ edges and $O(n)$ vertices in time $\tilde{O}(m)$ (\Cref{lem:hop-reducer}). Using $H$, we can compute Johnson's potentials for $G_U$ in $\tilde{O}(m|U|/h)=\tilde{O}(mk^{7/18})$ time. Overall, we spend $\tilde{O}(mk^{7/18})$ time to neutralize $\tilde{\Omega}(k^{11/18})$ negative vertices. Iterating this process yields an $\tilde{O}(mk^{7/9})$ time algorithm for neutralizing all negative vertices.

\section*{Acknowledgement}

George Li is supported by the National Science Foundation Graduate
Research Fellowship Program under Grant No.\ DGE2140739.

\bibliographystyle{alpha}
\bibliography{ref}

\appendix

\section{Refining \texorpdfstring{\cite{huang2026faster}}{}}


We first give a recap of the algorithm in \cite{huang2026faster}, whose approach follows the same structure as previous work. They first preprocess the graph using some reweighting which reduces the betweenness, so that the negative paths in the graph have more structure. They then identify a set of negative vertices $U$ in the graph, called a negative sandwich, which has additional structure making these negative edges easy to efficiently neutralize. Finally, they construct a hop-reducer to efficiently compute Johnson's potentials for the graph $G_U$, thus neutralizing the negative edges incident to $U$. We now explain this process in more detail.

\paragraph{Betweenness Reduction.}

Before choosing the set of vertices to neutralize, they first preprocess the graph using a reweighting so that there is additional structure, as defined below.

\begin{definition}
    For a parameter $h\in\mathbb{N}$ and $s,t\in V$, the $h$-hop betweenness of $s,t$ is the number of vertices $v$ such that $d^h(s,v)+d^h(v,t)<0$.
\end{definition}

\begin{lemma}[\cite{huang2026faster}, Lemma 4.1]\label{lem:betweenness-reduction-26}
    Given a graph $G$ and some parameter $h=\Omega(\log{n})$, one can compute potentials $\phi$ such that for all $s,t\in V$ and $\eta\in \mathbb{N}$, the $(\eta+O(\log{n}))$-hop betweennness of $s,t$ is at most $n\eta/h$ in $G_\phi$. The algorithm runs in $\tilde{O}(mh+nh^2)$ time.
\end{lemma}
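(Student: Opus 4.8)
The plan is to produce a single potential $\phi$ that simultaneously satisfies a betweenness-reduction guarantee at every dyadic hop scale, exploiting that all of the desired inequalities are difference constraints on $\phi$ and hence are met together by one shortest-path potential (or refuted by a negative cycle). First, it suffices to handle $\eta=2^j$ a power of two with $\eta\le h$: a general $\eta\le h$ rounds up to the next power of two $\eta'\le 2\eta$, and the $(\eta+O(\log n))$-hop betweenness of any pair is at most its $(\eta'+O(\log n))$-hop betweenness, which we will bound by $n\eta'/h\le 2n\eta/h$ (the factor $2$ is absorbed by shrinking $h$ by a constant at the outset), while for $\eta>h$ the bound $n\eta/h>n$ is vacuous. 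So fix the $O(\log h)$ scales $\eta_j=2^j$, set $L_j=\eta_j+c\log n$ for a suitable constant $c$ and $b_j=h/\eta_j$, and for each $j$ independently sample $S_j\subseteq V$ by including each vertex with probability $\Theta(b_j(\log n)/n)$, so that $|S_j|=\Theta(b_j\log n)$ with high probability.

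For every $j$ and $w\in S_j$, compute the hop-bounded distances $d^{L_j}(w,\cdot)$ and $d^{L_j}(\cdot,w)$ by the hybrid Dijkstra/Bellman-Ford routine in $O(L_j(m+n\log n))$ time each; since $b_jL_j=O(h\log n)$ uniformly in $j$, the total over all $j$ and $w\in S_j$ is $\tilde O\big((m+n)\sum_j b_jL_j\big)=\tilde O(mh)$. Now build the auxiliary graph $\hat G$ on vertex set $V\cup\{w^{(j)}:j,\ w\in S_j\}$: take the non-negative edges of $G$ on $V$, and for each $j$, each $w\in S_j$, and each $u\in V$ add edges $u\to w^{(j)}$ of weight $d^{L_j}(u,w)$ and $w^{(j)}\to u$ of weight $d^{L_j}(w,u)$. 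Then $\hat G$ has $n+O(h\log n)$ vertices and $m+O(nh\log n)$ edges, and its only negative edges are incident to the $O(h\log n)$ vertices $w^{(j)}$; since a shortest path is simple and visits each such vertex at most once, it uses $O(h\log n)$ negative edges, so the hybrid routine computes single-source distances from a super-source (joined to every $v\in V$ by a zero-weight edge) in $\tilde O(nh^2)$ time. If it reports a negative cycle, concatenating the corresponding $L_j$-hop walks of $G$ gives a negative closed walk, hence a negative cycle, of $G$; otherwise let $\phi(v)=\delta(v)$, where $\delta$ is the computed distance. (One could instead compress the distance computations into one layered instance along the lines of \Cref{lem:reduction_bw_to_sssp}, at a different running time.)

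For correctness, $\phi$ is valid for $G$ because $\hat G$ keeps all non-negative edges of $G$. Fix $s,t\in V$ and a scale $j$. For every $w\in S_j$ the path $s\to w^{(j)}\to t$ in $\hat G$ has weight $d^{L_j}(s,w)+d^{L_j}(w,t)$, so $\phi(t)-\phi(s)=\delta(t)-\delta(s)\le\min_{w\in S_j}\big(d^{L_j}(s,w)+d^{L_j}(w,t)\big)$. With negative edges frozen, a vertex $v$ lies in the $L_j$-hop betweenness of $(s,t)$ in $G_\phi$ iff $d^{L_j}(s,v)+d^{L_j}(v,t)<\phi(t)-\phi(s)$. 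Because $S_j$ is a random set of size $\Theta(b_j\log n)$, with high probability it meets, for every ordered pair $(s,t)$, the set of the $\lceil n/b_j\rceil$ vertices $v$ of smallest $d^{L_j}(s,v)+d^{L_j}(v,t)$ value; a union bound over the $O(n^2\log h)$ (pair, scale) choices keeps this high-probability. On that event $\min_{w\in S_j}(\cdots)$ is at most the $(n/b_j)$-th smallest such value, so fewer than $n/b_j=n\eta_j/h$ vertices $v$ satisfy $d^{L_j}(s,v)+d^{L_j}(v,t)<\phi(t)-\phi(s)$; since the $(\eta_j+O(\log n))$-hop betweenness is no larger than the $L_j$-hop betweenness, this is the claimed bound, and the dyadic-rounding step finishes the lemma.

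The step I expect to need the most care is the running-time accounting rather than any single deep obstacle: one must verify $b_jL_j=O(h\log n)$ uniformly in $j$ so the distance computations cost $\tilde O(mh)$ and not $\tilde O(mh^2)$, and that the negative edges of $\hat G$ are confined to the $O(h\log n)$ auxiliary vertices so shortest paths there carry $O(h\log n)$ negative edges and the SSSP costs $\tilde O(nh^2)$. The remaining subtleties are bookkeeping across the $O(\log h)$ scales: the additive slacks $c\log n$ and the round-$\eta$-up step must not degrade the bound by more than a constant, handled by shrinking $h$ by a constant and choosing $c$ large enough that the ``$O(\log n)$'' in the statement dominates every $c\log n$ in sight; and one must keep the paper's convention of freezing negative edges before reweighting, so that $d^{L_j}_\phi(s,t)=d^{L_j}(s,t)+\phi(s)-\phi(t)$ and the betweenness sets transform as used above. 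Conceptually, the reason all $O(\log h)\cdot n^2$ guarantees can coexist under one $\phi$ is simply that each is a constraint $\phi(t)-\phi(s)\le(\text{explicit quantity})$, and a single shortest-path potential on $\hat G$ satisfies them all or else exhibits infeasibility as a negative cycle --- which is exactly how Fineman's betweenness reduction (cf.\ \Cref{lem:betweenness-reduction}) works, here applied at all scales at once.
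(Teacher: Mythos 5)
Your proof is correct. Note that the paper does not prove this lemma itself (it is imported from \cite{huang2026faster}); your argument --- sampling $\Theta((h/\eta)\log n)$ hub vertices at each dyadic hop scale $\eta=2^j$, attaching $(\eta+O(\log n))$-hop star distances to an auxiliary graph whose negative edges all touch the $O(h\log n)$ hubs, and extracting one potential from a single SSSP call there --- is exactly the multi-scale version of the single-scale construction the paper reviews before \Cref{lem:reduction_bw_to_sssp} (and of the layered variant in \Cref{lem:new_bw_to_sssp}), and matches the cited source's approach, with the running-time accounting ($\sum_j|S_j|L_j=\tilde O(h)$ for the $\tilde O(mh)$ term, hop bound $O(h\log n)$ on a graph with $O(nh\log n)$ edges for the $\tilde O(nh^2)$ term) carried out correctly.
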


\paragraph{Negative Sandwiches.}

Following previous work, they find a weak $h$-hop negative sandwich for $h\in\mathbb{N}$, which is a triple $(s,U,t)$, where $s,t\in V$ and $U$ is a subset of negative vertices, such that $d^h(s,u)+d^h(u,t)<0$ for all $u\in U$. The step of finding a large negative sandwich is done by using proper hop distances and edge sampling, as in the following lemma.

\begin{lemma}[\cite{huang2025faster}, Lemma 4.2]\label{lem:apx-big-sandwich}
    For $h=\Omega(\log{n})$, there is a randomized algorithm running in $\tilde{O}(mh)$ time that returns either a negative cycle, a weak $h$-hop sandwich $(s,U,t)$, or a set of negative vertices $S$ and the distances $d_S(V,v)$ for all vertices $v\in V$. With high probability, we have $|U|,|S|\ge\Omega(\sqrt{hk})$ (when they are returned by the algorithm).
\end{lemma}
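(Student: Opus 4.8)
The statement is Lemma~4.2 of~\cite{huang2025faster}; the plan is to reconstruct its proof along the lines of Fineman's sandwich-finding routine, lifted to $h$-hop distances. First I would apply the routine behind \Cref{lem:one-negative-outgoing-edge} so that each negative vertex has exactly one negative outgoing edge; then the $k$ negative vertices are in bijection with the $k$ negative edges, and each negative vertex $u$ has $d^1(u,\cdot)<0$ at its unique negative out-neighbor. The only primitive used is the hybrid Dijkstra/Bellman--Ford of~\cite{dinitz2017hybrid}, which computes $d^h(x,\cdot)$, $d^h(\cdot,x)$, or super-source/super-sink $h$-hop distances in $O(h(m+n\log n))=\tilde O(mh)$ time; the whole algorithm will make only $O(\log n)$ such calls.

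The conceptual core is that a \emph{single} negative $h$-hop path already certifies a sandwich: if $P$ is an $s\to t$ path with at most $h$ negative edges and $w(P)<0$, then splitting $P$ at any negative vertex $u\in P$ yields two pieces each with at most $h$ negative edges, so $d^h(s,u)+d^h(u,t)\le w(P)<0$; hence the negative vertices of $P$ form a weak $h$-hop sandwich for $(s,t)$ of width up to $h$. To reach width $\Omega(\sqrt{hk})$ one must make many negative $h$-hop paths share a common endpoint pair $(s,t)$ and take the union of the negative vertices they traverse. Writing $\mathrm{In}^h(u)=\{s:d^h(s,u)<0\}$ and $\mathrm{Out}^h(u)=\{t:d^h(u,t)<0\}$, the number of triples $(s,u,t)$ with $u\in N$ and $d^h(s,u)+d^h(u,t)<0$ is at least $\sum_{u\in N}|\mathrm{In}^h(u)|\cdot|\mathrm{Out}^h(u)|$; charging the (up to $h$) negative vertices on each negative $h$-hop path against the global budget of $k$ negative edges is what forces this count to be large unless a large, directly handleable set of negative vertices already exists.

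This sets up the win/win. Either the triple count is large, and then --- after thresholding negative vertices by $\min(|\mathrm{In}^h(u)|,|\mathrm{Out}^h(u)|)$ against $\sqrt{hk}$ and an averaging/pigeonhole step to cope with the asymmetry between in- and out-reach --- an endpoint pair $(s,t)$ achieving width $\Omega(\sqrt{hk})$ is located by sampling, and we return $U=\{u\in N:d^h(s,u)+d^h(u,t)<0\}$. Here the ``proper hop distances'' refinement is what keeps the number of sampled source/sink candidates (and hence the number of $h$-hop runs) at $O(\log n)$, which is why $h=\Omega(\log n)$ is imposed. Or the triple count is small, in which case one collects a set $S$ of $\Omega(\sqrt{hk})$ negative vertices of low ``traffic'' that are all negatively $h$-reachable from a common super-source, and one additional $h$-hop run from that super-source supplies the distances $d_S(V,\cdot)$ --- which double as a valid potential neutralizing the negative edges out of $S$. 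A negative cycle is reported whenever any $h$-hop run exhibits a negative closed walk.

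The main obstacle I anticipate is exactly the efficient realization of the ``large triple count'' branch. A naive expectation over uniformly random $(s,t)$ only yields a good pair with probability roughly $\sqrt{h/k}$, which would force on the order of $\sqrt{k/h}$ (up to logarithmic factors) independent $h$-hop computations and break the $\tilde O(mh)$ budget. Getting away with $O(\log n)$ runs requires showing the good source/sink candidates are abundant rather than merely good on average --- plausibly via edge sampling among the $k$ negative edges combined with the refined hop distances --- and then reconciling the precise threshold with the remote-set branch so that the two cases provably cover every input graph.
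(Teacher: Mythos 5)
First, note that the paper does not prove this statement at all: \Cref{lem:apx-big-sandwich} is imported verbatim from \cite{huang2025faster} (their Lemma~4.2) and used as a black box, with only the one-line hint that it is ``done by using proper hop distances and edge sampling.'' So there is no in-paper proof to match your reconstruction against; the question is whether your sketch would stand on its own. It would not, and you say so yourself. The high-level win/win structure you describe is the right shape --- either a pair $(s,t)$ witnesses many negative vertices $u$ with $d^h(s,u)+d^h(u,t)<0$ (a sandwich), or enough negative vertices have small enough negative reach to be collected into $S$ and neutralized from a super-source --- and your observation that splitting a negative $h$-hop path at any of its negative vertices certifies membership in a sandwich is correct. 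But the entire content of the lemma is the quantitative and algorithmic part, and both halves are missing: (i) you never establish that the two branches cover every graph at the threshold $\Omega(\sqrt{hk})$ (why does failure to find a $\sqrt{hk}$-wide sandwich force $\Omega(\sqrt{hk})$ negative vertices into a common, simultaneously neutralizable set $S$, and why do the returned distances $d_S(V,\cdot)$ have the required form?); and (ii) you explicitly concede that you cannot locate the good endpoint pair $(s,t)$ within the $\tilde O(mh)$ budget, since a uniformly random pair succeeds only with probability about $\sqrt{h/k}$ and you cannot afford $\sqrt{k/h}$ hop-limited SSSP runs. That second point is precisely where \cite{huang2025faster} deploy proper hop distances and sampling among negative \emph{edges} (not vertex pairs) to get abundance of good candidates rather than a mere averaging bound; gesturing at ``plausibly via edge sampling \dots combined with the refined hop distances'' names the tools without supplying the argument. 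As written, the proposal is a plan with its central step open, not a proof.
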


The key property of negative sandwiches $(s,U,t)$ is that when combined with the guarantee that $(s,t)$ have low betweenness, we can show that the graph can be reweighted such that $U$ has small negative reach. This makes neutralizing the negative edges incident to $U$ significantly easier.

\begin{lemma}[\cite{huang2026faster}, Lemma 3.4]\label{lem:hjq26_bw_to_neg_reach}
    Let $(s,U,t)$ be a weak $h$-hop negative sandwich and let $(s,t)$ have $(k+h)$-hop betweennness $n/\ell$. Let $\phi(v)=\min\{d^h(s,v),-d^h(v,t)\}$. Then the $k$-hop negative reach of $U$ in $G_{\phi}$ has at most $n/\ell$ vertices.
\end{lemma}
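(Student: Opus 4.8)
The plan is to show that every vertex in the $k$-hop negative reach of $U$ in $G_\phi$ also lies in the $(k+h)$-hop betweenness of $(s,t)$ in $G$. Since the latter set has size at most $n/\ell$ by hypothesis, this immediately yields the claimed bound on the negative reach.

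First I would fix an arbitrary vertex $x$ in the $k$-hop negative reach of $U$ in $G_\phi$. By definition there is some $u\in U$ and a $u\to x$ walk of at most $k$ hops and negative weight in $G_\phi$, i.e.\ $d^k_\phi(u,x)<0$. Reweighting by a potential does not change the hop count of any walk and shifts its weight by the fixed quantity $\phi(u)-\phi(x)$, so $d^k_\phi(u,x)=d^k(u,x)+\phi(u)-\phi(x)$, giving $d^k(u,x)+\phi(u)-\phi(x)<0$. Now I would use the two halves of the definition $\phi(v)=\min\{d^h(s,v),-d^h(v,t)\}$. Since $(s,U,t)$ is a weak $h$-hop negative sandwich we have $d^h(s,u)+d^h(u,t)<0$, hence $d^h(s,u)<-d^h(u,t)$, so $\phi(u)=d^h(s,u)$; and for the other endpoint I only need the trivial bound $-\phi(x)\ge d^h(x,t)$. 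Substituting both into the previous inequality yields
\[ d^h(s,u)+d^k(u,x)+d^h(x,t)<0 . \]

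To finish, I would concatenate a $(\le h)$-hop $s\to u$ walk with a $(\le k)$-hop $u\to x$ walk to obtain a $(\le k+h)$-hop $s\to x$ walk, so $d^{k+h}(s,x)\le d^h(s,u)+d^k(u,x)$; and since relaxing the hop bound cannot increase a distance, $d^{k+h}(x,t)\le d^h(x,t)$. Adding these gives $d^{k+h}(s,x)+d^{k+h}(x,t)<0$, so $x$ is in the $(k+h)$-hop betweenness of $(s,t)$, as desired. The argument is short, and the only place that needs care is the hop bookkeeping: fixing a single convention for ``hop'' across $d^k_\phi,d^k,d^h,d^{k+h}$, confirming the reweighting identity $d^k_\phi(u,x)=d^k(u,x)+\phi(u)-\phi(x)$ under it (freezing negative edges if the convention counts negative edges), and checking that walk concatenation adds hop budgets while enlarging a budget is monotone. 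I would also, separately, observe that the same ``appending a non-negative edge does not increase the hop count'' fact shows $\phi$ is a valid potential, which is what the downstream hop-reducer construction uses.
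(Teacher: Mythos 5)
Your proof is correct and is essentially the standard argument: the paper itself does not reprove this lemma (it is cited from \cite{huang2026faster}), and your derivation --- using the sandwich property to get $\phi(u)=d^h(s,u)$, the trivial bound $-\phi(x)\ge d^h(x,t)$, and hop-budget concatenation to place $x$ in the $(k+h)$-hop betweenness of $(s,t)$ --- matches the proof in the cited reference. Your flagged caveat about freezing the negative edges so that $d^k_\phi(u,x)=d^k(u,x)+\phi(u)-\phi(x)$ is exactly the right point to be careful about, and is handled the same way there.
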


\paragraph{Hop Reducers.}
We wish to compute a reweighting to neutralize all negative edges incident to $U$. Naively computing Johnson's reweighting for $G_U$ takes $\tilde{O}(m|U|)$ time, which is too slow. To improve upon this,~\cite{fineman2024single} introduced auxiliary hop reduction graphs to decrease the number of hops in a shortest path. Indeed, given a $h$-hop reducer for $G_U$, we would be able to compute Johnson's reweighting for $G_U$ in $\tilde{O}(m|U|/h)$ time.

\begin{definition}\label{def:hop-reducer}
    Let $h\in\mathbb{N}$. We say that a graph $H=(V_H,E_H)$ is an $h$-hop reducer for a graph $G=(V,E)$ if $V\subseteq V_H$ and $d_G(s,t)\le d_H^{\lceil k/h\rceil}(s,t)\le d^k_G(s,t)$ for all $s,t\in V$ and $k\in\mathbb{N}$.
\end{definition}

In order to compute the $h$-hop reducer, \cite{huang2026faster} iteratively builds $\eta$-hop reducers from $\eta/2$-hop reducers starting with the trivial $1$-hop reducer. To do this, they define a notion of distance estimates which we omit for brevity, and prove the following two lemmas, one which obtains distance estimates from $\eta/2$-hop reducers and one which uses these distance estimates to build $\eta$-hop reducers. 


\begin{lemma}[\cite{huang2026faster}, Lemma 4.3]\label{lem:construct-hop-reducer}
    Let $i\in[L]$. Suppose we have valid distance estimates $\delta_j(s,t)$ at level $j$ for all $j<i$, $s\in U$, and $t\in\overline{U}$. Then one can construct a $2^{i-1}$-hop reducer $H_i$ for $G_i$ with $n_i=O(2^in/h)$ vertices and $m_i=O(2^im/h+|U|^2i)$ edges in $O(m_i+n_i\log{n_i})$ time.
\end{lemma}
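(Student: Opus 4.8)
Since this is Lemma~4.3 of~\cite{huang2026faster}, the plan is to recall and re-derive their construction. The idea is to build $H_i$ by one ``doubling'' step: compose two lower-level hop reducers through a thin middle layer, using the precomputed distance estimates $\delta_j(s,t)$ ($j<i$, $s\in U$, $t\in\overline U$) to supply the shortcut edges that bridge the two halves.

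First I would set up the doubling. Any $s$--$t$ walk in $G_i$ using at most $2^{i-1}$ negative hops can be cut at its middle negative edge into a prefix and a suffix, each with at most $2^{i-2}$ negative hops. The remoteness hypothesis underlying this section guarantees that the cut vertex that lands inside the negative reach of $U$ ranges over a set $R$ with only $O(2^i n/h)$ vertices; these, together with $U$ and the $O(2^i/h)$ layered copies of $G^+$ that glue the prefix half-reducer to the suffix half-reducer, form $V_{H_i}$, giving $n_i=O(2^i n/h)$. For each pair $(s,t)$ with $s\in U$ and $t\in\overline U$, the estimate $\delta_j(s,t)$ at the appropriate level $j<i$ certifies an $s\to t$ route matching the at-most-$2^{i-1}$-negative-hop distance from above while not undershooting the true distance from below; I would add one fresh shortcut edge per such pair and retain those inherited from $H_{i-1}$, contributing $O(|U|^2)$ new edges at level $i$ and $O(|U|^2 i)$ in total. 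The layered copies contribute $O(2^i m/h)$ edges, so $m_i=O(2^i m/h+|U|^2 i)$; assembling $H_i$ only scans these ingredients plus heap/sort overhead, hence $O(m_i+n_i\log n_i)$ time.

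Next I would verify the two inequalities of \Cref{def:hop-reducer}. For $d_G(s,t)\le d_{H_i}^{\lceil k/2^{i-1}\rceil}(s,t)$: every edge of $H_i$ is either a genuine $G^+$-edge in some copy, a between-copy negative edge of $G$, or a shortcut whose weight lower-bounds the length of an actual $G$-walk between its endpoints (by validity of the $\delta_j$), so no $H_i$-path can beat the true distance. For the reverse $d_{H_i}^{\lceil k/2^{i-1}\rceil}(s,t)\le d_G^k(s,t)$: take an optimal $k$-negative-hop $s$--$t$ walk in $G$, partition it into $\lceil k/2^{i-1}\rceil$ blocks of at most $2^{i-1}$ negative hops each, and route every block in $H_i$ by a single $H_i$-hop of the form ``run the half-reducer to the middle vertex of $R$, then take the shortcut'', which does not increase the weight because each shortcut weight is at most its block distance.

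The step I expect to be the main obstacle is bounding the edges while preserving correctness: a naive recursive composition would square the shortcut count at each level, and the role of the $\delta_j$ machinery is precisely to limit the new shortcuts to $O(|U|^2)$ per level while still ensuring that every block whose cut vertex lands in $R$ is covered by a valid estimate at some level $j<i$. Establishing that coverage is exactly where the inductive hypothesis ``valid $\delta_j$ for all $j<i$'' is consumed, and it is the one place the argument genuinely relies on all earlier levels being available.
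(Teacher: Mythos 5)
This lemma is not proved in the paper at all: it is quoted verbatim from \cite{huang2026faster} (their Lemma~4.3) and used as a black box, so there is no in-paper proof to compare your plan against. Judged on its own, your sketch assembles the right standard ingredients (shortcut edges weighted by the precomputed estimates, validity of the estimates giving both directions of the hop-reducer inequality, and a per-level $O(|U|^2)$ edge budget accumulating to $O(|U|^2 i)$), but several of the concrete details do not match the construction the lemma is describing, and the places you gloss over are exactly where the content lies.

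First, $H_i$ is not built from ``$O(2^i/h)$ layered copies of $G^+$'': the host graph $G_i$ is the induced subgraph $G[V_i]$ on the $2^i$-hop negative reach $V_i$ of $U$, whose size bounds $n_i=O(2^in/h)$ and $O(2^im/h)$ come from \Cref{lem:hjq26_bw_to_neg_reach}, and $H_i$ is $G_i$ plus the shortcut edges $(s,t)$ of weight $\delta_j(s,t)$ for $j<i$. The layered picture is the mechanism of \Cref{lem:reduction_bw_to_sssp} in this paper, not of this lemma; importing it here would require a separate argument that negative-hop counts are preserved across layers. Second, the ``cut at the middle negative edge'' doubling is misplaced: it is the engine of \Cref{lem:compute-distance-estimates} (computing $\delta_i$ \emph{from} the reducer $H_i$ by composing two half-scale estimates), whereas constructing $H_i$ from the $\delta_j$ only requires replacing each block of at most $2^{i-1}$ negative hops by a single shortcut from the block's first negative vertex $u\in U$ to the head of its last negative edge in $\overline U$ --- no ``middle vertex of $R$'' appears. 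Finally, you flag but do not resolve why estimates at \emph{all} levels $j<i$ are needed rather than only $j=i-1$: blocks of fewer than $2^{i-1}$ negative hops must be matched to an estimate at the correct scale, since $\delta_j$ upper-bounds only the $2^j$-scale hop distance and a too-coarse estimate need not be small enough. Since that coverage argument is the one step you yourself identify as the crux, the proposal as written is a plausible outline of the cited construction rather than a proof of it.
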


\begin{lemma}[\cite{huang2026faster}, Lemma 4.4]\label{lem:compute-distance-estimates}
    Given a $2^{i-1}$-hop reducer $H_i$ for $G_i$ with $m_i=O(2^im/h+|U|^2i)$ edges and $n_i=O(2^in/h)$ vertices, one can compute valid distance estimates $\delta_i(s,t)$ at level $i$ for all $s\in U$ and $t\in \overline{U}$ with high probability in $O\left(\min\{1,\log(n)/2^i\}(2^i|U|\mu/h+|U|^3i)\right)$ randomized time.
\end{lemma}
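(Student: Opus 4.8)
The plan is to read the level-$i$ estimates directly off the hop reducer $H_i$. Since $H_i$ is a $2^{i-1}$-hop reducer for $G_i$, \Cref{def:hop-reducer} applied with free hop parameter $2^i$ gives
\[ d_{G_i}(s,t)\ \le\ d^{\lceil 2^i/2^{i-1}\rceil}_{H_i}(s,t)\ =\ d^2_{H_i}(s,t)\ \le\ d^{2^i}_{G_i}(s,t) \]
for all $s,t$. So I would take $\delta_i(s,t):=d^2_{H_i}(s,t)$: it never underestimates the true $G_i$-distance and never exceeds the $2^i$-negative-hop distance in $G_i$, which is exactly the two-sided guarantee a valid level-$i$ estimate should carry. (If the precise notion of validity used in \cite{huang2026faster} is weaker, this only makes the estimate easier to produce.)

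First I would handle the regime $2^i\le\log n$, where $\min\{1,\log n/2^i\}=1$. For a single source $s\in U$, computing $d^2_{H_i}(s,\cdot)$ is a $2$-negative-hop single-source shortest path computation, solved by the Dijkstra/Bellman--Ford hybrid of~\cite{dinitz2017hybrid} in $O(m_i+n_i\log n_i)$ time (three Dijkstra passes, relaxing negative edges of $H_i$ twice in between). Running this from all $|U|$ sources costs $O\!\big(|U|(m_i+n_i\log n_i)\big)$; substituting $m_i=O(2^im/h+|U|^2 i)$ and $n_i=O(2^in/h)$ and absorbing the $n_i\log n_i$ overhead into $\mu$ gives $O(2^i|U|\mu/h+|U|^3 i)$, as claimed.

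For $2^i\ge\log n$ I need an additional factor $2^i/\log n$ of savings, which I would get by \emph{batching the sources}. Randomly partition $U$ into $O(|U|\log n/2^i)$ buckets of size $O(2^i/\log n)$; for each bucket run a single $2$-negative-hop Dijkstra from a Johnson-offset super-source joined to that bucket's members, recording for every $t\in\overline U$ the minimum shortcut value over the bucket. This costs $\tilde O(m_i)$ per bucket and $\tilde O\!\big((\log n/2^i)(2^i|U|\mu/h+|U|^3 i)\big)$ in total. The reason a per-bucket minimum still yields a valid level-$i$ estimate for every required pair $(s,t)$, with high probability, is that at this scale $U$ is $\Theta(2^i)$-remote --- its negative reach at hop scale $\Theta(2^i)$ is an $O(2^{-i})$-fraction of $V$, which is precisely what the betweenness-reduction machinery and \Cref{lem:hjq26_bw_to_neg_reach} buy us --- so each relevant pair is ``witnessed'' by enough members of $U$ that a random bucket containing $s$ also contains a witness, and a union bound over the at most $|U|\cdot n$ pairs upgrades this to the high-probability claim.

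The step I expect to be the main obstacle is this last correctness argument for the batched computation: matching the per-bucket minimum against the exact definition of ``valid distance estimate at level $i$'' from~\cite{huang2026faster}, and verifying that the remote/betweenness structure of $U$ is strong enough that random bucketing never destroys validity for any pair. The rest --- the hop-reducer inequality, the~\cite{dinitz2017hybrid} time bound, and the arithmetic combining $m_i$, $n_i$, and $|U|$ --- is routine. A final point I would check is that feeding these $\delta_i$ into \Cref{lem:construct-hop-reducer} at level $i+1$ uses only the two-sided bound established above, so the inductive construction of $H_{i+1}$ goes through unchanged.
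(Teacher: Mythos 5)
First, note that the paper does not prove \Cref{lem:compute-distance-estimates} at all: it is imported verbatim from \cite{huang2026faster} (their Lemma 4.4), and the paper even declines to define what a ``valid distance estimate at level $i$'' is. So there is no in-paper argument to match your proposal against; what can be judged is whether your reconstruction would actually establish the statement.

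It would not, and the failure is exactly at the step you flag. In the regime $2^i\ge\log n$ you need the extra factor $\min\{1,\log(n)/2^i\}$, and your mechanism for it --- partition $U$ into buckets of size $\Theta(2^i/\log n)$, run one super-source computation per bucket, and record the per-bucket \emph{minimum} --- produces, for a given $t$, the value $\min_{s'\in B} \bigl(d^2_{H_i}(s',t)+\text{offset}(s')\bigr)$ rather than anything tied to the particular source $s$. Whatever the precise definition of validity in \cite{huang2026faster}, it must include a lower bound of the form $\delta_i(s,t)\ge d_{G_i}(s,t)$ (otherwise the estimates could not be safely inserted as shortcut edges in \Cref{lem:construct-hop-reducer} without shrinking true distances), and a bucket minimum can be far below $d_{G_i}(s,t)$ for a specific $s$. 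The remoteness/betweenness structure of $U$ bounds the \emph{size} of negative reaches; it says nothing that lets you transfer a distance from one source $s'\in B$ to another $s\in B$, so the ``witness'' argument does not repair this. The standard way to earn the $\log(n)/2^i$ factor runs in the opposite direction: sample $O(|U|\log n/2^i)$ sources $r$, compute exact distances from each sampled $r$ in $H_i$, and define $\delta_i(s,t)$ for unsampled $s$ by concatenation through a sampled $r$ that a relevant $2^i$-hop path from $s$ is guaranteed (w.h.p.) to hit early; concatenation only \emph{over}estimates $d_{G_i}(s,t)$, preserving the lower-bound side of validity, while the hitting argument gives the upper-bound side. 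Your small-$2^i$ case (all-sources $2$-hop Dijkstra/Bellman--Ford via \cite{dinitz2017hybrid}, with the $m_i$, $n_i$ arithmetic) is fine as far as it goes, but the lemma's content is precisely the sublinear-in-$|U|$ regime, and that part of your argument does not stand.
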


Now, we can sketch the proof of the $\tilde{O}(mk^{3/4}+nk)$ time algorithm, focusing on the cases which are bottlenecks for the running time. We first apply~\Cref{lem:betweenness-reduction-26} to reduce betweenness in $\tilde{O}(mh)$ time. Next, we apply~\Cref{lem:apx-big-sandwich} to find a weak $h_0$-hop negative sandwich $(s,U,t)$ of size $|U|=\tilde{\Omega}(\sqrt{k})$ for $h_0=O(\log{n})$. By~\Cref{lem:hjq26_bw_to_neg_reach}, we can compute a reweighting such that after the reweighting $U$ can negatively reach at most $n\eta/h$ vertices in $\eta$ hops for all $\eta\in\mathbb{N}$. With a slight abuse of notation, let $G$ denote the graph after the reweightings. The goal is to construct a $h$-hop reducer for the graph $G_U$.

To construct the $h$-hop reducer, we iteratively construct $2^i$-hop reducers from $2^{i-1}$-hop reducers, over $L=\lceil \log_2h\rceil +1$ levels. For each $i\in[L]$, let $V_i$ denote the $2^i$-hop negative reach of $U$ and let $G_i$ denote the induced subgraph of $G[V_i]$. Note that by the betweenness reduction guarantees, $G_i$ has $n_i=O(2^in/h)$ vertices and $m_i=O(2^im/h)$ edges. Starting with $H_1=G_1$ being a $2^{i-1}$-hop reducer for $G_1$ when $i=1$, we iteratively take the $2^{i-1}$-hop reducer for $G_{i-1}$ and apply \Cref{lem:compute-distance-estimates} to compute distance estimates $\delta_{i}(s,t)$. Then, we use these distance estimates to apply \Cref{lem:construct-hop-reducer} to construct a $2^{i}$-hop reducer for $G_{i}$. Iterating this process for $i\in[L]$ gives the desired $h$-hop reducer.

In summary, we use $\tilde O(mh+nh^2)$ time for the betweenness reduction and $\tilde{O}(mh)$ time to find a $O(\log n)$-hop negative sandwich $U$ of size $\tilde{\Omega}(\sqrt k)$ in \Cref{lem:betweenness-reduction-26,lem:apx-big-sandwich,lem:hjq26_bw_to_neg_reach}. We then use $\tilde{O}(m|U|/h+k^{3/2})$ time to construct the $h$-hop reducer, after which computing the Johnson's potentials to neutralize $U$ takes $\tilde{O}(m\sqrt{k}/h+k^{3/2})$ time. Setting $h=k^{1/4}$, the algorithm neutralizes $\tilde{\Omega}(\sqrt k)$ negative edges in $\tilde O(nk^{1/2}+mk^{1/4}+k^{3/2})$ time, so the running time for neutralizing all edges is $\tilde O(nk+mk^{3/4})$.





\subsection{Alternative Hop-Reducer Construction}

Our first observation is that we can also construct a hop reducer for a graph $G$ using one call to a SSSP oracle. We explain after how this can be used to improve~\cite{huang2026faster}.

\begin{claim}\label{cl:easy-hop-reducer}
    Let $G$ be a graph with $k$ negative vertices. We can compute a $k$-hop reducer $H$ of $G$ with $O(m)$ edges using one call to an SSSP oracle on graph $G$ and $\tilde{O}(m)$ additional time.
\end{claim}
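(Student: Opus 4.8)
The plan is to reduce the whole construction to a single Johnson-style potential computation. First I would form the graph $G^{\sigma}$ by adding a fresh super-source $\sigma$ together with a zero-weight edge $\sigma\to v$ for every $v\in V$; since $\sigma$ has no negative outgoing edge, $G^{\sigma}$ still has exactly $k$ negative vertices, $O(m)$ edges, and (as $\sigma$ lies on no cycle) a negative cycle iff $G$ does. I would then make the single SSSP oracle call on $G^{\sigma}$ from $\sigma$. If it reports a negative cycle, that cycle already lives in $G$, so we report it and stop. Otherwise the oracle returns $\phi(v):=d_{G^{\sigma}}(\sigma,v)$; by the Bellman--Ford optimality conditions $\phi(v)\le\phi(u)+w(u,v)$ for every edge $(u,v)\in E$, so $\phi$ is a valid potential and every reweighted edge $w_\phi(u,v)=w(u,v)+\phi(u)-\phi(v)$ is non-negative, while $\phi(v)\le w(\sigma,v)=0$.

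The hop reducer is then $H:=G_\phi$, the graph $G$ reweighted by $\phi$: it has exactly $m=O(m)$ edges and, crucially, \emph{no} negative edges, so $d^{j}_H(s,t)=d_H(s,t)$ for every hop bound $j\ge 0$, and $d_H(s,t)=d_G(s,t)+\phi(s)-\phi(t)$. Carrying $H$ together with the output potential $\phi$, I would check the three conditions of \Cref{def:hop-reducer}: $V\subseteq V_H$ is immediate; the lower bound $d^{\lceil j/k\rceil}_H(s,t)\ge d_G(s,t)$ (after undoing the additive $\phi(s)-\phi(t)$) holds because $\phi$ is valid and reweighting preserves shortest-path structure; and the upper bound $d^{1}_H(s,t)\le d^{k}_G(s,t)=d_G(s,t)$ holds because a shortest path has at most $k$ negative edges, so $d^{k}_G=d_G$, and in $H$ even a single negative hop is never needed since $H$ has no negative edges at all. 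The extra work — building $G^{\sigma}$, performing the reweighting, and the usual freezing/bookkeeping of negative edges — is $O(m+n)=\tilde O(m)$, and the only call to a negative-weight routine is the single SSSP query on $G$.

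The step I expect to be the main obstacle, and essentially the only place care is needed, is the interface between reweighting and \Cref{def:hop-reducer}: a potential shifts all pairwise distances by $\phi(s)-\phi(t)$, so the statement ``$H$ is a $k$-hop reducer of $G$'' must be read together with the simultaneously-output valid potential $\phi$ (equivalently, $H$ is a $k$-hop reducer of $G_\phi$, which has no negative edges and is trivially its own reducer for any hop bound). I would verify that this is exactly the form in which the downstream refinement of~\cite{huang2026faster} consumes a hop reducer — which it is, since that algorithm composes potentials throughout anyway, and a $k$-hop reducer is there only used to collapse the $\lceil |U|/h\rceil$-hop Johnson potential computation down to one hybrid Bellman--Ford/Dijkstra pass. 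Once this accounting is pinned down there is little left: the single SSSP call returns a potential that flattens every negative edge of $G$ at once, collapsing the hop structure from $k$ hops to $0$.
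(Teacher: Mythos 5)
The potential computation is the right first step, but the object you hand back, $H=G_\phi$, is not a $k$-hop reducer of $G$ in the sense of \Cref{def:hop-reducer}: that definition requires the genuine inequalities $d_G(s,t)\le d_H^{\lceil j/k\rceil}(s,t)\le d_G^j(s,t)$ with no potential correction, whereas your $H$ satisfies $d_H(s,t)=d_G(s,t)+\phi(s)-\phi(t)$, which violates the lower bound whenever $\phi(s)<\phi(t)$. Your parenthetical ``after undoing the additive $\phi(s)-\phi(t)$'' is precisely the step the definition does not permit, and your proposed remedy --- read the claim as ``$H$ is a hop reducer of $G_\phi$, carried together with $\phi$'' --- changes the statement rather than proving it. Nor is this reinterpretation obviously harmless downstream: the hop reducer of \Cref{cl:easy-hop-reducer} serves as the base case of the bootstrapping, where \Cref{lem:compute-distance-estimates} extracts distance estimates $\delta_i(s,t)$ for actual distances in $G_i$; feeding it a graph whose distances are shifted by an $(s,t)$-dependent amount would corrupt those estimates unless the potential is threaded through every subsequent lemma, which you do not verify.

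The missing idea is the paper's two-copy construction, which is the real content of the proof beyond computing $\phi$. Take $G_1$ with the original weights and $G_2$ with the weights of $G_\phi$, and connect $v_1\to v_2$ with weight $\phi_{\max}-\phi(v)\ge0$ and $v_2\to v_1$ with weight $\phi(v)-\phi_{\max}\le0$. A path $s_1\to s_2\leadsto t_2\to t_1$ then has weight $(\phi_{\max}-\phi(s))+(w(P)+\phi(s)-\phi(t))+(\phi(t)-\phi_{\max})=w(P)$: the entry and exit edges cancel the potential shift exactly, so distances are preserved with no correction, and each such path uses only the single negative edge $(t_2,t_1)$, giving $d_H^1(s,t)=d_G(s,t)$ as \Cref{def:hop-reducer} requires. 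Everything else in your write-up (the super-source, validity of $\phi$, the $O(m)$ size and $\tilde O(m)$ overhead) is fine.
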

\begin{proof}
    Using the SSSP oracle, we can compute Johnson's reweighting $\phi(v)=d(V,v)$ for each $v\in V$.
    Using this $\phi$, we can define our hop-reducer $H$. Initialize $H$ with two copies of the graph $G$, denoted $G_1=(V_1,E_1)$ and $G_2=(V_2,E_2)$. We set the weight of edges in $G_1$ based on the original graph $G$ and the weight of edges in $G_2$ based on the neutralized graph $G_{\phi}$. Next, we define the edges between $G_1$ and $G_2$. For a vertex $v\in V$, let $v_1$ and $v_2$ denote the copies of $v$ in $G_1$ and $G_2$, respectively. Also, let $\phi_{\max}=\max_{v\in V}\phi(v)$. For every vertex $v\in V$, we add an edge from $v_1$ to $v_2$ with weight $\phi_{\max}-\phi(v)$, and an edge from $v_2$ to $v_1$ with weight $\phi(v)-\phi_{\max}$. 
    
    To prove that $H$ is a $k$-hop reducer of $G$, it suffices by \Cref{def:hop-reducer} to prove $d_G(s,t)=d_H^1(s,t)$ for each $s,t\in V$. Consider any $s$-$t$ path $P$ from $s$ to $t$ in $G$; we claim there is a $s_1$-$t_1$ path in $H$ which contains one negative edge and has the same length as $P$. Take the edge $(s_1,s_2)$ which has weight $\phi_{\max}-\phi(s)$, the path following (the copy of) $P$ in $G_2$ which has weight $w(P)+\phi(s)-\phi(t)$, and finally the edge $(t_2,t_1)$ which has weight $\phi(t)-\phi_{\max}$. This $s_1$-$t_1$ path has total weight $w(P)$, and the only negative edge is the $(t_2,t_1)$ edge. Next, consider any 1-hop $s_1$-$t_1$ path in $H$; we claim there is a $s$-$t$ path in $G$ which has the same length as $P$. If the path had 0 hops, the path would be in $G_1$, so the claim is trivial. If the path had 1 hop, it is without loss of generality of the form: take the edge $(s_1,s_2)$, a $s_2$-$t_2$ path $P$ in $G_2$, and finally the edge $(t_2,t_1)$. By the same calculation as before, this path has the same weight as if we followed the path $P$ in $G$.
\end{proof}

Now, we explain why this naive hop-reducer construction is useful. At level $i\in[L]$ in the bootstrapping process, the graph $G_i$ has $O(m/(h/2^i))$ edges and $|U|$ negative edges. In~\cite{huang2026faster}, it takes $\tilde{O}(m|U|/h+|U|^3/2^i)$ time to compute the $2^i$-hop reducer. Suppose that we recursively use an $\tilde O(mk^\alpha)$ time algorithm for computing the shortest path, for some $\alpha<1$, when we apply \Cref{cl:easy-hop-reducer}. Then computing the $2^i$-hop reducer would take $\tilde{O}(m|U|^{\alpha}2^i/h)$ time, which is dominated by $\tilde{O}(m|U|/h)$ whenever $2^i\le |U|^{1-\alpha}$. This allows us to do bootstrapping starting from this level and skipping the lower bootstrapping levels. 

The main benefit of skipping the earlier levels is that we can use a slightly weaker betweenness reduction guarantee. Specifically, we only need to guarantee that the $(\eta+2^j)$-hop negative reach for $\eta\in[2^j,h]$ is small, where $2^j=|U|^{1-\alpha}$. This allows us to pick a larger $h_0$ at the beginning, because we only need to bound the negative reach $V_i$ for larger $i$. Since we only want to bound the size of the $\eta$-hop negative reach for $\eta\in[2^j,h]$, then the algorithm in \Cref{lem:betweenness-reduction-26} can also be used to ensure that for any $\eta\in[2^j,h]$, the $(\eta+2^j)$-hop betweenness is at most $n/(h/\eta)$, without changing the running time. This enables us to use a larger $h=2^j$ when finding the weak $h$-hop negative sandwich, thus finding larger sandwiches since the size in \Cref{lem:apx-big-sandwich} is $|U|=\tilde{\Omega}(\sqrt{kh})$.

\subsection{Recursive (Weak) Betweenness Reduction}

Secondly, we use the idea in~\Cref{thm:faster_bw_reduction} to reduce the weaker betweenness reduction guarantees in \Cref{lem:betweenness-reduction-26} to a shortest path problem. We note that this cannot directly improve the running time of the betweenness reduction routine, because the weaker betweenness reduction still requires $h$-hop SSSP which takes $\tilde{O}(mh)$ time and the running time of \Cref{lem:betweenness-reduction-26} is already $\tilde{O}(mh)$. Even so, we show how the idea in~\Cref{thm:faster_bw_reduction} can further improve our algorithm.


Instead of improving the runtime directly, we observe that the reduction helps us obtain a better betweenness guarantee for smaller $\eta$ in the same running time. This, in turn, means that the graph $V_i$ used for hop reducer construction has fewer edges in the first few levels. Combined with the idea in the previous subsection, the recursive betweenness reduction allows us to skip more levels in the same running time, which leads to even larger negative sandwiches.

Now, we state the recursive version of this algorithm. We remark that our betweenness reduction is an edge based guarantee (i.e., the nodes are weighted by their degree), which is slightly different from the guarantees from previous work. This was not needed in \cite{huang2026faster}, since they assumed (without loss of generality) that the graph has uniform degree. This becomes more complicated in our algorithm, since we recursively solve SSSP and need the uniform degree guarantee on the recursive instances as well. As a result, we write the degree-weighted version. 

\begin{definition}
    For a parameter $h\in\mathbb{N}$ and $s,t\in V$, the $h$-hop degree-betweenness of $s,t$ is the sum of the out-degrees $\deg^+(v)$ over all vertices $v$ such that $d^h(s,v)+d^h(v,t)<0$.
\end{definition}


\begin{lemma}\label{lem:new_bw_to_sssp}
    There is an algorithm that given a graph $G$ and parameters $h_0\ge1$, $h\ge h_0$, and $0<\alpha\le1$, either declares that $G$ contains a negative cycle, or find a valid reweighting such that
    \begin{itemize}
        \item for all $s,t\in V$, the $2h_0$-hop degree-betweenness of $(s,t)$ is at most $m/(h/h_0)^{1/\alpha}$, and
        \item for all $s,t\in V$ and $\eta\in[h_0,h]$, the $(\eta+h_0)$-hop degree-betweenness of $(s,t)$ is at most $m/(h/\eta)$.
    \end{itemize}
    The algorithm uses one oracle call to a SSSP instance with $O(mh_0)$ edges and $O((h/h_0)^{1/\alpha}\log n)$ negative edges and takes $\tilde O(mh)$ additional time.
\end{lemma}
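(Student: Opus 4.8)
The lemma is the degree-weighted, recursive analogue of \Cref{lem:reduction_bw_to_sssp}, combining the "layered graph simulates $h$-hop distance" trick with the stronger sampling scheme needed to get the two-level betweenness guarantee of \Cref{lem:betweenness-reduction-26}. The plan is to build a single SSSP instance consisting of a forward and a backward layered stack of copies of $G^+$ (the non-negative part of $G$), exactly as in \Cref{lem:reduction_bw_to_sssp}: $h$ forward copies $G^{\textrm{forward}}_1,\dots,G^{\textrm{forward}}_h$, $h$ backward copies $G^{\textrm{backward}}_h,\dots,G^{\textrm{backward}}_1$, and a base copy $G_0$, with each negative edge of weight $w$ lifted to an inter-layer edge of weight $w+M\ge0$. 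The base copy $G_0$ plays the role of both $u_0$ (source side) and $v_0$ (sink side). Then, for a vertex $w\in S$, a shortcut edge from layer $\ell$ on the forward side to layer $\ell$ on the backward side (with appropriately scaled weight $-2\ell M$, or more precisely the weight cancelling the $+M$ offsets accumulated over $\ell$ forward hops and to be spent over $\ell$ backward hops) simulates "$d^{\ell}(u,w)+d^{\ell}(w,v)$" when routed $u_0 \to w^{\textrm{forward}}_\ell \to w^{\textrm{backward}}_\ell \to v_0$. The key difference from \Cref{lem:reduction_bw_to_sssp} is that to certify the two-part guarantee — a $2h_0$-hop bound of $m/(h/h_0)^{1/\alpha}$ and an $(\eta+h_0)$-hop bound of $m/(h/\eta)$ for all $\eta\in[h_0,h]$ — we need to sample $S$ at several "scales": roughly, for each dyadic $\eta\in[h_0,h]$, sample enough vertices so that with high probability, among the (degree-weighted) $m/(h/\eta)$ vertices with smallest $d^{\eta}(s,v)+d^{\eta}(v,s')$ value for each pair $(s,s')$, at least one is in $S$; plus a denser batch of $(h/h_0)^{1/\alpha}\log n$ vertices to handle the small-$\eta=h_0$ regime. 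Adding shortcut edges at the relevant layers for each sampled vertex then forces $\phi(v_0)-\phi(u_0)\le d^\eta(u,w)+d^\eta(w,v)$ for every sampled $w$ and every scale $\eta$, which is exactly the inequality that bounds the corresponding hop-betweenness.

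The steps, in order: (1) Construct the layered graph as above; argue $O(mh)$ edges and, crucially, that the number of \emph{negative} edges equals $|S|$ (only the shortcut edges are negative, since all lifted inter-layer edges have weight $\ge0$ and all copy-internal edges are non-negative), giving $O((h/h_0)^{1/\alpha}\log n)$ negative edges after choosing the sampling sizes below. Also note $O(mh_0)$ vs.\ $O(mh)$: if only forward/backward layers up to depth $h_0$ are "densely populated" with shortcut endpoints we only need $O(mh_0)$ layers? — actually the instance must have $O(mh_0)$ edges per the statement, so the stacks have height $O(h_0)$, not $h$; re-examine — the $(\eta+h_0)$-hop guarantee for $\eta$ up to $h$ is obtained not by a stack of height $h$ but by the recursion/oracle doing the $h$-hop work, while the explicitly constructed instance only needs height $h_0$. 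I would therefore build stacks of height $O(h_0)$, use shortcut edges at layer $h_0$ for the "big" sample (size $(h/h_0)^{1/\alpha}\log n$) to certify the $2h_0$-hop bound, and handle the $\eta\in[h_0,h]$ guarantees by instead invoking \Cref{lem:betweenness-reduction-26}-style $h$-hop Dijkstra/Bellman-Ford directly in the $\tilde O(mh)$ additional time (as the text of \Cref{lem:betweenness-reduction-26} already achieves the $(\eta+O(\log n))$-hop bound in $\tilde O(mh+nh^2)$, and the $nh^2$ term is what we are avoiding by the reduction). (2) Negative-cycle case: a negative cycle in the instance must wind around the forward-then-backward loop an integer number of times; the $+M$ offsets cancel exactly against the $-2\ell M$ shortcut weights per loop, so it projects to a negative closed walk, hence a negative cycle, in $G$. (3) Potential case: for each sampled $w$ and each scale $\eta$, exhibit the explicit route through $w^{\textrm{forward}}_\eta \to w^{\textrm{backward}}_\eta$ realizing weight $d^\eta(u,w)+d^\eta(w,v)$, so $\phi(v_0)-\phi(u_0)\le d^\eta(u,w)+d^\eta(w,v)$; since degree-betweenness counts $\sum_{v:\,d^\eta_\phi(s,v)+d^\eta_\phi(v,t)<0}\deg^+(v)$ and $d^\eta_\phi(s,v)+d^\eta_\phi(v,t)=d^\eta(s,v)+d^\eta(v,t)+\phi(s)-\phi(t)$, a standard "a sampled vertex lands in the smallest-value prefix w.h.p." argument over each dyadic scale bounds each degree-betweenness by the target. (4) Bookkeeping: collect sample sizes across $O(\log(h/h_0))$ scales plus the big batch; the dominant count is the big batch $(h/h_0)^{1/\alpha}\log n$, matching the stated negative-edge bound; the additional running time is $O(mh_0)$ for the construction plus $\tilde O(mh)$ for the direct $h$-hop distance computations, as claimed.

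The main obstacle I expect is reconciling the \emph{degree-weighted} sampling with the two asymmetric guarantees simultaneously. Ordinary betweenness reduction samples vertices uniformly; here the "mass" of a vertex is $\deg^+(v)$, so the sampling must be done proportionally to out-degree (equivalently, sample a uniform random \emph{out-edge} and take its tail), and the hitting-set argument must be run against the $m/(h/\eta)$ \emph{degree-mass} prefix rather than a cardinality prefix — one has to be careful that a single heavy vertex cannot blow past the mass threshold, which is handled because degree-betweenness measures total mass, not count, so the bound $m/(h/\eta)$ on total mass is exactly what the hitting argument delivers. The second subtlety is ensuring the recursive oracle instance genuinely has uniform (bounded) degree — the remark before the lemma flags this — which I would enforce by the standard degree-splitting reduction applied to the constructed layered graph before handing it to the oracle, noting that splitting a vertex of out-degree $d$ into a path of $O(d)$ unit-out-degree vertices preserves distances, multiplies edge count by a constant, and does not create new negative edges (the split edges get weight $0$). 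Finally, I would double-check the exact exponent $1/\alpha$: it should fall out of requiring that the $2h_0$-hop degree-betweenness threshold $m/(h/h_0)^{1/\alpha}$ equals the sample size (up to $\log$ factors) needed so that the downstream hop-reducer bootstrapping can skip down to level $2^j=|U|^{1-\alpha}$, i.e., the $1/\alpha$ is precisely tuned so that "threshold $=$ number of negative edges in the oracle call" stays consistent with the $\tilde O(mk^\alpha)$ recursive running time; verifying this consistency is the one place where I would actually write out the arithmetic.
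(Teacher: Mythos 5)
Your high-level plan — a forward/backward layered stack of copies of $G^+$, degree-proportional sampling implemented by sampling out-edges uniformly and taking tails, and a multi-scale hitting-set argument against degree-mass prefixes — matches the paper's proof, and you correctly flag the two real subtleties (degree weighting and the $O(mh_0)$ edge bound on the oracle instance). But your resolution of the second subtlety is where the argument breaks. You truncate the stack to height $O(h_0)$, reserve the layered construction for the $2h_0$-hop guarantee only, and propose to obtain the $(\eta+h_0)$-hop guarantees for $\eta\in[h_0,h]$ by a direct \Cref{lem:betweenness-reduction-26}-style computation. Computing the hop-bounded distances from the sampled vertices does fit in $\tilde O(mh)$, but turning those distances into a potential requires solving SSSP on a Fineman-style star/auxiliary graph with $O((h/h_0)\log n)$ negative vertices and $\Omega(n(h/h_0))$ edges, which costs $\tilde O\bigl(n(h/h_0)^2\bigr)$ by the trivial algorithm; this is not bounded by $\tilde O(mh)$ in general (e.g.\ when $h_0$ is small relative to $h$), and it is exactly the $nh^2$-type term the lemma is designed to eliminate. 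You also leave unaddressed how the potential from the oracle call and the potential from the direct computation are composed so that both guarantees survive.

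The paper's construction avoids this as follows: it builds the \emph{full} stack of height $\Theta(h+h_0)$ (this $O(mh)$-size construction is charged to the additional time, not to the oracle), places the scale-$i$ shortcut edges at layer $2^ih_0+h_0$, and then solves the scales \emph{sequentially on induced sub-stacks} $H_i$. Only the $i=0$ sub-stack — height $2h_0$, hence $O(mh_0)$ edges, carrying the dense batch of $O((h/h_0)^{1/\alpha}\log n)$ negative shortcut edges — goes to the oracle; for each $i\ge1$ the sub-stack has $O(m2^ih_0)$ edges and only $O\bigl(\tfrac{h}{2^ih_0}\log n\bigr)$ negative edges, so the trivial hop-bounded algorithm runs in $\tilde O(mh)$ per scale because the product of these two quantities is always $\tilde O(mh)$ (unlike $n\cdot k_i^2$ for the star graph). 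The per-round potentials are extended to the untouched layers via $\phi_i^{\min}$ and $\phi_i^{\max}$, with an induction showing that after round $i$ only the shortcut edges $N_{i+1},\dots,N_L$ can still be negative. To repair your proof you would need to adopt this "one big stack, oracle only on the smallest induced sub-instance, trivial algorithm on the rest, sequential potential extension" structure rather than offloading the large scales to a separate auxiliary-graph computation.
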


\begin{proof}
    We now define the graph $H$ which we apply the SSSP oracle to. Let $G_{0}$, $G^{\tforw{}}_i$ for $i\in[h+h_0]$, and $G^{\tback{}}_i$ for $i\in[h+h_0]$ be copies of $G^+$, which is $G$ with all negative edges removed. For each vertex $v$, each copy of $v$ is referenced using the vertex name along with the subscripts/superscripts of the graph copy (e.g., $v_0$ or $v_5^{\tforw{}}$). 
    
    Let $M\in\mathbb{R}_{\ge0}$ be larger than the absolute value of all negative weights. For each negative edge $(u,v)$ in $G$ with weight $w(u,v)$, we add the following set of edges, each with weight $w(u,v)+M\ge 0$:
    \begin{itemize}
        \item an edge from $u_0$ to $v_1^{\tforw{}}$,
        \item an edge from $u_i^{\tforw{}}$ to $v_{i+1}^{\tforw{}}$ for each $i\in[h+h_0-1]$,
        \item an edge from $u_{i+1}^{\tback{}}$ to $v_{i}^{\tback{}}$ for each $i\in[h+h_0-1]$, and
        \item an edge from $u_1^{\tback{}}$ to $v_0$.
    \end{itemize} 

    Next, we define the negative edges in $H$. 
    Let $L=\lceil\log(h/h_0)\rceil$; we sample $L+1$ sets of edges $S_0,S_1,\ldots,S_{L}$, where edges $e\in E$ are sampled independently in each set. In the first set $S_0$, each edge is sampled with probability $p_0=\Theta((h/h_0)^{1/\alpha}\log(n)/m)$ and in the remaining sets $S_i$ for $i\ge 1$, each edge is sampled with probability $p_i=\Theta(h/(2^ih_0)\log(n)/m)$. For each $i\in[0,\ldots,L]$, let $T_i=\{x:(x,y)\in S_i\}$ denote the set of vertices which have an out-edge in $S_i$. By Chernoff bounds, we have $|S_0|=\Theta((h/h_0)^{1/\alpha}\log{n})$ and for $i\ge 1$, we have $|S_i|=\Theta(h/(2^ih_0)\log{n})$ with high probability. Thus, we also have $|T_0|=O((h/h_0)^{1/\alpha}\log{n})$ and for $i\ge 1$, we have $|T_i|=O(h/(2^ih_0)\log{n})$. Now, for each set $T_i$ and each vertex $v\in T_i$, we add an edge from $v^{\tforw{}}_{{2^ih_0+h_0}}$ to $v^{\tback{}}_{{2^ih_0+h_0}}$ with weight $-2M(2^ih_0+h_0)$. Finally, let $N_i$ denote the set of negative edges added for set $T_i$.
    
    

    For now, assume we have found some reweighting $\phi$ which neutralizes all the negative edges. Following the analysis in~\Cref{thm:faster_bw_reduction}, any negative cycle on the new graph indicates a negative cycle on $G$ and for any such reweighting $\phi$, we have the following for all pairs $s,t\in V$ and $i\in\{0,\ldots,L\}$:
    \begin{align}
        \phi(t_0)-\phi(s_0)\leq \min_{x\in T_i}\left(d^{2^ih_0+h_0}(s,x)+d^{2^ih_0+h_0}(x,t)\right)\label{eq:phi}
    \end{align}
    To verify $(\eta+h_0)$-hop degree-betweenness after the reweighting, fix $s,t\in V$ and a hop parameter $\eta\in[h_0,h]$. First consider $\eta=h_0$, and rank all the edges $(x,y)\in E$ in increasing order of $d^{2h_0}(s,x)+d^{2h_0}(x,t)$, with ties broken arbitrarily. With high probability, $S_0$ contains some edge $(x,y)$ with rank at most $m/(h/h_0)^{1/\alpha}$ between $s$ and $t$. For such $(x',y')$ later in the ordering, we will have the following after the reweighting, by \Cref{eq:phi}: 
    \begin{align*}
        d^{2h_0}_{\phi}(s,x')+d_{\phi}^{2h_0}(x',t)&=d^{2h_0}(s,x')+d^{2h_0}(x',t)+\phi(s_0)-\phi(t_0)\\
        &\ge d^{2h_0}(s,x')+d^{2h_0}(x',t)-\min_{x\in T_i}\left(d^{2^ih_0+h_0}(s,x)+d^{2^ih_0+h_0}(x,t)\right)\\
        &\ge d^{2h_0}(s,x')+d^{2h_0}(x',t)-d^{2^ih_0+h_0}(s,x)+d^{2^ih_0+h_0}(x,t)\\
        &\ge 0.
    \end{align*} 
    Thus, vertices $x'$ for which $d_{\phi}^{2h_0}(s,x')+d^{2h_0}_{\phi}(x',t)$ can be negative have all their out-edges within first $m/(h/h_0)^{1/\alpha}$ edges in the ordering. Thus, the $2h_0$-hop degree-betweenness is at most $m/(h/h_0)^{1/\alpha}$. For general $\eta\in[h_0,h]$, let $i=\lceil \log_2(h_0/\eta)\rceil$ and rank all the edges $(x,y)\in E$ in increasing order of $d^{2^ih_0+h_0}(s,x)+d^{2^ih_0+h_0}(x,t)$. With high probability, $S_i$ contains some edge $(x,y)$ with rank at most $m/(h/(2^ih_0))$ between $s$ and $t$. Via a similar argument as above, this implies that the $(2^ih_0+h_0)$-hop degree-betweenness is at most $m/(h/(2^ih_0))$.
    

    Now, we show how to compute a reweighting $\phi$ to neutralize the negative edges. To compute the shortest path on the constructed graph, we use the following procedure. To simplify notation, we use $G^{\tforw{}}_{{[l,r]}}$ to denote the union of $G^{\tforw{}}_{{i}}$ for $i\in \{l,\ldots,r\}$ together with the edges between them, and analogously, $G^{\tback{}}_{{[l,r]}}$ to denote the union of $G^{\tback{}}_{{i}}$ for $i\in\{l,\ldots,r\}$ with the edges between them.

    \paragraph{The Algorithm.} For $i=0,1,\ldots,L$ in order, do the following:
    \begin{enumerate}
        \item Let $H_i$ denote the induced subgraph $H$ that contains $G_{0}$, $G^{\tforw{}}_{{[1,2^ih_0+h_0]}}$, and $G^{\tback{}}_{{[1,2^ih_0+h_0]}}$. Using an SSSP oracle (to be chosen later), compute potentials $\phi_i(v)=d_{H_i}(V,v)$ for each $v\in V$. If this step reports a negative cycle, declare that $G$ has a negative cycle.
        \item Let $\phi_i^{\max}=\max_{v\in V(H_i)}\phi_i(v)$ and $\phi_i^{\min}=\min_{v\in V(H_i)}\phi_i(v)$. Extend the potential $\phi_i$ from $H_i$ to the whole graph $H$ in the following way: for vertices $v$ in $G^{\tforw{}}_{{[2^ih_0+h_0+1,h+h_0]}}$, set their potential as $\phi_i(v)=\phi_i^{\min}$ and for vertices $v$ in $G^{\tback{}}_{{[2^ih_0+h_0+1,h+h_0]}}$, set their potential as $\phi_i(v)=\phi_i^{\max}$. Use this potential to reweight the whole graph.
    \end{enumerate}

    \paragraph{Analysis.} We claim that after the reweighting in the $i^{th}$ round, the only edges in $H_{\phi_i}$ which may be negative are $N_{i+1},N_{i+2},\ldots,N_{L}$. This can be proved by induction, and the non-negativeness of other edges in each step can be verified easily:

    \begin{itemize}
        \item The edges in the induced graph are non-negative because they use the potential computed on the induced graph;
        \item The edges from $G^{\tforw{}}_{{2^ih_0+h_0}}$ to $G^{\tforw{}}_{{2^ih_0+h_0+1}}$ are non-negative because these edges are non-negative after the previous round, and the potential in $G^{\tforw{}}_{{2^ih_0+h_0+1}}$ equals $\phi_{\min}$, which is not larger than potentials in $G^{\tforw{}}_{{2^ih_0+h_0}}$;
        \item The edges in $G^{\tforw{}}_{{[2^ih_0+h_0+1,h+h_0]}}$ are non-negative because vertices in this part have the same potential, and these edges are non-negative after the previous round;
        \item The edges from $G^{\tback{}}_{{2^ih_0+h_0+1}}$ to $G^{\tback{}}_{{2^ih_0+h_0}}$, and the edges in $G^{\tback{}}_{{[2^ih_0+h_0+1,h+h_0]}}$ can be proved similar to the argument above.
    \end{itemize}

    We now choose the algorithm for the SSSP instances at each step. When $i=0$, the instance contains $|N_0|=|S_0|=O((h/h_0)^{1/\alpha}\log n)$ negative edges and $O(mh_0)$ edges, and we use the SSSP oracle to solve this instance. When $i>1$, the instance contains $|N_i|=|S_i|=O(h/(2^ih_0)\log n)$ negative edges and $O(m2^ih_0)$ edges, and we use a trivial algorithm for it, which runs in time $\tilde O((m2^ih_0)\cdot (h/(2^ih_0)\log n))=\tilde O(mh)$. Therefore, the total running time (excluding the oracle call) is $\tilde O(mh)$.
\end{proof}

\subsection{Algorithm}

We now describe the recursive algorithm, with the idea of converting betweenness reductions and hop reducer constructions on lower levels to solving shortest path problems, and skipping lower levels of hop reducer constructions.

\paragraph{The algorithm} Let $m$ denote the number of edges, and $k$ denote the number of negative edges. If $k=m^{o(1)}$, we use the trivial $\tilde O(mk)$ algorithm directly. Otherwise, we repeat the following steps to neutralize negative edges until only $m^{o(1)}$ negative edges remain, and then solve them directly. Let $\alpha\approx0.694\ldots$ be the root of $2\alpha^3-3\alpha^2+4\alpha-2=0$, and let $h=k^{1-\alpha}$ and $h_0=k^{3-4\alpha}$ be parameters that will be used in the following steps. In each round, we do the following.

\begin{enumerate}
    \item Apply \Cref{lem:new_bw_to_sssp} with parameters $h,h_0$ and recursively use this algorithm as the oracle for the SSSP problem. This ensures that the $2h_0$-hop betweenness is $O(n/(h/h_0)^{1/\alpha})$, and for any $\eta\in[h_0,h]$, the $(\eta+h_0)$-hop betweenness is $O(n/(h/\eta))$.
    \item Apply \Cref{lem:apx-big-sandwich} with our $h_0$. We either neutralize $\sqrt{kh_0}$ edges directly, or find a $h_0$-hop negative sandwich $(s,U,t)$ of size $\sqrt{kh_0}$. In the first case, we can directly finish this round. The remaining steps aim to neutralize the sandwich in the second case. 
    \item Apply \Cref{lem:hjq26_bw_to_neg_reach} to convert betweenness guarantees to bounds on the negative reach of $U$. After this, we have that the $h_0$-hop negative reach of $U$ has at most $O(n/(h/h_0)^{1/\alpha})$ vertices, and for any $\eta\in[h_0,h]$, the $\eta$-hop negative reach has $O(n/(h/\eta))$ vertices.
    \item Let $G=G_U$ after the reweightings in the previous steps. For each $i\in[L]$, let $V_i$ denote the $h_02^i$-hop negative reach of $U$ and let $G_i$ denote the induced subgraph $G[V_i]$. By  \Cref{lem:apx-big-sandwich,lem:hjq26_bw_to_neg_reach}, $G_0$ has $m_0=O(m/(h/h_0)^{1/\alpha})$ edges and $G_i$ has $m_i=O(2^ih_0m/h)$ edges.
    \item Apply \Cref{cl:easy-hop-reducer} to the graph $G_0$, where we again recursively use this algorithm as the oracle for the SSSP instance. This provides an $|U|$-hop reducer for this $G_0$, which has size  $O(m/(h/h_0)^{1/\alpha})$. 
    \item Apply the remaining steps of the bootstrapping construction \Cref{lem:compute-distance-estimates,lem:construct-hop-reducer} iteratively as before, starting with the $|U|$-hop reducer for $G_0$ to get a $h$-hop reducer for $G_U$ of $\tilde{O}(m+|U|^2)$ edges, and neutralize the sandwich using the hop reducer.
\end{enumerate}


\paragraph{Analysis.} We first analyze the running time, ignoring the recursive calls. Each round neutralizes $\sqrt{h_0k}=k^{2-2\alpha}$ negative edges, so the algorithm will finish in $\tilde O(k^{2\alpha-1})$ rounds. Within each round, step 1 takes $O(mh\log{m})$ time, step 2 takes $O(mh_0\log{m})$ time, and step 3 takes $O(mh\log{m})$ time. Finally, step 6 takes $\tilde O(m\sqrt{h_0k}/h+(\sqrt{h_0k})^3/h_0)=\tilde O(mk^{1-\alpha}+k^{3-2\alpha})$ time. Therefore, the total running time without recursion is $\tilde O(k^2+mk^\alpha)$.

Now we consider the two recursive calls made by the algorithm. The first call in \Cref{lem:new_bw_to_sssp} is on an instance with $O(mh_0)=O(mk^{3-4\alpha})$ edges and $O((h/h_0)^{1/\alpha}\log n)=O(k^{(3\alpha-2)/\alpha}\log n)$ negative edges. The second call for hop reducers is on an instance with $O(m/(h/h_0)^{1/\alpha})=O(m/k^{{(3\alpha-2)/\alpha}})$ edges and $|U|=\sqrt{h_0k}=O(k^{2-2\alpha})$ negative edges. Letting $C\ge1$ be a large enough constant and $T(m,k)$ denote the running time of our algorithm on a graph with $m$ edges and $k$ negative edges, we have the following recursion. 
\[T(m,k)\leq Cm^{o(1)}(mk^\alpha+k^2)+Ck^{2\alpha-1}\log m\left(T(Cmk^{3-4\alpha},Ck^{(3\alpha-2)/\alpha}\log m)+T(Cm/k^{(3\alpha-2)/\alpha},Ck^{2-2\alpha})\right).\]

To analyze this recursion, we consider each node in the recursion tree, and use $m_p$ and $k_p$ to denote the number of edges and negative edges, respectively, at a node $p$. In recursions, we use them to refer to the parameters in the current recursion, and we use $m,k$ to refer to the parameters in the main graph. Moreover, we define the \emph{coefficient} $c_p$ on a node to be the product of all leading factors $Ck^{2\alpha-1}$ encountered from the root of the recursion tree to node $p$, which upper bounds the number of times this node is actually visited. Coefficients give a simple way to count the total running time: Each time a node $p$ is visited in the recursion, the non-recursive part in this node takes $Cm_p^{o(1)}(m_pk_p^\alpha+k_p^2)$ time, so summing the $c_p$ times this over all nodes in the recursion tree gives the total running time, as follows.

\begin{align}
T(m,k)&\leq \sum_p C_p\cdot Cm_p^{o(1)}(m_pk_p^\alpha+k_p^2).\label{eq:recursion_coef}
\end{align}

First, we analyze the possible range of these parameters in different depths of the recursion tree. Here, the root of the recursion tree has depth $0$.

\begin{fact}\label{lem:recursion_bound_para}
    For any $d$ and any node $p$ on the recursion tree with depth $d\leq \log\log m$, we have
    \begin{align*}
        m_p&\leq mk^{1-(2/3)^d}(3C\log m)^{3d},\\
        k_p&\leq k^{(2/3)^d}(3C\log m)^3.
    \end{align*}
    Moreover,
    \[m_p\geq m/\left(k^{(1-(2/3)^d)/2}(3C\log m)^{d/2}\right).\]
\end{fact}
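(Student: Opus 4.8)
The plan is to prove all three bounds by induction on the depth $d$, using the two recursive branches from the recursion defining $T(m,k)$. At depth $d=0$ the node is the root with $m_p=m$ and $k_p=k$, and all three inequalities hold trivially (the first two because the exponent $1-(2/3)^0=0$ and $(2/3)^0=1$ respectively, and the third because $k^{0}(3C\log m)^0=1$). For the inductive step, I would fix a node $p$ at depth $d$ satisfying the claimed bounds, and consider each of its at most two children $q$. The child obtained from the first recursive call (the one in \Cref{lem:new_bw_to_sssp}) has $m_q=Cm_pk_p^{3-4\alpha}$ and $k_q=Ck_p^{(3\alpha-2)/\alpha}\log m_p$, while the child from the hop-reducer call has $m_q=Cm_p/k_p^{(3\alpha-2)/\alpha}$ and $k_q=Ck_p^{2-2\alpha}$.

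The core of the argument is purely exponent-arithmetic: I want to check that for each child, both $3-4\alpha$ and $(3\alpha-2)/\alpha$ and $2-2\alpha$ are, for the value $\alpha\approx0.694$ (the root of $2\alpha^3-3\alpha^2+4\alpha-2=0$), bounded above by $2/3$, so that applying any of them to the exponent $(2/3)^d$ of $k_p$ produces something at most $(2/3)^{d+1}$. Concretely, one computes $3-4\alpha\approx0.224<2/3$, $2-2\alpha\approx0.612<2/3$, and $(3\alpha-2)/\alpha\approx0.118<2/3$; so $k_q\le C k_p^{2/3}\log m_p\le k^{(2/3)^{d+1}}(3C\log m)^{3}$ once we absorb the extra factor of $C\log m_p\le C\log m$ into the $(3C\log m)^3$ slack (which is why the polylog exponent is $3$ rather than $1$ — it has to survive $\log\log m$ levels of squaring-type growth; in fact the cube gives room to spare since at each level the polylog factor at most triples, matching the $3d$ in the $m_p$ bound). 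For $m_q$: in the first branch $m_q=Cm_pk_p^{3-4\alpha}\le C\cdot mk^{1-(2/3)^d}(3C\log m)^{3d}\cdot k^{(3-4\alpha)(2/3)^d}(3C\log m)^{3(3-4\alpha)}$, and since $1-(2/3)^d+(3-4\alpha)(2/3)^d=1-(4\alpha-2)(2/3)^d\le 1-(2/3)^{d+1}$ (using $4\alpha-2\approx0.776\ge 2/3$), this is at most $mk^{1-(2/3)^{d+1}}(3C\log m)^{3(d+1)}$; the second branch only makes $m_q$ smaller so the upper bound is immediate there, and the lower bound $m_q\ge m/(k^{(1-(2/3)^{d})/2}\cdots)$ comes from tracking that each hop-reducer branch divides $m_p$ by $k_p^{(3\alpha-2)/\alpha}\le k_p^{1/2}$ — here I need $(3\alpha-2)/\alpha\le 1/2$, i.e. $3\alpha-2\le\alpha/2$, i.e. $\alpha\le 4/5$, which holds — and this, combined with the inductive upper bound on $k_p$, telescopes to the stated geometric-series exponent $\sum_{i<d}(2/3)^i/2\le (1-(2/3)^d)\cdot\tfrac32\cdot\tfrac12$... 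I would state it cleanly as $\frac12\sum_{i=0}^{d-1}(2/3)^i \le \frac{1}{2}\cdot\frac{1}{1-2/3}=\frac32$, but matching the exact form in the statement requires bounding each exponent $(3\alpha-2)/\alpha\cdot(2/3)^i$ by $\frac12(2/3)^i$ and noting $\sum_{i=0}^{d-1}(2/3)^i = 3(1-(2/3)^d)$, which is not literally $(1-(2/3)^d)$, so I would double-check the intended reading of the lower-bound exponent and possibly reconcile a factor-of-$3$ discrepancy or use the slack in the polylog.

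The main obstacle I anticipate is precisely this bookkeeping of the polylogarithmic factors and the precise geometric-series exponents: making sure the $(3C\log m)^3$ and $(3C\log m)^{3d}$ slack is genuinely enough to absorb the repeated multiplication by $C\log m_p$ at every level (it is, because $m_p\le m^{O(1)}$ uniformly for $d\le\log\log m$, so $\log m_p=O(\log m)$, and there are only $\log\log m$ levels so the accumulated polylog is $(\log m)^{O(\log\log m)}=m^{o(1)}$ — but the \emph{stated} bound claims a clean $(3C\log m)^{3d}$, which forces me to verify that the per-level multiplicative polylog overhead is at most $(3C\log m)^3$, a constant-power bound, not merely $m^{o(1)}$). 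The exponent arithmetic on $\alpha$ is routine once the defining cubic is used to pin down $\alpha\in(0.69,0.70)$; the only real care is in confirming every inequality $3-4\alpha\le 2/3$, $2-2\alpha\le 2/3$, $(3\alpha-2)/\alpha\le 1/2$, and $4\alpha-2\ge 2/3$ simultaneously, which I would do by evaluating the cubic at $\alpha=0.69$ and $\alpha=0.70$ to sandwich the root and then checking each linear/rational inequality at the endpoints.
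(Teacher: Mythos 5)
Your induction scheme and the treatment of the two upper bounds match the paper's proof: same base case, same per-child recurrences, and the same idea of bounding the $\alpha$-dependent exponents by clean fractions so that the exponent of $k$ contracts by a factor $2/3$ per level (the paper bounds $3-4\alpha\le 1/3$ and $2-2\alpha,(3\alpha-2)/\alpha\le 2/3$, which makes the exponent $1-(2/3)^{d+1}$ come out exactly rather than via your $4\alpha-2\ge 2/3$ inequality, but these are equivalent). The polylog bookkeeping also works as you suspect, via the observation that $m_p\le m^3$ for $d\le\log\log m$, so each level costs at most one extra factor of $3C\log m$ against the three available per level.

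The genuine gap is in the lower bound on $m_p$, and you have correctly diagnosed the symptom but not supplied the cure. Bounding $(3\alpha-2)/\alpha$ by $1/2$ is too loose: it yields an accumulated exponent $\tfrac12\sum_{i=0}^{d-1}(2/3)^i=\tfrac32\bigl(1-(2/3)^d\bigr)$, which is three times the claimed exponent $\tfrac12\bigl(1-(2/3)^d\bigr)$, and this is a polynomial-in-$k$ discrepancy that the $(3C\log m)^{d/2}$ slack cannot absorb. The correct move — and what the paper does — is to use the tighter bound $(3\alpha-2)/\alpha\approx 0.118\le 1/6$, which you in fact already computed when checking the $k_q$ exponents. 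Then the hop-reducer child satisfies $m_q\ge m_p/k_p^{1/6}$, and combining with the inductive hypothesis $k_p\le k^{(2/3)^d}(3C\log m)^3$ gives an exponent of
\[
\frac{1-(2/3)^d}{2}+\frac{(2/3)^d}{6}=\frac12-\frac{(2/3)^d}{3}=\frac{1-(2/3)^{d+1}}{2},
\]
so the induction closes exactly with no factor-of-three to reconcile. (The betweenness-reduction child has $m_q\ge m_p$, so the lower bound is immediate there.)
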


\begin{proof}
    We proceed by induction on $d$. The case when $d=0$ trivially holds. Now suppose that $p$ is a node at depth $d$, and $q$ is a direct child of $p$. Then we have
    \begin{align*}
        m_q&\leq \max\{Cm_pk_p^{3-4\alpha},Cm_p/k_p^{(3\alpha-2)/\alpha}\}=Cm_pk_p^{3-4\alpha}\leq Cm_pk_p^{1/3},\\
        k_q&\leq \max\{Ck_p^{(3\alpha-2)/\alpha}\log m_p,Ck_p^{2-2\alpha}\}\leq Ck_p^{2-2\alpha}\log m_p\leq Ck_p^{2/3}\log m_p.
    \end{align*}
    Here we use $\alpha\approx 0.694\ldots$. Now, by the induction hypothesis, we have
    \begin{align*}
        m_q&\leq C\left(mk^{1-(2/3)^d}(3C\log m)^{3d}\right)\left(k^{(2/3)^d}(3C\log m)^3\right)^{1/3}\\
        &=Cmk^{1-(2/3)^d+(2/3)^d/3}(3C\log m)^{3d}(3C\log m)\\
        &\leq mk^{1-(2/3)^{d+1}}(3C\log m)^{3(d+1)}.
    \end{align*}
    When $d\leq \log\log m$, we have $m_p\leq mk(3C\log m)^{3d}\leq m^3$. Using this bound on $m_p$, we have
    \begin{align*}
        k_q&\leq C\left(k^{(2/3)^d}(3C\log m)^3\right)^{2/3}(3\log m)\\
        &=k^{(2/3)^{d+1}}(3C\log m)^3.
    \end{align*}
    For the lower bound, we can similarly have $m_q\geq m_p/k_p^{1/6}$, so
    \begin{align*}
        m_q&\geq \left(m/\left(k^{(1-(2/3)^d)/2}(3C\log m)^{d/2}\right)\right)/\left(k^{(2/3)^d}(3C\log m)^3\right)^{1/6}\\
        &=m/\left(k^{(1-(2/3)^d)/2+(2/3)^d/6}(3C\log m)^{d/2+1/2}\right)\\
        &=m/\left(k^{(1-(2/3)^{d+1})/2}(3C\log m)^{(d+1)/2}\right).\qedhere
    \end{align*}
\end{proof}

The above upper bounds and lower bounds on parameters also imply a bound on the recursion tree, as follows.

\begin{corollary}\label{cl:recursion_depth}
    Let $p$ be a node on the recursion tree with depth $d=\log \log m$, then 
    \begin{align*}
        m_p&\geq m/\left(k^{(1-(2/3)^d)/2}(3C\log m)^{d/2}\right)\geq m^{1/3},\\
        k_p&\leq k^{(2/3)^d}(3C\log m)^3\leq m^{o(1)}.
    \end{align*}
    Therefore, the recursion always terminates before $\log\log m$ depth.
\end{corollary}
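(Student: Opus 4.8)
The plan is to obtain this corollary as an immediate specialization of \Cref{lem:recursion_bound_para}. Since that fact holds for every depth $d \le \log\log m$, I would instantiate it at $d = \log\log m$; the two displayed ``first inequalities'' are then literally the bounds of \Cref{lem:recursion_bound_para}, and all that remains is elementary asymptotic estimation of the closed-form right-hand sides. No further structural understanding of the recursion is required.

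First I would bound $k_p$. Writing $\log$ for $\log_2$ and setting $\gamma := \log(3/2) \in (0,1)$, we have $(2/3)^{\log\log m} = 2^{-(\log\log m)\gamma} = (\log m)^{-\gamma}$. Since the root instance satisfies $k \le m$, this gives $k^{(2/3)^{d}} \le m^{(\log m)^{-\gamma}} = 2^{(\log m)^{1-\gamma}} = m^{o(1)}$, where we used $(\log m)^{1-\gamma} = o(\log m)$. The trailing factor $(3C\log m)^3$ is $\mathrm{polylog}(m) = m^{o(1)}$, so altogether $k_p \le k^{(2/3)^{d}}(3C\log m)^3 \le m^{o(1)}$.

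Next I would lower-bound $m_p$. The exponent $(1-(2/3)^{d})/2$ is strictly less than $1/2$, so $k^{(1-(2/3)^{d})/2} \le m^{1/2}$ using $k \le m$; and $(3C\log m)^{d/2}$ with $d = \log\log m$ has binary logarithm $\tfrac12(\log\log m)\log(3C\log m) = O((\log\log m)^2) = o(\log m)$, hence $(3C\log m)^{d/2} = m^{o(1)}$. Feeding these into the lower bound of \Cref{lem:recursion_bound_para} yields $m_p \ge m/(m^{1/2}\cdot m^{o(1)}) = m^{1/2 - o(1)}$, which is at least $m^{1/3}$ once $m$ exceeds an absolute constant (for bounded $m$ there is nothing to prove).

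Finally I would conclude termination. The algorithm does not recurse on an instance once that instance has at most $m^{o(1)}$ negative edges relative to its own edge count, i.e.\ once $k_p \le m_p^{o(1)}$. Combining the two estimates above, a node $p$ at depth $\log\log m$ has $k_p \le m^{o(1)}$ and $m_p \ge m^{1/3}$, so $m \le m_p^{3}$ and hence $k_p \le m^{o(1)} \le m_p^{o(1)}$; thus every such node is a leaf and the recursion tree has depth strictly below $\log\log m$. I do not expect a real obstacle here: the one point demanding care is the $o(1)$ bookkeeping — that an ``$m^{o(1)}$'' bound expressed relative to the root remains ``$m_p^{o(1)}$'' at node $p$ — and this is exactly secured by the estimate $m_p \ge m^{1/3}$.
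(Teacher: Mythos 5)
Your proposal is correct and matches the paper's (implicit) argument: the corollary is obtained exactly as you describe, by instantiating \Cref{lem:recursion_bound_para} at $d=\log\log m$ and noting that $(2/3)^{\log\log m}=(\log m)^{-\log_2(3/2)}$ makes $k^{(2/3)^d}=m^{o(1)}$ while the polylog factors are negligible, giving $m_p\ge m^{1/2-o(1)}\ge m^{1/3}$ and $k_p\le m^{o(1)}\le m_p^{o(1)}$. Your explicit bookkeeping that $m_p\ge m^{1/3}$ is what converts the root-relative $m^{o(1)}$ into the node-relative $m_p^{o(1)}$ termination condition is the only nontrivial point, and you handle it correctly.
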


Therefore, we only need to analyze the first $\log\log m$ levels in the recursion tree. In this range, we have $m_p\leq m^3$ for all nodes $p$ in the recursion tree, so we can safely replace all $\log m_p$ with $3\log m$ when analyzing the upper bound of running time. 

We now bound the total running time in \eqref{eq:recursion_coef}. There are two terms $\sum_p c_pm_pk_p^\alpha$ and $\sum_p c_pk_p^2$ in the running time, and we analyze them separately. For the first term, we have

\begin{fact}\label{lem:recursion_bound_mka}
    Let $p,q$ be two nodes on the recursion tree such that $q$ is a direct child of $p$. Then,
    \[c_qm_qk_q^\alpha\leq C^3(3\log m)^2\cdot c_pm_pk_p^\alpha.\]
\end{fact}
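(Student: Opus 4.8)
The plan is to track how the two product-type quantities $c_p m_p k_p^\alpha$ and (separately, in the companion fact) $c_p k_p^2$ grow as we pass from a node $p$ to one of its two children $q$. Since $c_q = C k_p^{2\alpha-1} \log m \cdot c_p$ by definition of the coefficient (the leading factor at node $p$ is $C k_p^{2\alpha-1}\log m$), it suffices to bound $m_q k_q^\alpha$ in terms of $k_p^{2\alpha-1} m_p k_p^\alpha$, i.e.\ to show $m_q k_q^\alpha \le (\text{poly-log}) \cdot k_p^{1-2\alpha} \cdot m_p k_p^\alpha = (\text{poly-log})\cdot m_p k_p^{1-\alpha}$. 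The point of the calculation is that the per-level blow-up is only polylogarithmic, so after the $O(\log\log m)$ levels guaranteed by \Cref{cl:recursion_depth} the accumulated factor is $m^{o(1)}$.

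First I would handle each of the two children separately, using the recursion $T(m,k) \le \ldots + Ck^{2\alpha-1}\log m\,(T(Cmk^{3-4\alpha}, Ck^{(3\alpha-2)/\alpha}\log m) + T(Cm/k^{(3\alpha-2)/\alpha}, Ck^{2-2\alpha}))$. For the first child, $m_q = Cm_pk_p^{3-4\alpha}$ and $k_q = Ck_p^{(3\alpha-2)/\alpha}\log m$, so
\[
c_q m_q k_q^\alpha = Ck_p^{2\alpha-1}\log m \cdot c_p \cdot Cm_pk_p^{3-4\alpha} \cdot (Ck_p^{(3\alpha-2)/\alpha}\log m)^\alpha = C^{2+\alpha}(\log m)^{1+\alpha}\, c_p m_p\, k_p^{2\alpha-1+3-4\alpha+3\alpha-2} = C^{2+\alpha}(\log m)^{1+\alpha}\, c_p m_p k_p^{\alpha}.
\]
Here the exponent on $k_p$ collapses to exactly $\alpha$ because $2\alpha-1+3-4\alpha+(3\alpha-2) = \alpha$, which is the arithmetic heart of the claim and uses only that the exponents were chosen as $h=k^{1-\alpha}$, $h_0=k^{3-4\alpha}$. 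For the second child, $m_q = Cm_p/k_p^{(3\alpha-2)/\alpha}$ and $k_q = Ck_p^{2-2\alpha}$, so
\[
c_q m_q k_q^\alpha = Ck_p^{2\alpha-1}\log m \cdot c_p \cdot Cm_p k_p^{-(3\alpha-2)/\alpha} \cdot (Ck_p^{2-2\alpha})^\alpha = C^{2+\alpha}\log m \, c_p m_p\, k_p^{(2\alpha-1) - (3\alpha-2)/\alpha + (2-2\alpha)\alpha}.
\]
A short computation shows $(2\alpha-1) - (3\alpha-2)/\alpha + (2\alpha - 2\alpha^2) - \alpha = \alpha$ is equivalent to $2\alpha^3 - 3\alpha^2 + 4\alpha - 2 = 0$ after multiplying through by $\alpha$, which is exactly the defining equation of $\alpha$; hence again the exponent on $k_p$ is $\alpha$. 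Taking the worse of the two constants and bounding $C^{2+\alpha}(\log m)^{1+\alpha} \le C^3(3\log m)^2$ (valid since $\alpha<1$ and $C,\log m \ge 1$) gives the stated inequality $c_q m_q k_q^\alpha \le C^3(3\log m)^2 \cdot c_p m_p k_p^\alpha$.

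The main obstacle is not the algebra per se but making sure the exponent bookkeeping is airtight: one must verify that the defining cubic $2\alpha^3-3\alpha^2+4\alpha-2=0$ really is what forces the second-child exponent to collapse, and that the first-child exponent collapses unconditionally. It is worth double-checking the rounding in $h=k^{1-\alpha}$, $h_0=k^{3-4\alpha}$ and the $|U|=\sqrt{h_0k}=k^{2-2\alpha}$ identity, and confirming that the floor/ceiling issues in $L=\lceil\log(h/h_0)\rceil$ and in the hop parameters do not perturb the polynomial exponents (they only contribute polylog factors, already absorbed). Once this fact and its $k_p^2$-analogue are in hand, the finish is routine: by \Cref{cl:recursion_depth} the recursion tree has depth at most $\log\log m$, so iterating the per-level bound $C^3(3\log m)^2$ at most $\log\log m$ times multiplies the root quantity $m k^\alpha$ by $(C^3(3\log m)^2)^{\log\log m} = m^{o(1)}$, and summing \eqref{eq:recursion_coef} over the (at most $(Ck^{2\alpha-1}\log m)^{\log\log m} = $ hmm, actually the number of nodes is counted by the coefficients, which is precisely why we phrased everything in terms of $c_p$) yields $T(m,k) \le m^{o(1)}(mk^\alpha + k^2)$, i.e.\ $O(mn^{0.695} + m^{0.766}n)$ after substituting $k \le n$ and $\alpha \approx 0.694\ldots$, with $k^2 \le k^{2} = (k^{2})$ contributing the $m^{0.766}n$ term via $k^2 \le k \cdot n$ and $2\alpha - 1 < 0.766$ hmm — here one plugs in the numeric value of $\alpha$ to read off the final exponents.
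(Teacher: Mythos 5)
Your proof is correct and follows essentially the same route as the paper's: substitute the child parameters from the recursion, observe that the first child's exponent on $k_p$ collapses to $\alpha$ by direct cancellation, and that the second child's collapses precisely because $\alpha$ satisfies $2\alpha^3-3\alpha^2+4\alpha-2=0$. The trailing remarks about summing over the recursion tree are outside the scope of this Fact (and somewhat garbled), but the proof of the stated inequality itself is sound and matches the paper's.
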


\begin{proof}
    Suppose that $q$ corresponds to the first term $T(Cmk^{3-4\alpha},Ck^{(3\alpha-2)/\alpha}\log m)$ in the recursion, then
    \begin{align*}
        c_q&\leq Ck_p^{2\alpha-1}(3\log m)c_p,\\
        m_q&\leq Cm_pk_p^{3-4\alpha},\\
        k_q&\leq Ck_p^{(3\alpha-2)/\alpha}(3\log m).
    \end{align*}

    Therefore,
    \begin{align*}
        c_qm_qk_q^\alpha&\leq Ck_p^{2\alpha-1}(3\log m)c_p\cdot Cm_pk_p^{3-4\alpha}\cdot \left(Ck_p^{(3\alpha-2)/\alpha}(3\log m)\right)^\alpha\\
        &=C^{2+\alpha}(3\log m)^{1+\alpha} c_pm_pk_p^\alpha\\
        &\leq C^3(3\log m)^2\cdot c_pm_pk_p^\alpha
    \end{align*}
    because $\alpha<1$.

    Now, suppose that $q$ corresponds to the second term, then
    \begin{align*}
        c_q&\leq Ck_p^{2\alpha-1}(3\log m)c_p,\\
        m_q&\leq Cm_p/k_p^{(3\alpha-2)/\alpha},\\
        k_q&\leq Ck_p^{2-2\alpha}.
    \end{align*}
    Therefore,
    \begin{align*}
        c_qm_qk_q^\alpha&\leq Ck_p^{2\alpha-1}(3\log m)c_p\cdot Cm_p/k_p^{(3\alpha-2)/\alpha}\cdot (Ck_p^{2-2\alpha})^\alpha\\
        &=C^{2+\alpha}c_pm_pk_p^{-2\alpha^2+4\alpha-4+2/\alpha}\\
        &=C^{2+\alpha}c_pm_pk_p^{\alpha}\\
        &\leq C^3(3\log m)^2\cdot c_pm_pk_p^\alpha,
    \end{align*}
    where the third line uses the fact that $\alpha$ is the root of equation $2\alpha^3-3\alpha^2+4\alpha-2=0$.
\end{proof}

The above fact shows that $c_qm_qk_q^\alpha$ can be bounded by a function of the depth $d$ of the node inductively. Because we have $d\leq \log\log m$ in \Cref{cl:recursion_depth}, we can later show that $c_qm_qk_q^\alpha$ never exceeds $m^{1+o(1)}k^\alpha$ in the recursion tree, which implies that the running time incurred by the $mk^\alpha$ term is $O(m^{1+o(1)}k^\alpha)$ because there are at most $O(2^d)=O(\log n)$ nodes in the recursion tree.

\begin{corollary}\label{cl:recursion_bound_mka}
    Let $p$ be a node on the recursion tree of depth $d$ (recall that the root of the tree has depth $0$). Then,
    \[c_pm_pk_p^\alpha\leq C^{3d}(3\log m)^{2d}mk^\alpha.\]
\end{corollary}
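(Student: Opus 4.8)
The plan is a straightforward induction on the depth $d$, using \Cref{lem:recursion_bound_mka} as the inductive step. The quantity $c_pm_pk_p^\alpha$ is exactly the object that \Cref{lem:recursion_bound_mka} controls along parent-child edges of the recursion tree, so telescoping that bound from the root down to $p$ is all that is needed.

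First I would handle the base case $d=0$. The root of the recursion tree is the main instance itself, so $m_p=m$ and $k_p=k$, and since the coefficient $c_p$ is defined as the product of all leading factors $Ck^{2\alpha-1}$ encountered on the path from the root to $p$, an empty product gives $c_p=1$. Hence $c_pm_pk_p^\alpha=mk^\alpha=C^{0}(3\log m)^{0}mk^\alpha$, as claimed.

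For the inductive step, suppose the bound holds for all nodes of depth $d-1$, and let $p$ be a node of depth $d$ with parent $q$ (of depth $d-1$). Applying \Cref{lem:recursion_bound_mka} to the pair $(q,p)$ gives
\[
c_pm_pk_p^\alpha \le C^3(3\log m)^2\cdot c_qm_qk_q^\alpha \le C^3(3\log m)^2\cdot C^{3(d-1)}(3\log m)^{2(d-1)}mk^\alpha = C^{3d}(3\log m)^{2d}mk^\alpha,
\]
where the second inequality is the induction hypothesis. This closes the induction.

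There is essentially no obstacle here beyond bookkeeping: the only subtlety is being careful that the coefficient at the root is $1$ (so that the base case has no spurious $C$ or $\log m$ factors) and that $\log m_q$ has already been replaced by $3\log m$ uniformly, which is justified by \Cref{cl:recursion_depth} since we only ever examine nodes of depth at most $\log\log m$, for which $m_p\le m^3$. Given those conventions, \Cref{lem:recursion_bound_mka} delivers the per-level blowup factor directly and the corollary follows immediately.
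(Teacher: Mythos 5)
Your proof is correct and matches the paper's (implicit) argument: the corollary is stated as an immediate consequence of \Cref{lem:recursion_bound_mka}, obtained by exactly the induction on depth you describe, with $c_p=1$, $m_p=m$, $k_p=k$ at the root. Nothing further is needed.
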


We proceed to analyze the $k^2$ term in the recursion, similar to the analysis above.

\begin{fact}\label{lem:recursion_bound_k2}
    Let $p,q$ be two nodes on the recursion tree such that $q$ is a direct child of $p$. Then,
    \[c_qk_q^2\leq C^3(3\log m)^3 c_pk_p^2.\]
\end{fact}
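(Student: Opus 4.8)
The plan is to mirror the proof of \Cref{lem:recursion_bound_mka}, carrying out the two-case analysis exactly as before but now tracking the quantity $c_pk_p^2$ instead of $c_pm_pk_p^\alpha$. Recall that a direct child $q$ of $p$ corresponds to one of the two recursive calls, and in both cases the coefficient picks up a factor $c_q \le Ck_p^{2\alpha-1}(3\log m)c_p$. So the whole argument comes down to bounding $k_q^2$ against $c_pk_p^2 / c_q$, i.e., showing $k_q^2 \cdot Ck_p^{2\alpha-1}(3\log m) \le C^3(3\log m)^3 k_p^2$, or equivalently $k_q^2 \le C^2(3\log m)^2 k_p^{3-2\alpha}$.

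First I would handle the call $T(Cmk^{3-4\alpha}, Ck^{(3\alpha-2)/\alpha}\log m)$, where $k_q \le Ck_p^{(3\alpha-2)/\alpha}(3\log m)$, so $k_q^2 \le C^2(3\log m)^2 k_p^{2(3\alpha-2)/\alpha}$. It thus suffices to check the exponent inequality $2(3\alpha-2)/\alpha \le 3 - 2\alpha$; clearing denominators, this is $6\alpha - 4 \le 3\alpha - 2\alpha^2$, i.e. $2\alpha^2 + 3\alpha - 4 \le 0$, which holds for $\alpha \approx 0.694$ since the positive root of $2\alpha^2+3\alpha-4$ is about $0.85$. (Here one uses $k_p \ge 1$ so that a smaller exponent only helps.) Combining with $c_q \le Ck_p^{2\alpha-1}(3\log m)c_p$ gives $c_qk_q^2 \le C^3(3\log m)^3 c_pk_p^2$, as claimed.

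Next I would handle the call $T(Cm/k^{(3\alpha-2)/\alpha}, Ck^{2-2\alpha})$, where $k_q \le Ck_p^{2-2\alpha}$, so $k_q^2 \le C^2 k_p^{4-4\alpha}$. The needed exponent bound is $4 - 4\alpha \le 3 - 2\alpha$, i.e. $1 \le 2\alpha$, which holds since $\alpha \approx 0.694 > 1/2$. Again multiplying by $c_q \le Ck_p^{2\alpha-1}(3\log m)c_p$ yields $c_qk_q^2 \le C^3(3\log m)^2 c_pk_p^2 \le C^3(3\log m)^3 c_pk_p^2$, finishing both cases.

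The only subtlety — and the step I would be most careful about — is making sure the exponent arithmetic genuinely closes for the specific root $\alpha$ of $2\alpha^3 - 3\alpha^2 + 4\alpha - 2 = 0$ rather than requiring the defining cubic (as the $mk^\alpha$ analysis did): in both cases here we only need the crude bounds $\alpha > 1/2$ and $\alpha$ below roughly $0.85$, both of which hold comfortably, so unlike \Cref{lem:recursion_bound_mka} the cubic is not invoked. As with that lemma, this per-step factor then bootstraps (exactly as in \Cref{cl:recursion_bound_mka}) to $c_pk_p^2 \le C^{3d}(3\log m)^{3d} k^2$ at depth $d$, and since $d \le \log\log m$ and the tree has $O(\log n)$ nodes, the total contribution of the $k^2$ term to \eqref{eq:recursion_coef} is $m^{o(1)}k^2$, which together with \Cref{cl:recursion_bound_mka} gives the final running time $O(m^{1+o(1)}k^\alpha + m^{o(1)}k^2) = O(mn^{0.695} + m^{0.766}n)$ after substituting $k \le n$ and $\alpha \approx 0.694$.
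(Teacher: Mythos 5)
Your proof is correct and matches the paper's argument essentially verbatim: both cases combine $c_q\leq Ck_p^{2\alpha-1}(3\log m)c_p$ with the respective bounds on $k_q$, and the exponent checks reduce to $2\alpha^2+3\alpha-4\leq 0$ and $\alpha\geq 1/2$, exactly as in the paper. Your additional observation that the defining cubic for $\alpha$ is not needed here (unlike in \Cref{lem:recursion_bound_mka}) is accurate.
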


\begin{proof}
    Suppose that $q$ corresponds to the first term in the recursion, then
    \begin{align*}
        c_q&\leq Ck_p^{2\alpha-1}(3\log m)c_p,\\
        k_q&\leq Ck_p^{(3\alpha-2)/\alpha}(3\log m).
    \end{align*}
    Therefore,
    \begin{align*}
        c_qk_q^2&\leq Ck_p^{2\alpha-1}(3\log m)c_p\cdot \left(Ck_p^{(3\alpha-2)/\alpha}(3\log m)\right)^2\\
        &=C^3(3\log m)^3c_pk_p^{2\alpha-1+2(3\alpha-2)/\alpha}\\
        &\leq C^3(3\log m)^3c_pk_p^2.
    \end{align*}

    Now, suppose that $q$ corresponds to the second term, then
    \begin{align*}
        c_q&\leq Ck_p^{2\alpha-1}(3\log m)c_p,\\
        k_q&\leq Ck_p^{2-2\alpha}.
    \end{align*}
    Therefore,
    \begin{align*}
        c_qk_q^2&\leq Ck_p^{2\alpha-1}(3\log m)c_p\cdot (Ck_p^{2-2\alpha})^2\\
        &=C^3(3\log m)c_pk_p^{3-2\alpha}\\
        &\leq C^3(3\log m)^3\cdot c_pk_p^2.\qedhere
    \end{align*}
\end{proof}

\begin{corollary}\label{cl:recursion_bound_k2}
    Let $p$ be a node on the recursion tree of depth $d$. Then,
    \[c_pk_p^2\leq C^{3d}(3\log m)^{3d}k^2.\]
\end{corollary}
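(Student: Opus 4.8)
The plan is to prove \Cref{cl:recursion_bound_k2} by induction on the depth $d$ of the node $p$, with \Cref{lem:recursion_bound_k2} supplying the inductive step. This is the exact $k^2$-analogue of \Cref{cl:recursion_bound_mka} relative to \Cref{lem:recursion_bound_mka}, so the argument is routine and mirrors that one.

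First I would handle the base case $d=0$: the unique depth-$0$ node is the root of the recursion tree. There, no leading factor $Ck^{2\alpha-1}$ has yet been collected, so the coefficient is $c_p=1$, and the parameters are the original ones, in particular $k_p=k$. Hence $c_pk_p^2=k^2=C^{0}(3\log m)^{0}k^2$, which matches the claimed bound.

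For the inductive step, let $p$ be a node at depth $d\ge1$ and let $q$ be its parent, which sits at depth $d-1$. The induction hypothesis gives $c_qk_q^2\le C^{3(d-1)}(3\log m)^{3(d-1)}k^2$, and \Cref{lem:recursion_bound_k2} applied to the pair $(q,p)$ gives $c_pk_p^2\le C^3(3\log m)^3\,c_qk_q^2$. Multiplying these two bounds yields $c_pk_p^2\le C^3(3\log m)^3\cdot C^{3(d-1)}(3\log m)^{3(d-1)}k^2=C^{3d}(3\log m)^{3d}k^2$, completing the induction.

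There is no genuine obstacle here: the content lies entirely in \Cref{lem:recursion_bound_k2}, and this corollary is simply its telescoped form along a root-to-node path. The one point worth noting — already dealt with in the discussion preceding the corollary — is that \Cref{lem:recursion_bound_k2} is stated with $3\log m$ rather than $\log m_p$; this substitution is legitimate because, by \Cref{cl:recursion_depth}, the recursion terminates by depth $\log\log m$, within which \Cref{lem:recursion_bound_para} guarantees $m_p\le m^3$, and hence $\log m_p\le3\log m$, for every node on the tree.
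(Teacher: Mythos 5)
Your proof is correct and matches the paper's intent: the corollary is stated as an immediate consequence of \Cref{lem:recursion_bound_k2}, obtained by telescoping that per-level bound along the root-to-node path exactly as you do, with the $\log m_p\le 3\log m$ substitution justified by \Cref{cl:recursion_depth} and \Cref{lem:recursion_bound_para}. No issues.
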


Now, we can bound the total running time using all the analyses above.

\begin{lemma}
    In the above recursion, we have $T(m,k)=O(m^{o(1)}(mk^\alpha+k^2))$.
\end{lemma}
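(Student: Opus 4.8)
The plan is to evaluate the bound \eqref{eq:recursion_coef} node by node over the recursion tree, using the per-level estimates already established in \Cref{cl:recursion_bound_mka,cl:recursion_bound_k2} together with the depth bound from \Cref{cl:recursion_depth}. Write the right-hand side of \eqref{eq:recursion_coef} as $C\sum_p c_p m_p^{o(1)}\,m_pk_p^\alpha \;+\; C\sum_p c_p m_p^{o(1)}\,k_p^2$, and bound each of the two sums separately by $m^{o(1)}(mk^\alpha+k^2)$; summing the two gives the claim. All the ingredients for this are already in place: \Cref{cl:recursion_bound_mka} controls $c_pm_pk_p^\alpha$ at depth $d$, \Cref{cl:recursion_bound_k2} controls $c_pk_p^2$, and \Cref{cl:recursion_depth} bounds the depth, so the proof is essentially bookkeeping on top of these.

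First I would observe that the recursion tree is binary (each invocation spawns exactly two recursive calls) and, by \Cref{cl:recursion_depth}, has depth at most $\log\log m$; hence it contains at most $2^{\log\log m+1}=O(\log m)$ nodes in total. Next, for every node $p$ at depth $d\le\log\log m$ I have $m^{1/3}\le m_p\le m^3$ by \Cref{lem:recursion_bound_para} and \Cref{cl:recursion_depth}; since $m_p$ is polynomially related to $m$ uniformly over all nodes, the subpolynomial factor attached to each node satisfies $m_p^{o(1)}=m^{o(1)}$. Combining this with \Cref{cl:recursion_bound_mka} gives, for a node at depth $d$, a contribution of at most $C\cdot m^{o(1)}\cdot C^{3d}(3\log m)^{2d}\,mk^\alpha$ to the first sum, and likewise, via \Cref{cl:recursion_bound_k2}, at most $C\cdot m^{o(1)}\cdot C^{3d}(3\log m)^{3d}\,k^2$ to the second.

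The crux is then to check that the accumulated leading factor is absorbed into $m^{o(1)}$. Since $d\le\log\log m$ and $C^3(3\log m)^3\ge 1$, we have $C^{3d}(3\log m)^{3d}\le\big(C^3(3\log m)^3\big)^{\log\log m}=\exp\!\big(O((\log\log m)^2)\big)=m^{o(1)}$, because $(\log\log m)^2=o(\log m)$. Therefore each of the $O(\log m)$ nodes contributes at most $m^{o(1)}(mk^\alpha+k^2)$, and multiplying by the node count $O(\log m)$ still leaves a total of $m^{o(1)}(mk^\alpha+k^2)$. It remains only to fold in the base case of the recursion: when $k=m^{o(1)}$ the algorithm runs the trivial $\tilde O(mk)$ routine, and $mk=mk^\alpha\cdot k^{1-\alpha}\le m^{o(1)}\,mk^\alpha$, so the leaves also obey the claimed bound.

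The step I expect to need the most care is not conceptually deep but is the quantitative heart of the argument: verifying that both the coefficient blowup $\big(C^3(3\log m)^3\big)^{\Theta(\log\log m)}$ and the per-node $m_p^{o(1)}$ terms stay genuinely subpolynomial in $m$. This is precisely why capping the recursion depth at $\log\log m$ — rather than the naive $\Theta(\log m)$ — in \Cref{lem:recursion_bound_para} and \Cref{cl:recursion_depth} was necessary: at depth $\Theta(\log m)$ these same factors would grow to $m^{\Omega(1)}$ and the bound would fail. So the remaining work is to assemble those pieces carefully and confirm the arithmetic.
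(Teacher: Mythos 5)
Your proposal is correct and follows essentially the same route as the paper's proof: apply \eqref{eq:recursion_coef}, absorb $m_p^{o(1)}$ into $m^{o(1)}$ via $m_p\le m^3$, bound each node's contribution by $C^{3\log\log m}(3\log m)^{3\log\log m}(mk^\alpha+k^2)=m^{o(1)}(mk^\alpha+k^2)$ using \Cref{cl:recursion_bound_mka,cl:recursion_bound_k2}, and multiply by the $2^{\log\log m+1}$ node count. Your explicit verification that $(\log\log m)^2=o(\log m)$ and your handling of the base case are fine additions but do not change the argument.
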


\begin{proof}
    By \Cref{cl:recursion_depth}, we know that any vertex $p$ is of depth at most $\log\log m$. Using this with the above facts, we have
    \begin{align*}
        T(m,k)&\leq C\sum_p c_pm_p^{o(1)}(m_pk_p^\alpha+k_p^2)\tag{By \Cref{eq:recursion_coef}}\\
        &\leq Cm^{o(1)}\sum_p c_p(m_pk_p^\alpha+k_p^2)\tag{\Cref{lem:recursion_bound_para}, $m_p\leq m^3$}\\
        &\leq Cm^{o(1)}\sum_p C^{3\log\log m}(3\log m)^{3\log\log m}(mk^\alpha+k^2) \tag{\Cref{cl:recursion_bound_mka}, \Cref{cl:recursion_bound_k2}}\\
        &\leq m^{o(1)}\sum_p(mk^\alpha+k^2)\\
        &\leq m^{o(1)}2^{\log\log m+1}(mk^\alpha+k^2)\tag{The number of nodes is at most $2^{d+1}$}\\
        &=m^{o(1)}(mk^\alpha+k^2).\qedhere
    \end{align*}

\end{proof}

With this set of parameters, we get an algorithm with running time $m^{o(1)}(mk^\alpha+k^2)$, which is $m^{1+o(1)}k^\alpha$ when the graph is not too sparse ($m\geq k^{2-\alpha}$ in particular). When the graph is sparse, we can slightly change our algorithm to achieve a better complexity for sparse graphs. In this case, the $k^2$ term would become a bottleneck in this algorithm, which comes from the $|U|^3$ term in constructing the hop reducer. To reduce this term, in addition to choosing a different set of parameters, one change is that after finding a negative sandwich of size $\sqrt {h_0k}$, we only neutralize part of these negative edges to reduce the $|U|^3$ term.

\paragraph{Algorithm for sparse graphs where $m\leq k^{2-\alpha}$.} We make the following changes to the algorithm when the input graph has $m\leq k^{2-\alpha}$. First, we choose $h=m^{(1-\alpha)/(2-\alpha)}$ and $h_0=m^{(3-4\alpha)/(2-\alpha)}$. Moreover, when the algorithm finds a negative sandwich with $\sqrt{h_0k}\geq m^{(2-2\alpha)/(2-\alpha)}$ negative edges, we only pick a subset of negative edges with size $|U|=m^{(2-2\alpha)/(2-\alpha)}$ for further neutralization.

\paragraph{Analysis for sparse graphs.} We first analyze the root case of the recursion. In each round, the algorithm neutralizes $m^{(2-2\alpha)/(2-\alpha)}$ edges with the following work.
\begin{itemize}
    \item For the non-recursive part, \Cref{lem:new_bw_to_sssp} takes $\tilde O(mh)$ time (without considering the recursion call), and \Cref{lem:apx-big-sandwich,lem:hjq26_bw_to_neg_reach} take $\tilde O(mh_0)$ time. The remaining steps of hop reducer construction take $\tilde O(m|U|/h+|U|^3/h_0)$ time. Again, $h_0\leq h$, and the total running time without recursion is
    \[\tilde O\left(m^{1+(1-\alpha)/(2-\alpha)}+(m^{(2-2\alpha)/(2-\alpha)})^3/m^{(3-4\alpha)/(2-\alpha)}\right)=\tilde O(m^{(3-2\alpha)/(2-\alpha)}).\]
    \item The recursion call in betweenness reduction (\Cref{lem:new_bw_to_sssp}) uses a graph with $O(mh_0)=O(m^{1+(3-4\alpha)/(2-\alpha)})$ edges and $O((h/h_0)^{1/\alpha}\log n)=O(m^{(3\alpha-2)/(\alpha(2-\alpha))}\log n)$ negative edges. For this instance, the value of $k^{2-\alpha}$ is at most
    \[\left(m^{(3\alpha-2)/(\alpha(2-\alpha))}\log n\right)^{2-\alpha}=\tilde O(m^{(3\alpha-2)/\alpha})<O(m^{1+(3-4\alpha)/(2-\alpha)})\]
    because $(3\alpha-2)/\alpha\approx 0.11\ldots$ and $1+(3-4\alpha)/(2-\alpha)\approx 1.17\ldots$. So this instance is in the dense case $m>k^{2-\alpha}$. Solving this by the original algorithm takes
    \[\tilde O\left(m^{o(1)}m^{1+(3-4\alpha)/(2-\alpha)}\cdot m^{\left((3\alpha-2)/(\alpha(2-\alpha))\right)\cdot\alpha}\right)=\tilde O(m^{(3-2\alpha)/(2-\alpha)+o(1)})\]
    time.
    \item The recursion call in hop reducers uses an instance with $O(m/(h/h_0)^{1/\alpha})=O(m^{1-(3\alpha-2)/(\alpha(2-\alpha))})$ edges and $|U|=m^{(2-2\alpha)/(2-\alpha)}$ negative edges. For this instance,
    \[|U|^{2-\alpha}=m^{2-2\alpha}<m^{0.62}<m^{0.9}<m^{1-(3\alpha-2)/(\alpha(2-\alpha))},\]
    so it also fits in the dense case, and the running time is
    \begin{align*}
        &~\tilde O\left(m^{o(1)}m^{1-(3\alpha-2)/(\alpha(2-\alpha))}\cdot m^{\alpha(2-2\alpha)/(2-\alpha)}\right)\\
        =&~\tilde O\left(m^{o(1)}m^{\frac{(2-\alpha-\alpha^2)+\alpha^2(2-2\alpha)}{\alpha(2-\alpha)}}\right)\\
        =&~\tilde O\left(m^{o(1)}m^{\frac{2-\alpha+\alpha^2-2\alpha^3}{\alpha(2-\alpha)}}\right)\\
        =&~\tilde O\left(m^{o(1)}m^{\frac{3\alpha-2\alpha^2}{\alpha(2-\alpha)}}\right)\\
        =&~\tilde O(m^{(3-2\alpha)/(2-\alpha)+o(1)}),
    \end{align*}
    where the fourth line uses the fact that $2\alpha^3-3\alpha^2+4\alpha-2=0$.
\end{itemize}

Therefore, the algorithm always neutralizes $m^{(2-2\alpha)/(2-\alpha)}$ negative edges in $O(m^{(3-2\alpha)/(2-\alpha)+o(1)})$ time, so the total running time is $O(m^{1/(2-\alpha)+o(1)}k)$. 

Combining these two cases, we get

\begin{theorem}
    Let $\alpha$ be the only real root of the equation $2\alpha^3-3\alpha^2+4\alpha-2=0$. There is an algorithm that solves the real-weight SSSP problem with $m$ edges and $k$ negative edges in $O(m^{o(1)}(mk^{\alpha}+m^{1/(2-\alpha)}k))$ time with high probability.
\end{theorem}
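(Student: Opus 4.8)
The plan is to derive the stated bound by combining the two algorithms developed above: the recursive algorithm with parameters $h=k^{1-\alpha}$, $h_0=k^{3-4\alpha}$, which is tuned for the \emph{dense} regime $m\ge k^{2-\alpha}$, and its variant with $h=m^{(1-\alpha)/(2-\alpha)}$, $h_0=m^{(3-4\alpha)/(2-\alpha)}$ and capped sandwich size $|U|=m^{(2-2\alpha)/(2-\alpha)}$, which handles the \emph{sparse} regime $m<k^{2-\alpha}$. The key observation is that the two terms $mk^\alpha$ and $m^{1/(2-\alpha)}k$ in the target bound each dominate in exactly one of these regimes, so it suffices to run whichever algorithm matches the input.

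For the dense regime, I would invoke the analysis already in place: the nonrecursive cost of each round is $\tilde O(mk^{1-\alpha}+k^{3-2\alpha})$ and each round neutralizes $\sqrt{h_0k}=k^{2-2\alpha}$ negative edges, for $\tilde O(mk^\alpha+k^2)$ nonrecursive work in total; the two recursive SSSP subcalls per round (from \Cref{lem:new_bw_to_sssp} and \Cref{cl:easy-hop-reducer}) are on instances whose parameters are exactly those tracked by \Cref{lem:recursion_bound_para}. Unrolling the recursion through the coefficient decomposition \eqref{eq:recursion_coef}, the multiplicative bounds \Cref{cl:recursion_bound_mka} and \Cref{cl:recursion_bound_k2}, and the depth bound of \Cref{cl:recursion_depth} gives total time $m^{o(1)}(mk^\alpha+k^2)$. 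When $m\ge k^{2-\alpha}$ we have $k^2=k^{2-\alpha}\cdot k^\alpha\le mk^\alpha$, so this is $m^{1+o(1)}k^\alpha$, within the claimed bound (and negative cycles are detected along the way by the SSSP oracle calls).

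For the sparse regime, I would check that with the modified parameters the nonrecursive per-round cost is $\tilde O(mh+|U|^3/h_0)=\tilde O(m^{(3-2\alpha)/(2-\alpha)})$, and — the point that makes the variant work — that both recursive SSSP subinstances now satisfy $m'\ge (k')^{2-\alpha}$, i.e.\ are themselves dense, so the dense-case bound applies to them and evaluates to $O(m^{(3-2\alpha)/(2-\alpha)+o(1)})$ after substituting the subinstance sizes. Thus each round costs $m^{(3-2\alpha)/(2-\alpha)+o(1)}$ and neutralizes $m^{(2-2\alpha)/(2-\alpha)}$ edges; since $(3-2\alpha)-(2-2\alpha)=1$, neutralizing all $k$ negative edges costs $m^{1/(2-\alpha)+o(1)}k$.

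Combining the two cases finishes the proof: any input lies in exactly one regime, and its running time there is $O(m^{o(1)}(mk^\alpha+m^{1/(2-\alpha)}k))$. I do not expect the combination itself to be the obstacle — it is a clean dichotomy once the two regime-specific bounds hold. The delicate part, carried out in the analysis above rather than here, is the choice of $\alpha$ as the real root of $2\alpha^3-3\alpha^2+4\alpha-2=0$: this is precisely what makes the recursion balanced, i.e.\ what collapses the exponent $-2\alpha^2+4\alpha-4+2/\alpha$ appearing in \Cref{cl:recursion_bound_mka} back to $\alpha$ and keeps the $k^2$ term under control in \Cref{cl:recursion_bound_k2}, and then one must confirm that the $O(\log\log m)$ recursion depth renders all accumulated $C^{O(\log\log m)}$ and $(\log m)^{O(\log\log m)}$ factors absorbable into $m^{o(1)}$.
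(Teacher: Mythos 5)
Your proposal is correct and follows the same route as the paper: the theorem is obtained exactly by running the dense-parameter recursion (yielding $m^{o(1)}(mk^\alpha+k^2)$, which collapses to $m^{1+o(1)}k^\alpha$ when $m\ge k^{2-\alpha}$) or the sparse-parameter variant with the capped sandwich (each round costing $m^{(3-2\alpha)/(2-\alpha)+o(1)}$ while neutralizing $m^{(2-2\alpha)/(2-\alpha)}$ edges, with both recursive subinstances verified to fall in the dense regime), and taking the maximum of the two regime bounds. The supporting computations you cite, including the role of the root of $2\alpha^3-3\alpha^2+4\alpha-2=0$ and the absorption of $C^{O(\log\log m)}(\log m)^{O(\log\log m)}$ into $m^{o(1)}$, match the paper's analysis.
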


\Cref{thm:main-sparse} follows from $\alpha\approx0.694\ldots$ and the trivial bound $k\le n$.

\end{document}